\newtheorem{thm}{Theorem} 
\newtheorem{lem}{Lemma}
\newtheorem{rem}{Remark}
\newtheorem{prob}{Problem}
\newtheorem{defn}{Definition}
\newtheorem{examp}{Example}
\newcommand{\pfbox}{}
\newcommand{\Q}{\mathbb{Q}}
\newcommand{\R}{\mathbb{R}}
\newcommand{\Set}[1]{\{ #1 \}}
\newcommand{\Range}[2]{#1 ,\dots, #2}
\newcommand{\RangeSet}[2]{\Set{ \Range{#1}{#2}}}
\newcommand{\MPInfLeq}[1]{\ensuremath{\textrm{MeanPayoffInf}^{\leq}(#1)}}
\newcommand{\AttrOne}[1]{\ensuremath{Attr_1(#1)}}
\newcommand{\HQ}{\mathit{H10(\Q)}}
\newcommand{\Heading}[1]{\smallskip\noindent{\bf{#1}}}
\newcommand{\BeginProof}{\vspace{-0.25cm}\begin{proof}}
\newcommand{\Comment}[1]{}
\newcommand{\Appendix}[1]{}
\newcommand{\MAX}{\max}
\newcommand{\MIN}{\min}
\newcommand{\SUM}{\operatorname{sum}}
\newcommand{\OP}{\operatorname{op}}
\newcommand{\Avg}{\mathit{Avg}}
\newcommand{\LimAvg}{\mathit{LimAvg}}
\newcommand{\LimInfAvg}{\mathit{LimInfAvg}}
\newcommand{\LimSupAvg}{\mathit{LimSupAvg}}
\newcommand{\VEC}[1]{\ensuremath{\overline{#1}}}
\newcommand{\Nat}{\ensuremath{\mathbb{N}}}
\newcommand{\CONV}{\mathit{CONVEX}}
\newcommand{\Val}{\ensuremath{\mathit{Val}}}
\newcommand{\ValP}{\ensuremath{\mathit{Val}}}
\newcommand{\FM}{\mathcal{FM}}
\newcommand{\FiniteOne}{\mathcal{FM}_1}
\newcommand{\Strategies}{\mathcal{S}}
\newcommand{\StrategiesTwo}{\Strategies_2}
\newcommand{\ValueRegion}{\mathit{VR}}
\newcommand{\Obj}{\mathit{obj}}
\newcommand{\VectorSetV}{\mathcal{V}}
\newcommand{\Simplex}{\mathcal{S}}
\newcommand{\IntSimplex}{\mathcal{SI}}
\newcommand{\RatSimplex}{\mathcal{QSI}}
\newcommand{\WinOne}{\ensuremath{\mathit{Win}_1}}
\newcommand{\MaxFreeConst}{\operatorname{MFC}}
\begin{document}

\pagestyle{plain}

\newcommand{\AppendixString}{\cite{mpexpressionCorr}}
\newcommand{\AppendixContent}[1]{}

\title{Finite-Memory Strategy Synthesis for Robust Multidimensional Mean-Payoff Objectives}
\author{Yaron Velner}
\institute{The Blavatnik School of Computer Science, Tel Aviv University, Israel}
\titlerunning{Strategy Synthesis for Mean-Payoff Expressions}
\maketitle

\begin{abstract}
Two-player games on graphs provide the mathematical foundation for the study of reactive systems.
In the quantitative framework, an objective assigns a value to every play, and the goal of player 1 is to minimize the value of the objective.
In this framework, there are two relevant synthesis problems to consider: the quantitative analysis problem is to compute the minimal (or infimum) value that player 1 can assure, and the boolean analysis problem asks whether player 1 can assure that the value of the objective is at most $\nu$ (for a given threshold $\nu$).
Mean-payoff expression games are played on a multidimensional weighted graph.
An atomic mean-payoff expression objective is the mean-payoff value (the long-run average weight) of a certain dimension, 
and the class of mean-payoff expressions is the closure of atomic mean-payoff expressions under the algebraic operations of $\MAX,\MIN$, numerical complement and $\SUM$.
In this work, we study for the first time the strategy synthesis problems for games with robust quantitative objectives, namely, games with mean-payoff expression objectives.
While in general, optimal strategies for these games require infinite-memory, in synthesis we are typically interested in the construction 
of a finite-state system.
Hence, we consider games in which player 1 is restricted to finite-memory strategies, and our main contribution is as follows.
We prove that for mean-payoff expressions, the quantitative analysis problem is computable, and the boolean analysis problem is inter-reducible with Hilbert's tenth problem over rationals --- a fundamental long-standing open problem in computer science and mathematics.
\end{abstract}

\sloppy
\section{Introduction}
In the classical framework of boolean formal verification, a program may only violate or satisfy a given specification, and in the framework of synthesis, the task is to automatically construct \textbf{a} program that satisfies the specification.
The boolean framework does not discriminate between programs that satisfy a given specification, and consequently, it may produce (or verify) unreasonable implementations.
 
In the recent years, there is an emerging line of research that aims to measure the quality of a program with quantitative metrics, e.g.,~\cite{Weighted-Automaton,Alur:2009:ODM:1532848.1532880,Closure,BokerCHK11,DrosteR06,BrazdilCKN12}.
The quantitative verification problem asks how well a program satisfies a given specification, and the synthesis task is to construct the optimal program with respect to a specification.

Quantitative verification and synthesis problems are modelled by infinite-duration games over weighted graphs.
In these games, the set of vertices is partitioned into player-1 and player-2  vertices;
initially, a pebble is placed on an initial vertex, and in every round, the player who owns the vertex that the pebble resides in, advances the pebble to an adjacent vertex.
This process is repeated forever and gives rise to a \emph{play} that induces an infinite sequence of weights (or weight vectors), and a quantitative objective
assigns a value to every play (or equivalently to every infinite sequence of weights).

The classical work on these games only considered games with single objectives, such as minimizing the long run average weight, or minimize the sum of weights. In order to have robust quantitative specifications, it is necessary to investigate games on graphs with multiple (and possibly conflicting) objectives.
Typically, multiple objectives are modeled by multidimensional weight functions (e.g.,~\cite{BrazdilBCFK11,BrazdilCKN12,ChatterjeeRR12,Alur:2009:ODM:1532848.1532880}), and the outcome of a \emph{play} is a vector of values.
In the boolean setting, the goal of player 1 is to satisfy a boolean condition on the values (with respect to a threshold vector).
For example, player 1 needs to assure that the average response time ($\mathit{rt}$) of an arbiter is at most $2.4$ and that the average energy consumption ($\mathit{ec}$) is below $7$.
In the quantitative setting, the outcome of a play is a unique (real) value, and the goal of player 1 is to minimize the value of the play.
A multiple objective specification is modelled by algebraic operations on single objectives.
In the example above, we define the quantitative objective $\MAX(\mathit{rt} - 2.4,\mathit{ec} - 7)$, and a non-positive value to the quantitative objective implies that the boolean objective is satisfied.
In the general case, an objective is determined either by the projection of the weight function to one dimension, or it is formed by algebraic operation on two (or more) objectives.
In the literature, the common and natural algebraic operations are $\MIN$,$\MAX$, \emph{numerical complement} (multiplication by $-1$) and $\SUM$.
We note that when the goal is to minimize the value of the objective, then the first three operations generalize the boolean disjunction, conjunction and negation.
A class of quantitative objectives is \emph{robust} if it is closed under the four algebraic operations.
So far, the only known class of robust quantitative objectives that has an effective algorithm for the model checking problem (that is, for solving one-player games) is the class of \emph{mean-payoff expressions}~\cite{mean-payoff-Automaton-Expressions}, which is the closure of one-dimensional mean-payoff (long-run average of the weights) objectives to the four algebraic operation.
For example, for an infinite sequence of vectors $a = a_1,a_2,\dots \in (\R^3)^\omega$ the objective \[E(a) = \LimAvg_1(a) + \MIN(\LimAvg_2(a),-\LimAvg_3(a))\] is a mean-payoff expression (where $\LimAvg_i$ is the long-run average of dimension $i$) and $E((1,2,6)^\omega) = 1 + \min(2,-6) = -5$.

In the quantitative setting, there are two relevant synthesis problems: (i)~the \emph{quantitative analysis problem} is to compute the optimal (infimum) value that a player-1 strategy can assure; and (ii)~the \emph{boolean analysis problem} is to determine whether player 1 can assure a value of at most $\nu$ to the objective (for a given $\nu$).
From the perspective of synthesis, these problems are most important when player 1 is restricted to finite-memory strategies (in Example~\ref{examp:InfiniteMem} we show that infinite-memory strategies may yield a better value for player~1, hence the restriction to finite-memory strategies may affect the analysis of the synthesis problem).

For mean-payoff expressions, optimal finite-memory strategies may not always exist. Hence, the quantitative analysis problem is to compute the greatest lower bound on the minimal value that player 1 can assure.
We note that since all model checking problems (i.e., the quantitative generalization of the emptiness, universality and language inclusion) are decidable for mean-payoff expression, then the computability of the quantitative analysis will give us an effective algorithm to synthesize $\epsilon$-optimal finite-memory strategies, and if the boolean analysis problem were decidable, then we would have an algorithm that construct the corresponding player-1 strategy.

\Heading{Our contribution.}
In this paper, we consider for the first time the synthesis problem for a robust class of quantitative objectives.
We prove computability for the quantitative synthesis problem, and we show that the boolean analysis problem is inter-reducible with Hilbert's tenth problem over rationals ($\HQ$), which is a fundamental long-standing open question in computer science and mathematics.
We show that the problem is inter-reducible with $\HQ$ even when both players are restricted to finite-memory strategies, and we show that there is a fragment of mean-payoff expressions that is $\HQ$-hard when one or both players are restricted to finite-memory strategies, but decidable when both players may use infinite-memory strategies.

Our main technical contribution is the introduction of a general scheme that lifts a one-player game solution (equivalently, a model checking algorithm) to a solution for a two-player game (when player 1 is restricted to finite-memory strategies).
The scheme works for a large class of quantitative objectives that have certain properties (which we define in Subsection~\ref{subsec:ConvObj}).

\Heading{Related work.}
The class of mean-payoff expressions was introduced in~\cite{mean-payoff-Automaton-Expressions}, and the decidability of the model checking problems (which correspond to one-player games) was established. A simpler and more efficient algorithm for mean-payoff expression games was given in~\cite{YaronMPExp}.
Mean-payoff games on multidimensional graphs were first studied in~\cite{ChatterjeeDHR10}.
In these games the objective of player 1 was to satisfy a conjunctive condition (in the terms of this paper, the objective was a maximum of multiple one-dimensional objectives).
In~\cite{VelnerR11}, decidability for an objective that is formed by the $\MIN$ and $\MAX$ operators was established.
But the proof can not be extended to include the numerical complement operator, and it does not scale for the case that player 1 is restricted to finite-memory strategies.

\Heading{Structure of the paper.}
In the next section we give the basic definitions for quantitative games and we 
define a class of quantitative objectives that have special properties.
In Sections~\ref{sect:SPANGame} and~\ref{sect:Generic} we give a generic solution for the synthesis problem of quantitative objectives that satisfies the special properties (and an overview of the solution is given in Subsection~\ref{sect:Overview}).
In Section~\ref{sect:OnePlayerMPExpProp} we show that mean-payoff expressions satisfy the special properties and the main results of the paper follow.
Some of the proofs were omitted from the main paper, and the full proofs are given in the appendix.
\section{Games with Quantitative Objectives}\label{sect:Prem}
In this section we give the formal definitions for quantitative objectives and games on graphs with quantitative objectives (Subsection~\ref{subsec:Defs}).
We define four special properties of quantitative objectives (Subsection~\ref{subsec:ConvObj}), and we give an informal overview for the two-player game solution of games with quantitative objectives that satisfy the special properties (Subsection~\ref{sect:Overview}). 
\subsection{Quantitative games on graphs}\label{subsec:Defs}

\Heading{Quantitative objectives.}
In this paper we consider directed finite graphs with a $k$-dimensional weight function that assigns a vector of rationals to each edge.
A \emph{quantitative objective} is a function that assigns a value to every infinite sequence of weight vectors. Formally an objective is a function $\Obj:(\R^k)^\omega \to \R$.
A simple example for quantitative objective is to consider a one-dimensional weight function and an objective that assigns to each infinite path the maximal weight that occurs infinitely often in the path.
An objective $\Obj : (\R^k)^\omega\to\R$ is called \emph{prefix-independent} if for every $a_1\in (\R^k)^*$ and $a_2\in (\R^k)^\omega$ it holds that $\Obj(a_1 a_2) = \Obj(a_2)$.

\Heading{Algebraic operations over quantitative objectives.}
The quantitative counterpart of the boolean operations of disjunction, conjunction and negation are the $\MAX$,$\MIN$ and numerical complement operators (numerical complement is multiplication by $-1$). The $\SUM$ operator, which does not have a boolean counterpart, is also very natural operator in the framework of quantitative objectives.
For two quantitative objectives $\Obj_1$ and $\Obj_2$, the quantitative objective $\OP(\Obj_1,\Obj_2)$ (for $\OP\in\{\MIN,\MAX,\SUM\}$) assigns to every infinite sequence of weights $\ell\in (\R^k)^\omega$ the value $\OP(\Obj_1(\ell),\Obj_2(\ell))$, and the numerical complement of $\Obj_1$ assigns the value of $-\Obj_1(\ell)$.

\Heading{Robust quantitative objectives}
A class of quantitative objectives $\mathcal{O}$ is \emph{robust} if it is closed under the algebraic operations of $\MIN,\MAX,\SUM$ and numerical complement.
Formally, a class of objectives $\mathcal{O}$ is robust, if for every two objectives $\Obj_1,\Obj_2\in\mathcal{O}$ the four quantitative objectives $\Obj_{\MIN},\Obj_{\MAX},\Obj_{\SUM}$ and $\Obj^-$ are in $\mathcal{O}$ (such that for every $\ell\in(\R^k)^\omega$ and $\OP\in\{\MIN,\MAX,\SUM\}$:
$\Obj_{\OP}(\ell) = \OP(\Obj_1(\ell),\Obj_2(\ell))$ and $\Obj^-(\ell) = -\Obj_1(\ell)$).
We note that in~\cite{Weighted-Automaton,mean-payoff-Automaton-Expressions}, Chatterjee et al. gave a broader definition for robustness of quantitative objectives,
but since the concrete objectives that we consider in this paper are robust according to both definitions, we prefer to use the narrower (and simpler) notion of robustness.

\Heading{Games on graph.}
A game graph is a directed graph $G=(V=V_1\cup V_2,v_0, E, w:E\to\Q^k)$, where
$V$ is the set of vertices;
$V_i$ is the set of \emph{player i} vertices;
$v_0$ is the \emph{initial vertex};
$E\subseteq V\times V$ is the set of edges; and
$w:E\to \Q^k$ is a multidimensional weight function (e.g., see Figure~\ref{fig:GameGraph}).
\begin{figure}
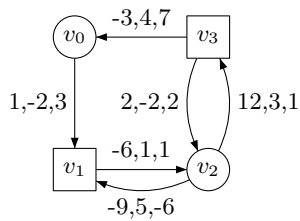

\begin{center}
    \unitlength=2pt
    \begin{gpicture}(0, 35)
    \thinlines

    \node[Nw=8.0,Nh=8.0](A1)(0,0){$v_2$}
    \node[Nw=8.0,Nh=8.0,Nmr=0](A2)(-25,0){$v_1$}
    \node[Nw=8.0,Nh=8.0,Nmr=0](A3)(0,25){$v_3$}
    \node[Nw=8.0,Nh=8.0](A4)(-25,25){$v_0$}

    \drawedge[curvedepth=-4,ELside=r](A1,A3){12,3,1}
    \drawedge[curvedepth=-4,ELside=r](A3,A1){2,-2,2}
    \drawedge[ELside=r](A3,A4){-3,4,7}
    \drawedge[ELside=r](A4,A2){1,-2,3}
    \drawedge(A2,A1){-6,1,1}
    \drawedge[curvedepth=4,ELside=l](A1,A2){-9,5,-6}
    \end{gpicture}
\end{center}
	\caption{Game graph $G$. Player 1 owns the round vertices.}
	\label{fig:GameGraph}
\end{figure}
A \emph{play} is an infinite sequence of rounds.
In the first round a pebble is placed on the initial vertex and in every round the player who owns the vertex of the pebble advances the pebble to an adjacent vertex.
Hence, a play corresponds to an infinite path in the graph that begins in $v_0$ and the labeling of the play is the corresponding infinite sequence of weight vectors.
A game graph is a \emph{one-player game} if only one of the players has a vertex with out-degree more than one.

\Heading{Strategies.}
A \emph{strategy} is a recipe for determining the next move based on the \emph{history} of the play.
A \emph{player-$i$ strategy} is a function $\sigma : V^* V_i \to V$, such that for every finite path $\pi$ that ends in vertex $v$ we have $(v,\sigma(\pi))\in E$.
A strategy has \emph{finite memory} if it can be implemented by a Moore machine $(M,m_0,\alpha_n,\alpha_u)$,
where $M$ is a finite set of memory states, $m_0$ is the initial memory state,
$\alpha_u : M \times V \to M$ is the update function, and $\alpha_n : M \times V_i \to V$ is the next vertex function.
If a play prefix is in state $v_i$ and memory state $M$, then the strategy choice for the next vertex is $v = \alpha_n(M,v_i)$ and the memory is updated to $\alpha_u(M,v_i)$.
A strategy is \emph{memoryless} if it depends only in the current location of the pebble. Formally a player-$i$ memoryless strategy is a function $\sigma: V_i \to V$.
(We note that a memoryless strategy is also a finite-memory strategy.)

We denote the set of all player-$i$ strategies by $\Strategies_i$ and we
denote the set of all player-$i$ finite memory strategies by $\FM_i$.

\Heading{Game graph according to a finite-memory strategy.}
For a game graph $G=(V=V_1\cup V_2,E,w)$ and a player-1 finite-memory strategy $\sigma = (M,m_0,\alpha_u,\alpha_n)$, we denote the \emph{game graph according to strategy} $\sigma$ by $G^\sigma$, and we define it as follows:
\begin{itemize}
\item The vertices of $G^\sigma$ are the Cartesian product $V\times M$; player-$i$ vertices are $V_i \times M$; and
the initial vertex of $G^\sigma$ is $(v_0,m_0)$.
\item For a player-1 vertex $(v,m)$, the only successor vertex is $(\alpha_n(v,m),\alpha_u(v,m))$.
For a player-2 vertex $(v,m)$ the set of successor vertices is
$\{(u,n) \mid (v,u)\in E\mbox{ and } \alpha_u(v,m)=n\}$.
\end{itemize}
We note that the out-degree of all player-1 vertices is one, and thus $G^\sigma$ is a \emph{one-player game graph}.
The main property of graphs according to a finite-memory strategy is that every infinite path in $G^\sigma$ corresponds to a play that is consistent with $\sigma$ in $G$.
A game graph according to a memoryless strategy is a special case of games according to finite-memory strategies.
In this case, the game graph is obtained from $G$ by removing all the player's out-edges that are not chosen by the memoryless strategy.
\begin{examp}
Consider the game graph from Figure~\ref{fig:GameGraph} and consider a player-1 strategy $\sigma$ that in vertex $v_2$ moves the pebble to $v_3$ if $v_2$ was visited an odd number of times and otherwise it moves the pebble to $v_1$.
For example, in the first time that $v_2$ is visited, player 1 moves the pebble to $v_3$, in the second time he will move the pebble to $v_1$, in the third time to $v_3$ and so on.
The strategy $\sigma$ requires one bit of memory (i.e., $M = \{0,1\}$), and $G^\sigma$ is illustrated in Figure~\ref{Fig:GameAccordingTosigma} (the labeling of the nodes represents the memory state).
In $G^\sigma$ all the choices are done by player 2.
\begin{figure}
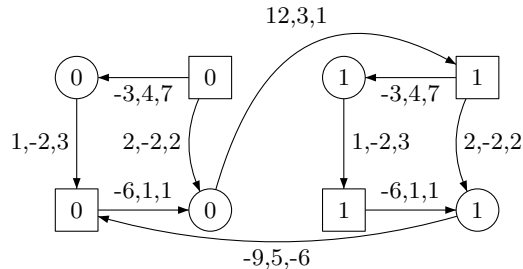

    \unitlength=2pt
\begin{center}
    \begin{gpicture}(0, 35)
    \thinlines

    \node[Nw=8.0,Nh=8.0](A1)(0,0){$0$}
    \node[Nw=8.0,Nh=8.0,Nmr=0](A2)(-25,0){$0$}
    \node[Nw=8.0,Nh=8.0,Nmr=0](A3)(0,25){$0$}
    \node[Nw=8.0,Nh=8.0](A4)(-25,25){$0$}

    \node[Nw=8.0,Nh=8.0](mA1)(50,0){$1$}
    \node[Nw=8.0,Nh=8.0,Nmr=0](mA2)(25,0){$1$}
    \node[Nw=8.0,Nh=8.0,Nmr=0](mA3)(50,25){$1$}
    \node[Nw=8.0,Nh=8.0](mA4)(25,25){$1$}


\drawqbedge(A1,12.5,50,mA3){12,3,1}

    \drawedge[curvedepth=-4,ELside=r](A3,A1){2,-2,2}
    \drawedge(A3,A4){-3,4,7}
    \drawedge[ELside=r](A4,A2){1,-2,3}
    \drawedge(A2,A1){-6,1,1}

    \drawedge[curvedepth=-4,ELside=l](mA3,mA1){2,-2,2}
    \drawedge(mA3,mA4){-3,4,7}
    \drawedge(mA4,mA2){1,-2,3}
    \drawedge(mA2,mA1){-6,1,1}
    \drawedge[curvedepth=6,ELside=l](mA1,A2){-9,5,-6}
    \end{gpicture}
\end{center}
	\caption{Game graph $G$ according to strategy $\sigma$.}
	\label{Fig:GameAccordingTosigma}
\end{figure}
\end{examp}

\Heading{Values of strategies and games.}
A tuple $(\sigma,\tau)$ of player-1 and player-2 strategies (respectively) uniquely defines a play $\pi_{\sigma,\tau}$ in a given graph.
For a game graph $G$, a quantitative objective $\Obj$ and a tuple of strategies $(\sigma,\tau)$ we denote $\Val_{\sigma,\tau} = \Obj(\pi_{\sigma,\tau})$.
In this paper, we assume that player 1 wishes to minimize\footnote{Since we consider robust objectives, then the same results hold when player-1 goal is to maximize the value of the objective.} the value of the quantitative objective, and we define the 
\emph{value} of a player-1 strategy $\sigma$ to be $\ValP_{\sigma} = \sup_{\tau\in \StrategiesTwo} \Val_{\sigma,\tau}$. (Intuitively, this is the maximal value that player 2 can achieve against strategy $\sigma$.)
The \emph{minimal value of a game} is defined as $\inf_{\sigma\in\FiniteOne} \ValP_\sigma$.
Intuitively, the minimal value of a game is the minimal value that player 1 can ensure by a finite-memory strategy.

\Heading{Quantitative and boolean analysis}
For a given game graph, objective, and a rational threshold $r\in\Q$:
The \emph{quantitative analysis} task is to compute the minimal value of the game that can be enforced by a finite-memory strategy.
The \emph{boolean analysis} task is to decide whether there is a player-1 finite-memory strategy $\sigma$ for which $\ValP_{\sigma} \leq r$.
That is, whether player 1 can assure a value of at most $r$ for the objective.

\Heading{Boolean games and winning strategies.}
A \emph{boolean game} is a game on graph equipped with a winning condition $W\subseteq E^\omega$ (that is, a winning condition is a set of infinite paths).
A play $\pi$ is \emph{winning} for player 1 if $\pi\in W$, and a strategy $\sigma$ is a player-1 \emph{winning strategy} if for every player-2 strategy $\tau$ we have $\pi_{\sigma,\tau}\in W$.
For a quantitative objective $\Obj$ and a threshold $\nu\in\R$ we denote by $(\Obj,\nu)$ the boolean winning condition $\{\pi\in E^\omega\mid \Obj(\pi) \leq \nu\}$.

\subsection{One-player game solution}\label{subsec:ConvObj}
In this paper, we consider objectives that have special properties for their one-player game solution and we present a general scheme that lifts a one-player game solution into a two-player game solution.
To formally define the special properties of the solutions, we give the next definitions.

\Heading{Definitions and notions for weighted graphs.}
Let $G=(V,E,w:E\to\Q^k)$ be a $k$-dimensional weighted graph.
The \emph{weight vector} of a finite path $\pi = e_1\dots e_n$ is $w(\pi) =\sum_{i=1}^n w(e_i)$ and the \emph{average weight} of a path is $\Avg(\pi)=\frac{w(\pi)}{|\pi|}$.
For a set of finite paths $\Pi =\{\pi_1,\dots,\pi_n\}$ we denote $\Avg(\Pi)=\{\Avg(\pi_1),\dots,\Avg(\pi_n)\}$.
We denote the set of simple cycles in $G$ by $C(G)$, and we abbreviate $\Avg(G)=\Avg(C(G))$.
For a finite set of vectors $V = \{v_1,v_2,\dots,v_n\}\in \R^k$, we denote $\CONV(V) = \{\sum_{i=1}^n \alpha_i v_i \mid \sum_{i=1}^n \alpha_i = 1 \mbox{ and } \alpha_1,\dots,\alpha_n \geq 0\}$ (see Figure~\ref{fig:Convex}).
\begin{figure}
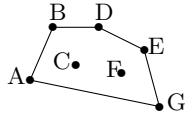

\setlength\unitlength{0.05cm}
\begin{center} 
\begin{gpicture}(0, 35)


\drawpolygon(6,6)(12,20)(24,20)(36,14)(40,-1)

\put(6,6){\circle*{2}}
\put(0,5){\small{A}}

\put(12,20){\circle*{2}}
\put(11,22){\small{B}}

\put(18,10){\circle*{2}}
\put(12,9){\small{C}}

\put(24,20){\circle*{2}}
\put(23,22){\small{D}}

\put(36,14){\circle*{2}}
\put(37,13){\small{E}}

\put(30,8){\circle*{2}}
\put(26,7){\small{F}}

\put(40,-1){\circle*{2}}
\put(42,-2){\small{G}}


\end{gpicture}
\end{center} 
  \caption{$\CONV(A,B,C,D,E,F,G)$ is the polygon $ABDEG$.}
\label{fig:Convex}
\end{figure}
We abbreviate $\CONV(G) = \CONV(\Avg(G))$.
An $m$-dimensional \emph{simplex} is the set $\Simplex(m) = \{(x_1,\dots,x_m)\in\R^m \mid
x_i \geq 0 \wedge \sum_{i=1}^m x_i = 1\}$.
The \emph{simplex interior} is $\IntSimplex(m) = \{(x_1,\dots,x_m)\in\R^m \mid
x_i > 0 \wedge \sum_{i=1}^m x_i = 1\}$, 
and the \emph{rational interior of a simplex} is
$\RatSimplex(m) = \IntSimplex(m)\cap \Q^m$.
When $m$ is clear from the context we abbreviate $\Simplex(m), \IntSimplex(m)$ and $\RatSimplex(m)$ with $\Simplex, \IntSimplex$ and $\RatSimplex$ (respectively).

\Heading{Solution for one-player game with special properties.}
A solution for a one-player quantitative game is a function $f$ that assigns to every one-player game graph $G$ the maximal value that the player can achieve in graph $G$.
We note that for prefix-independent objectives, a function $f'$ that assigns to each \emph{strongly connected} graph its maximal value uniquely defines the solution function $f$ (since the value of $f$ is the maximal value of $f'$ over all the strongly connected component of the graph).
In this paper, we will consider only prefix-independent objective, hence, we define the special properties of a solution for strongly connected graphs.
The special properties that we consider are:
\begin{enumerate}
\item \emph{First-order definable.}
For every $n\in\Nat$ there is a first-order formula $\zeta_n(\VEC{x_1},\dots,\VEC{x_n},y)$ over $\langle \R,=,<,+,\times\rangle$ such that for every graph $G$ with $\Avg(G) = \{\VEC{x_1},\dots,\VEC{x_n}\}$ we have
$f(G) = y$ if and only if $\zeta_n(\VEC{x_1},\dots,\VEC{x_n},y)$ holds.
In addition we require $\zeta_n$ to be computable from $n$.
In the sequel, we write $y=\zeta_n(\VEC{x_1},\dots,\VEC{x_n})$ instead of $\zeta_n(\VEC{x_1},\dots,\VEC{x_n},y)$.
\item \emph{Monotone in $\CONV(G)$.} If for two (strongly connected) graphs $H$ and $G$ we have $\CONV(G) \subseteq \CONV(H)$, then $f(G) \leq f(H)$.
As a consequence, we get that for a $k$-dimensional objective, $f$ is a function from $(\R^k)^*$ to $\R$, and that $f(G) \equiv g(\CONV(G))$ for some function $g:(\R^k)^*\to \R$. Hence, by abusing the notation, we sometime write $f(\CONV(G))$ instead of $f(G)$. 
\item \emph{Continuous function.} $f$ is a continuous function.
Formally, if $f$ is the solution for a $k$-dimensional objective, then for every $n\in\Nat$ the function $\zeta_n : (\R^k)^n \to \R$ is a continuous function, i.e., for every $\epsilon > 0$ there exists $\delta > 0$ such that for every two vectors $A,B\in(\R^k)^n$ with $|A - B| < \delta$ it holds that $|\zeta_n(A)-\zeta_n(B)| < \epsilon$.
\end{enumerate}
We will show computability for the quantitative analysis problem for objectives that have a solution that satisfies the above three properties.
We also consider a fourth special property, and we will show decidability for the boolean analysis problem for objectives that have a solution that satisfies all four properties.
\begin{enumerate}
\setcounter{enumi}{3}
\item \emph{Fourth property.}
A solution $f=\{\zeta_1,\dots,\zeta_n,\zeta_{n+1},\dots\}$ satisfies the fourth property if the next problem is decidable (for the set $\{\zeta_1,\dots,\zeta_n,\zeta_{n+1},\dots\}$):
\begin{itemize}
\item Input: a threshold $\nu\in\Q$ and a set of $n$ matrices $A_1,\dots,A_n$, where $A_i$ is a $k\times m_i$ matrix for some $m_i \in\Nat$.
\item Task: determine if the inequality $\zeta_n(A_1 \cdot x_1,\dots,A_n \cdot x_m)\leq \nu$ subject to $x_i \in \RatSimplex(m_i)$ is feasible (note that the result of the multiplication $A_i\cdot x_i$ is a vector of size $k$).
\end{itemize}
\end{enumerate}
In the next example we demonstrate the above properties.
\begin{examp}\label{examp:Props}
Consider the two-dimensional one-player solution function
$f(G) = \max_{(x,y)\in \CONV(G)} [\max(x+y+10,-x+y+10,\min(-x+y-10,x+y-10))]$.
We demonstrate that $f$ is first-order definable by giving the explicit formula for $\zeta_2$, that is, the formula for a (strongly connected) graph with only two simple cycles with average weights $(x_1,y_1)$ and $(x_2,y_2)$.
\begin{quote}
$\zeta_2(x_1,y_1,x_2,y_2,r) \equiv$\\
$\forall \alpha_1,\alpha_2,x,y (\alpha_1 \geq 0 \wedge \alpha_2 \geq 0 \wedge (\alpha_1 + \alpha_2 = 1)\wedge (x = \alpha_1 x_1 + \alpha_2 x_2 ) \wedge (y = \alpha_1 y_1 + \alpha_2 y_2))\shortrightarrow$\\
$r\geq \max(x+y+10,-x+y+10,\min(-x+y-10,x+y-10))$\\
$\wedge$\\
$\exists \alpha_1,\alpha_2,x,y (\alpha_1 \geq 0 \wedge \alpha_2 \geq 0 \wedge (\alpha_1 + \alpha_2 = 1)\wedge (x = \alpha_1 x_1 + \alpha_2 x_2 ) \wedge (y = \alpha_1 y_1 + \alpha_2 y_2))\wedge$\\
$r = \max(x+y+10,-x+y+10,\min(-x+y-10,x+y-10))$
\end{quote}
(Technically $\max$ and $\min$ are not in $\langle \R,<,+,\times \rangle$, but they are trivially definable in this vocabulary.)
Clearly if for two graphs we have $\CONV(G_1) \subseteq \CONV(G_2)$, then $f(G_1)\leq f(G_2)$ (hence, $f$ is monotone), and $\zeta_2$ is obviously a continuous function (and in general $\zeta_n$ is also continuous).
Hence, $f$ satisfies Properties~1-3.
In Figure~\ref{fig:ExampleOfProp} we illustrate the geometrical interpretation of Property~2, namely, the fact that the value of $f$ depends only in $\CONV(G)$.
The equality $\max(x+y+10,-x+y+10,\min(-x+y-10,x+y-10)) = 0$ is represented by the thick line.
The points that are connected by the dotted line represent the weights of the simple cycles of a strongly connected graph $G_1$ and the points that are connected by the dashed line represent the weights of the simple cycles of a strongly connected graph $G_2$.
The reader can see that $\CONV(G_1)$ is below the thick line and $\CONV(G_2)$ intersects with it.
Hence, $f(G_1) < 0$ and $f(G_2) > 0$.

\begin{figure}
\setlength\unitlength{0.1cm}
\begin{center} 
\begin{gpicture}(0,35)(0,-10)


\psline[linewidth=0.1pt,linestyle=dashed]{o}(-0.5,-0.8)(0.25,0.5)
\psline[linewidth=0.1pt,linestyle=dashed]{o}(0.25,0.5)(1,-0.5)
\psline[linewidth=0.1pt,linestyle=dashed]{o}(1,-0.5)(1.5,0.3)
\psline[linewidth=0.1pt,linestyle=dashed]{o}(1.5,0.3)(2,0.5)
\psline[linewidth=0.1pt,linestyle=dashed]{o}(2,0.5)(2,-0.8)
\psline[linewidth=0.1pt,linestyle=dashed]{o}(2,-0.8)(-0.5,-0.8)


\psline[linewidth=0.4pt,linestyle=dotted]{*}(-2.75,-1.0)(-0.3,0.2)
\psline[linewidth=0.4pt,linestyle=dotted]{*}(-0.3,0.2)(0.3,0.2)
\psline[linewidth=0.4pt,linestyle=dotted]{*}(0.3,0.2)(2.75,-1)
\psline[linewidth=0.4pt,linestyle=dotted]{*}(2.75,-1)(-2.75,-1.0)

\psline[linewidth=0.2pt]{<->}(0,-1.2)(0,1.7)
\psline[linewidth=0.2pt]{<->}(-2.5,0)(2.5,0)

\psline[linewidth=2pt,linestyle=dotted](-2,1.0)(-2.4,1.4)
\psline[linewidth=2pt](-2,1.0)(-1,0)

\psline[linewidth=2pt](-1,0)(0,1)

\psline[linewidth=2pt](0,1)(1,0)

\psline[linewidth=2pt](1,0)(2,1.0)
\psline[linewidth=2pt,linestyle=dotted](2,1.0)(2.4,1.4)

\end{gpicture}
\end{center} 
  \caption{}
\label{fig:ExampleOfProp}
\end{figure}
\end{examp}
\subsection{Informal overview of the solution for two-player games}\label{sect:Overview}
The key notion for our solution is games according to strategies.
When a one-player solution $f$ is given, the boolean analysis problem amounts to determining whether there is a finite-memory strategy $\sigma$ such that for every strongly-connected component (SCC) $S \in G^\sigma$ it holds that $f(\CONV(S)) \leq \nu$.
In Lemma~\ref{lem:WinningRegion} we show that w.l.o.g we may assume that for any 
$\sigma$ the graph $G^\sigma$ is strongly connected.
Hence, in Section~\ref{sect:SPANGame} we investigate the set $\{\CONV(G^\sigma)\mid \sigma\in\FiniteOne\}$ and obtain a computable characterization for it.
In Section~\ref{sect:Generic} we exploit the properties of the one-player solution and the results of Section~\ref{sect:SPANGame} and we obtain a first-order formula over \textbf{rationals} that computes the values that player 1 can enforce.
We use the fact that $f$ is continuous to show that the formula has the same infimum over rationals and reals, and hence, due to Tarski's Theorem the infimum value is computable.
We also show that if Property~4 holds, then one can effectively determine whether the formula has an assignment that gives a value of at most $\nu$.
In Section~\ref{sect:OnePlayerMPExpProp} we apply these results for mean-payoff expressions.
We show that their one-player solution satisfies Properties~1-3, and that it satisfies property~4 if and only if $\HQ$ is decidable.

\section{$\CONV$ Cycles Problem}\label{sect:SPANGame}
In this section, we consider the next problem:
\begin{prob}[$\CONV$ cycles problem]\label{prob:ConvCycles}
\begin{itemize}
\item Input: a $k$-dimensional game graph $G$ and a set of $k$-dimensional vectors $\VectorSetV$.
\item Task: determine whether there is a player-1 finite-memory strategy $\sigma$ such that $\CONV(G^\sigma) \subseteq \CONV(\VectorSetV)$.
(We call such strategy a \emph{realizing} strategy.)
\end{itemize}
\end{prob}
We first present the solution for the above problem, and then we show how to find all the sets of vectors for which there is a realizing player-1 finite-memory strategy.

The solution for Problem~\ref{prob:ConvCycles} relies on the next lemma.
\begin{lem}\label{lem:ConvCyclesMemoryless}
For a game graph $G$ and a set of vectors $\VectorSetV$, there exists a player-1 finite-memory strategy $\sigma$ for which $\CONV(G^\sigma) \subseteq \CONV(\VectorSetV)$ iff for every player-2 memoryless strategy $\tau_i$ there exists a player-1 finite-memory strategy $\sigma_i$ such that $\CONV((G^{\tau_i})^{\sigma_i}) \subseteq \CONV(\VectorSetV)$.
\end{lem}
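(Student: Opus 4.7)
The plan is to prove the two directions of the ``iff'' separately.

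For the direction $(\Rightarrow)$, a witness $\sigma$ on the left (with memory $M$ and update function $\alpha_u$) works also as $\sigma_i := \sigma$ for every memoryless $\tau_i$. Indeed, the graph $(G^{\tau_i})^{\sigma_i}$ has the same vertex set $V \times M$ as $G^\sigma$, and is obtained from $G^\sigma$ by restricting, at every player-2 vertex $(v,m)$, the out-edges to the single edge $(v,m) \to (\tau_i(v), \alpha_u(v,m))$. Consequently $C((G^{\tau_i})^{\sigma_i}) \subseteq C(G^\sigma)$, and hence $\CONV((G^{\tau_i})^{\sigma_i}) \subseteq \CONV(G^\sigma) \subseteq \CONV(\VectorSetV)$.

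For the direction $(\Leftarrow)$, I enumerate the (finitely many) memoryless player-2 strategies as $\tau_1,\ldots,\tau_N$ and denote by $\sigma_1,\ldots,\sigma_N$, with memories $M_1,\ldots,M_N$, the finite-memory witnesses provided by the hypothesis. I construct $\sigma$ by a product construction whose memory stores a current ``mode'' $\mu \in \{1,\ldots,N\}$ together with the simulated memory state of $\sigma_\mu$. At a player-1 vertex $v$ in state $(\mu,m)$, $\sigma$ plays the move prescribed by $\sigma_\mu$ from $(v,m)$. The mode is updated after every player-2 move: if the chosen successor at $v \in V_2$ equals $\tau_\mu(v)$, then $\mu$ is retained and $m$ is updated by $\sigma_\mu$'s update function; otherwise $\mu$ is replaced by the smallest $\mu'$ such that $\tau_{\mu'}(v)$ equals the actual choice, and the sub-memory is reset to $\sigma_{\mu'}$'s initial state. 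Since this memory is finite, $\sigma$ is a finite-memory strategy.

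The verification that $\CONV(G^\sigma) \subseteq \CONV(\VectorSetV)$ proceeds by taking an arbitrary simple cycle $c$ of $G^\sigma$ and decomposing it along the mode coordinate. Since $c$ closes back on its initial vertex, the mode at the end agrees with the mode at the start, and $c$ breaks into maximal mode-constant sub-walks $c_1,\ldots,c_r$, with $c_l$ entirely in mode $\mu_l$. By construction, during $c_l$ player 2's choices all agree with $\tau_{\mu_l}$ and player 1 follows $\sigma_{\mu_l}$, so $c_l$ is a walk in $(G^{\tau_{\mu_l}})^{\sigma_{\mu_l}}$. Using a pumping argument on the finite state space of $(G^{\tau_{\mu_l}})^{\sigma_{\mu_l}}$, together with the fact that iterating $c$ in $G^\sigma$ forces the memory changes inside each $c_l$ to cancel in aggregate, I realize $\Avg(c_l)$ as a convex combination of averages of simple cycles of $(G^{\tau_{\mu_l}})^{\sigma_{\mu_l}}$, each of which lies in $\CONV(\VectorSetV)$ by hypothesis. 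Since $\Avg(c) = \sum_l \frac{|c_l|}{|c|} \Avg(c_l)$ is itself a convex combination of the $\Avg(c_l)$'s, convexity of $\CONV(\VectorSetV)$ yields $\Avg(c) \in \CONV(\VectorSetV)$.

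The main obstacle is precisely the pumping/decomposition step inside each $c_l$: the sub-walks $c_l$ are, in general, walks rather than cycles of $(G^{\tau_{\mu_l}})^{\sigma_{\mu_l}}$, and one must argue that the transient endpoint portions of these walks do not contribute to an average falling outside $\CONV(\VectorSetV)$. This step leverages both the closure of the full cycle $c$ in the product memory (so that the residuals from successive sub-walks glue together to cancel) and, when needed, the reduction to strongly connected $G^\sigma$ provided by Lemma~\ref{lem:WinningRegion}, which tames the transient components.
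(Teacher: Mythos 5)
Your forward direction is fine and matches the paper's (one-line) argument. The backward direction, however, has a genuine gap, and it is exactly the step you flag as ``the main obstacle.'' Your construction switches mode at \emph{any} player-2 vertex where the observed choice deviates from $\tau_\mu$, so the mode-constant segments $c_l$ of a simple cycle $c$ of $G^\sigma$ are open walks in $(G^{\tau_{\mu_l}})^{\sigma_{\mu_l}}$, beginning moreover at a freshly reset memory state. Decomposing an open walk into simple cycles always leaves a simple-path residual, and nothing in the hypothesis controls the average weight of such a residual: the hypothesis only constrains \emph{cycles} of $(G^{\tau_{\mu_l}})^{\sigma_{\mu_l}}$. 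Your claim that ``the residuals from successive sub-walks glue together to cancel'' is not substantiated and does not even typecheck: the residual of $c_l$ is a path in the product graph $(G^{\tau_{\mu_l}})^{\sigma_{\mu_l}}$, whose vertex set is $V\times M_{\mu_l}$, while the residual of $c_{l+1}$ lives in $V\times M_{\mu_{l+1}}$; concatenating them is not a cycle of any of the graphs to which the hypothesis applies. Iterating $c$ does not help either, since $c$ is already a cycle of the full product $G^\sigma$ (mode and sub-memories included), so iteration merely repeats the same residuals. Finally, the appeal to Lemma~\ref{lem:WinningRegion} is both unavailable and circular: that lemma's algorithm decides its guard via Lemma~\ref{lem:GeneralG}, which rests on Lemma~\ref{lem:RealizableSetComputable}, which rests on the very statement you are proving; and in any case nothing in the present statement permits assuming $G^\sigma$ strongly connected.

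The paper avoids this problem by not doing a global product over all $N$ memoryless strategies at once. It inducts on the number of player-2 vertices of out-degree greater than one: it fixes one such vertex $v$ with out-edges $e_1,e_2$, obtains (by induction) realizing strategies $\sigma_1,\sigma_2$ for $G-\{e_1\}$ and $G-\{e_2\}$, and interleaves only these two, switching precisely when $e_1$ or $e_2$ is traversed. Because every switch occurs at the single vertex $v$, each mode-constant segment of a cycle is a closed walk through $v$ in the corresponding restricted graph, hence decomposes into simple cycles of that graph with \emph{no} path residual, and $\Avg(c)$ is then an honest convex combination of points of $\CONV(\VectorSetV)$. If you want to salvage your one-shot construction you would need to either restrict switching so that segments are forced to be closed (effectively re-deriving the paper's induction) or prove a separate bound showing the residuals cannot push the average out of $\CONV(\VectorSetV)$; as written, neither is done.
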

\begin{proof}
The proof for the direction from left to right is trivial (since
any cycle in $(G^{\tau_i})^\sigma$ is also a cycle in $G^\sigma$).

Our proof for the converse direction is inspired by~\cite{HalfPositional}, and the key intuition of the proof is the following.
Let $v$ be a player-2 vertex, with two out-edges $e_1$ and $e_2$, and let $G_1 = G - \{e_1\}$ and $G_2 = G - \{e_2\}$.
Suppose that player 1 has two finite-memory strategies $\sigma_1$ and $\sigma_2$ such that $\CONV(G_i ^ {\sigma_i} ) \subseteq \CONV(\VectorSetV)$ (for $i=1,2$).
Then player 1 can combine the two strategies over $G-\{e_1\}$ and $G-\{e_2\}$, and he can obtain a finite-memory strategy $\sigma$, such that each simple cycles in $G^\sigma$ is a convex combination of cycles from $G_1^{\sigma_1}$ and $G_2^{\sigma_2}$, and hence, $\CONV(G^\sigma) \subseteq\CONV(\VectorSetV)$.
Hence, either player 1 has a realizable strategy for $G$ or he does not have realizable strategy for $G_1 = G - \{e_1\}$ or for $G_2 = G - \{e_2\}$.
Since this holds for every player-2 state, the proof follows.

In order to formally prove the key intuition we claim that if player 1 has a realizable strategy $\sigma_i$ against any player-2 memoryless strategy, then he has a realizable strategy $\sigma$ that satisfies $\CONV(G^\sigma)\subseteq \CONV(\VectorSetV)$, and we prove the claim by induction on the number of player-2 vertices with out-degree greater then one.
The base case, where all of player-2 vertices have out-degree one, is trivial.
For the inductive step, let us assume that there is a player-2 vertex $v$ with out-edges $e_1$ and $e_2$ (if there is no such vertex, then we are in the base case).
For $i=1,2$, let $G_i$ be $G - \{e_i\}$.
If player-2 has a violating memoryless strategy in either $G_1$ or $G_2$, then surely this is also a violating memoryless strategy for $G$, and the claim follows.
Otherwise, we construct a realizable player-1 strategy in $G$ in the following way.
For $i=1,2$, let $\sigma_i$ be a finite-memory player-1 realizable strategy in $G_i$,
If in $\sigma_1$ (resp., $\sigma_2$), the vertex $v$ is unreachable then it is surely a winning strategy also for $G$.
Otherwise, there exists a memory state $m$ such that $(m,v)$ is a vertex in $G_1^{\sigma_1}$, and we denote by $\sigma_1 '$ the strategy that is obtained by changing $\sigma_1$ initial memory state to $m$.
We construct $\sigma$ in the following way.
The memory structure of $\sigma$ is a tuple $(M_1,M_2,\{1,2\})$ where $M_1$ is the memory structure of $\sigma_1 '$, $M_2$ is the memory structure of $\sigma_2$, and the third value in the tuple indicates if we are playing according to $\sigma_1 '$ or $\sigma_2$.
At the beginning of a play, $\sigma$ decides according to $\sigma_2$ (and updates $M_2$ accordingly).
If $\sigma$ decides according to $\sigma_2$ and edge $e_1$ is visited, then $\sigma$ decides according to $\sigma_1 '$ (and updates $M_1$ accordingly), until edge $e_2$ is visited, and then $\sigma$ again decides according to $\sigma_2$, and so on.
We note that $\sigma$ is a finite-memory strategy,
and that any simple cycle in $G^\sigma$ is a composition of simple cycles from $G_1^{\sigma_1 '}$ and $G_2^{\sigma_2}$.
Hence, the average weight of any simple cycle in $G^\sigma$ is $\lambda x_1 + (1-\lambda) x_2$ for some $\lambda \in [0,1]$ and $x_i \in \Avg(G_i) \subseteq\CONV(\VectorSetV)$.
And thus, a convex combination of $x_1$ and $x_2$ is also in $\CONV(\VectorSetV)$, and we get that $\CONV(G^\sigma)\subseteq \CONV(\VectorSetV)$.
Therefore, $\sigma$ is a realizing strategy and the proof is completed.
\pfbox
\end{proof}
We now wish to characterize all the sets of vectors that have a realizing strategy. For this purpose we give the next definition.
For a player-2 memoryless strategy $\tau$, let $\Pi_e^\tau$ be the (finite) set of Eulerian cyclic paths in $G^\tau$, that is $\Pi_e^\tau$ contains only cyclic paths that visit every edge at most once.
For every path $\pi \in \Pi_e^\tau$, let $c_1,\dots,c_t$ be the simple cycles that occur in $\pi$ and we associate a $t\times k$ matrix $A_\pi$ to every path $\pi$ such that the $i$-th column of the matrix is $\Avg(c_i)$.
We observe that
\[\{\Avg(\pi)\mid \mbox{ $\pi$ is a cyclic path in $G^\tau$}\}=
\bigcup_{\pi\in\Pi_e^\tau} \{A_\pi \cdot x \mid x\in\RatSimplex\}\]
The next lemma shows how to compute the realizable sets of vectors.
\begin{lem}\label{lem:RealizableSetComputable}
Let $G$ be a $k$-dimensional graph, and let $\tau_1,\dots,\tau_\ell$ be the (finitely many) player-2 memoryless strategies in $G$.
A set of vectors $\VectorSetV\subseteq \R^k$ is realizable if and only if there exist $x_1,\dots,x_\ell \in \R^k$ such that $x_i \in \bigcup_{\pi\in\Pi_e^{\tau_i}} \{A_\pi \cdot x \mid x\in\RatSimplex\}$ (for every $i\in\{1,\dots,\ell\}$) and $\CONV(x_1,\dots,x_\ell) \subseteq \CONV(\VectorSetV)$.
\end{lem}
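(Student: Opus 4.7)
The plan is to invoke Lemma~\ref{lem:ConvCyclesMemoryless} to reduce realizability of $\VectorSetV$ to the per-strategy condition that for every player-2 memoryless $\tau_i$ there exists a player-1 finite-memory $\sigma_i$ with $\CONV((G^{\tau_i})^{\sigma_i})\subseteq\CONV(\VectorSetV)$, and then to translate this per-strategy condition into the algebraic form involving the sets $\bigcup_{\pi\in\Pi_e^{\tau_i}}\{A_\pi\cdot x\mid x\in\RatSimplex\}$. A key preliminary observation is that once both players' strategies are fixed (one memoryless, one finite-memory), the graph $(G^{\tau_i})^{\sigma_i}$ has out-degree one at every vertex, and by Lemma~\ref{lem:WinningRegion} we may assume it is strongly connected, hence it is a single simple cycle in the product with the memory of $\sigma_i$. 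Projecting to $G^{\tau_i}$, this cycle becomes a cyclic path with a unique average $x_i$, so $\CONV((G^{\tau_i})^{\sigma_i})=\{x_i\}$, and the containment in $\CONV(\VectorSetV)$ is equivalent to $x_i\in\CONV(\VectorSetV)$.

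For the forward direction, the observation preceding the lemma places this $x_i$ in $\bigcup_{\pi\in\Pi_e^{\tau_i}}\{A_\pi\cdot x\mid x\in\RatSimplex\}$, and since a convex hull is contained in $\CONV(\VectorSetV)$ iff each of its vertices is, we obtain $\CONV(x_1,\dots,x_\ell)\subseteq\CONV(\VectorSetV)$.

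For the converse, given $x_i=A_{\pi_i}\cdot y_i$ with $\pi_i\in\Pi_e^{\tau_i}$ decomposing into simple cycles $c_1,\dots,c_t$ and $y_i=(y_i^1,\dots,y_i^t)\in\RatSimplex$, I would construct an explicit finite-memory $\sigma_i$ as follows. Clearing the common denominator of $y_i$, pick positive integers $n_j$ satisfying $n_j|c_j|/\sum_k n_k|c_k|=y_i^j$ (for instance, $n_j=p_j\prod_{k\neq j}|c_k|$ when $y_i^j=p_j/q$). The strategy then follows the Eulerian schedule of $\pi_i$, but every time $\pi_i$ enters a simple cycle $c_j$ it traverses $c_j$ exactly $n_j$ consecutive times before resuming along $\pi_i$'s connecting edges toward the next simple cycle. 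The memory only needs to store the current position along this scheduled superpath, whose length is $\sum_k n_k|c_k|$, so it is finite; and the induced graph is a single closed walk of average $\sum_j y_i^j\Avg(c_j)=x_i\in\CONV(\VectorSetV)$. Hence each $\sigma_i$ is a realizing response against $\tau_i$, and Lemma~\ref{lem:ConvCyclesMemoryless} combines them into a single finite-memory $\sigma$ with $\CONV(G^\sigma)\subseteq\CONV(\VectorSetV)$.

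The main obstacle is justifying the inflation step in the converse: one must verify that inserting $n_j$ consecutive copies of $c_j$ at each visit of $\pi_i$ still yields a bona fide walk in $G^{\tau_i}$ (so the inflated path is no longer Eulerian, but since each $c_j$ returns to its entry vertex, reentry is always possible and $\pi_i$'s connecting edges are still available afterwards) and that the average of the inflated walk is exactly $\sum_j y_i^j\Avg(c_j)$, which is a direct length-weighted computation. Once these routine points are checked, both directions are established and the lemma follows.
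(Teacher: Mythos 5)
Your proof is correct and follows essentially the same route as the paper's: reduce to one memoryless player-2 strategy at a time via Lemma~\ref{lem:ConvCyclesMemoryless}, note that a finite-memory player-1 strategy in the one-player graph $G^{\tau_i}$ yields a lasso whose loop is a cyclic path with a single average value, and pass between cyclic paths and points $A_\pi\cdot x$ with $x\in\RatSimplex$ via the observation preceding the lemma --- your explicit cycle-inflation construction is precisely what that observation leaves implicit. The only blemish is the appeal to Lemma~\ref{lem:WinningRegion} to get strong connectivity, which is neither what that lemma says nor needed: the reachable part of an out-degree-one graph is automatically a lasso with a unique cycle, so $\CONV((G^{\tau_i})^{\sigma_i})$ is a single point in any case.
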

\begin{proof}
First we characterize the realizable vector sets when a player-2 memoryless strategy $\tau$ is given, that is, we characterize the realizable vectors in a one-player game.
A finite-memory strategy $\sigma$ in a one-player graph $G^\tau$ is an ultimately periodic infinite path, and $(G^\tau)^\sigma$ is a \emph{lasso shaped graph} with exactly one cycle.
The cycle of $(G^\tau)^\sigma$ is obviously a cyclic path in $G^\tau$, and thus $\VectorSetV$ is realizable in $G^\tau$ iff there is a cyclic path $\pi$ in $G^\tau$ with $\Avg(\pi) \in  \CONV(\VectorSetV)$.

Hence, by Lemma~\ref{lem:ConvCyclesMemoryless}, we get that $\VectorSetV$ is realizable iff for every player-2 memoryless strategy $\tau_i$ there is a cyclic path $\pi_i$ in $G^{\tau_i}$ with $\Avg(\pi_i)\in  \CONV(\VectorSetV)$.
Since such witness $\pi_i$ exists iff there exists $x_i \in \bigcup_{\pi\in\Pi_e^{\tau_i}} \{A_\pi \cdot x \mid x\in\RatSimplex\}$ with $\Avg(\pi_i ) =x_i$, then the proof is completed.
\pfbox
\end{proof}
In the next example we illustrate the geometrical interpretation of Lemma~\ref{lem:RealizableSetComputable}.
\begin{examp}\label{examp:GeoInturp}
Consider the game graph $G$ in Figure~\ref{fig:GeoGraph}, where the box vertices are controlled by player 2.
Player 2 has two possible memoryless strategies, namely, $\tau_1$ that follows the edge $v_0 \to v_1$ and $\tau_2$ that follows $v_0\to v_4$.
In $G^{\tau_1}$ the set of Eulerian cyclic paths $\Pi_e^{\tau_1}$
contains all cyclic sub-paths of the Eulerian cyclic path $v_1\to v_2 \to v_2 \to v_3\to v_3 \to v_1$.
Hence, the average weight of any infinite lasso path in $G^{\tau_1}$ is a convex combination of $\Avg(v_1\to v_2 \to v_3\to v_1)$, $\Avg(v_2\to v_2)$ and $\Avg(v_3\to v_3)$ (points $D$, $F$ and $E$ in Figure~\ref{fig:GeoAxis}).
In $G^{\tau_2}$, an Eulerian cyclic path is either a sub-path of
$v_4 \to v_5 \to v_5 \to v_4$ or the path $v_6 \to v_6$.
Hence, the average weight of any infinite lasso path is either a convex combination of $\Avg(v_4\to v_5\to v_4)$ and $\Avg(v_5\to v_5)$ (points $A$ and $B$ in Figure~\ref{fig:GeoAxis}), or it is $\Avg(v_6\to v_6)$ (point $C$ in Figure~\ref{fig:GeoAxis}).
By Lemma~\ref{lem:RealizableSetComputable}, we get that a set of vectors $\VectorSetV$ is realizable if and only if $\CONV(\VectorSetV)$ intersects with the polygon $DEF$ and with either the line $AB$ or with the point $C$ (or with both).
\end{examp}
\begin{figure}
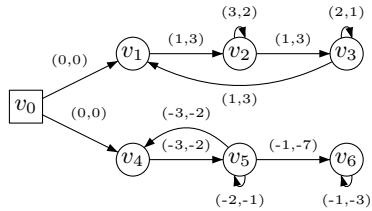

\begin{center}
    \unitlength=4pt
    \begin{gpicture}(30,15)(0,-10)
    \thinlines

    \node[Nw=3.1,Nh=3.1,Nmr=0](A1)(0,0){$v_0$}

    \node[Nw=3.1,Nh=3.1](B1)(10,5){$v_1$}
    \node[Nw=3.1,Nh=3.1](B2)(20,5){$v_2$}
    \node[Nw=3.1,Nh=3.1](B3)(30,5){$v_3$}

    \node[Nw=3.1,Nh=3.1](C1)(10,-5){$v_4$}
    \node[Nw=3.1,Nh=3.1](C2)(20,-5){$v_5$}
    \node[Nw=3.1,Nh=3.1](C3)(30,-5){$v_6$}

    \drawedge(A1,B1){\tiny{(0,0)}}
    \drawedge(A1,C1){\tiny{(0,0)}}

    \drawedge(B1,B2){\tiny{(1,3)}}
    \drawedge(B2,B3){\tiny{(1,3)}}
    \drawedge[curvedepth=3,ELside=l](B3,B1){\tiny{(1,3)}}
    \drawloop[loopdiam = 1](B2){\tiny{(3,2)}}
    \drawloop[loopdiam = 1](B3){\tiny{(2,1)}}

    \drawedge(C1,C2){\tiny{(-3,-2)}}
    \drawedge(C2,C3){\tiny{(-1,-7)}}
    \drawedge[curvedepth=-3,ELside=r](C2,C1){\tiny{(-3,-2)}}
    \drawloop[loopangle=270,loopdiam = 1](C2){\tiny{(-2,-1)}}
    \drawloop[loopangle=270,loopdiam = 1](C3){\tiny{(-1,-3)}}


    \end{gpicture}
\end{center}
	\caption{Game graph $G$.}
	\label{fig:GeoGraph}
\end{figure}
\begin{figure}
\setlength\unitlength{0.1cm}
\begin{center} 
\begin{gpicture}(0,35)(0,-10)

\psline[linewidth=0.5pt]{<->}(0,-1.3)(0,1.3)
\psline[linewidth=0.5pt]{<->}(-1.3,0)(1.3,0)


\psline[linewidth=0.5pt,linestyle=dashed](0.5,1.5)(1.5,1)
\psline[linewidth=0.5pt,linestyle=dashed](1.5,1)(1,0.5)
\psline[linewidth=0.5pt,linestyle=dashed](1,0.5)(0.5,1.5)

\psline[linewidth=0.5pt,linestyle=dashed](-1,-0.5)(-1.5,-1)

\put(-15,-10){\circle*{1}}
\put(-16.3,-12.5){\tiny{$A$}}

\put(-10,-5){\circle*{1}}
\put(-11.3,-7.5){\tiny{$B$}}

\put(-5,-5){\circle*{1}}
\put(-6.3,-7.5){\tiny{$C$}}

\put(5,15){\circle*{1}}
\put(3.7,12.5){\tiny{$D$}}

\put(10,5){\circle*{1}}
\put(8.7,2.5){\tiny{$E$}}

\put(15,10){\circle*{1}}
\put(13.7,7.5){\tiny{$F$}}


\end{gpicture}
\end{center} 
\caption{}
\label{fig:GeoAxis}
\end{figure}

\section{Generic Solution for Games with Quantitative Objectives}\label{sect:Generic}
In this section we solve the quantitative analysis problem for games with quantitative objectives that satisfy Properties~1-3 and we solve the boolean analysis problem for objectives that satisfy Properties~1-4.
We first give a conceptual (i.e., not always computable) solution for the boolean analysis problem, and then extend the solution for the quantitative analysis problem.

An equivalent formulation for the boolean analysis problem is to ask whether for a game graph $G$ and a threshold $\nu$ there is a player-1 (finite-memory) strategy $\sigma$ such that the one-player solution over $G^\sigma$ is at most $\nu$.
By the third property (convex monotonicity), it is enough to determine whether there is $\sigma$ such that for every SCC $S$ of $G^\sigma$ it holds that $f(\CONV(S)) \leq \nu$ (where $f$ is the solution for the one-player game).
However, we first show how to determine whether there is $\sigma$ such that $f(\CONV(G^\sigma)) \leq \nu$ and only then solve the original problem.
\begin{lem}\label{lem:GeneralG}
Let $f$ be a one-player solution that satisfies Properties~1-3.
Then $\inf_{\sigma\in\FiniteOne}f(\CONV(G^\sigma))$ is computable (when the input is a game graph $G$).
If $f$ also satisfies Property~4, then the problem of determining whether there is a player-1 finite-memory strategy $\sigma$ such that $f(\CONV(G^\sigma))\leq \nu$ is decidable (when the input is $G$ and $\nu$).
\end{lem}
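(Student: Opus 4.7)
The plan is to reduce the problem to a first-order question over $\langle\R,<,+,\times\rangle$ by combining the characterization of realizable cycle sets from Lemma~\ref{lem:RealizableSetComputable} with the first-order definability of $f$ supplied by Property~1. I would enumerate $\tau_1,\ldots,\tau_\ell$, the finitely many player-2 memoryless strategies of $G$, and set
$$P_i \;:=\; \bigcup_{\pi\in\Pi_e^{\tau_i}}\bigl\{A_\pi\cdot s : s\in\Simplex\bigr\},$$
a finite union of images of simplices under linear maps, hence a compact semi-algebraic subset of $\R^k$. The central technical step is the identity
$$\inf_{\sigma\in\FiniteOne} f(\CONV(G^\sigma)) \;=\; \inf\bigl\{\zeta_\ell(x_1,\ldots,x_\ell) : x_i\in P_i \text{ for every } i\bigr\}.$$
For ``$\ge$'', given any $\sigma\in\FiniteOne$ the graph $(G^\sigma)^{\tau_i}$ is deterministic, and the cycle of its unique ultimately periodic play has an average $x_i$ that lies in $P_i$ (by the identity displayed just before Lemma~\ref{lem:RealizableSetComputable}) and in $\CONV(G^\sigma)$ (since that cycle is itself a closed walk in $G^\sigma$); hence $\CONV(x_1,\ldots,x_\ell)\subseteq\CONV(G^\sigma)$, and Property~2 yields $\zeta_\ell(x_1,\ldots,x_\ell)\le f(\CONV(G^\sigma))$. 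For ``$\le$'', for rational $x_i\in P_i$ Lemma~\ref{lem:RealizableSetComputable} produces a realizing $\sigma\in\FiniteOne$ with $\CONV(G^\sigma)\subseteq\CONV(x_1,\ldots,x_\ell)$, and Property~2 gives the reverse inequality. The mismatch that Lemma~\ref{lem:RealizableSetComputable} provides rational witnesses while $P_i$ is defined via the real simplex is bridged by Property~3: continuity of $\zeta_\ell$ together with density of $\RatSimplex$ in $\Simplex$ forces the real and rational infima to coincide.

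For the first claim, the right-hand side of the identity is first-order definable over $\langle\R,<,+,\times\rangle$: the constraints $x_i\in P_i$ unfold into a finite disjunction of existentially quantified linear conditions on simplex variables, and $\zeta_\ell$ is computable from $G$ by Property~1. Since $\prod_i P_i$ is compact and $\zeta_\ell$ is continuous (Property~3), the infimum is attained, so it is the unique real number $y^\star$ satisfying
$$\psi(y)\;\equiv\;\bigl(\exists \bar x\in{\textstyle\prod_i} P_i:\zeta_\ell(\bar x)=y\bigr)\wedge\bigl(\forall \bar x\in{\textstyle\prod_i} P_i:\zeta_\ell(\bar x)\ge y\bigr).$$
By Tarski's theorem the theory of real closed fields is decidable and every first-order definable real number is effectively computable (as a real algebraic number given by a computable minimal polynomial together with an isolating interval), so $y^\star$ is computable.

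For the second claim, by the same characterization (this time using directly the rational witnesses provided by Lemma~\ref{lem:RealizableSetComputable}), the existence of $\sigma\in\FiniteOne$ with $f(\CONV(G^\sigma))\le\nu$ is equivalent to the existence of $(\pi_1,\ldots,\pi_\ell)\in\prod_i\Pi_e^{\tau_i}$ and rational $s_i\in\RatSimplex$ with $\zeta_\ell(A_{\pi_1}\cdot s_1,\ldots,A_{\pi_\ell}\cdot s_\ell)\le\nu$. Since $\prod_i\Pi_e^{\tau_i}$ is finite, this is a finite disjunction of instances of precisely the decision problem stipulated by Property~4 (with $n=\ell$ and matrices $A_{\pi_1},\ldots,A_{\pi_\ell}$), each decidable by assumption. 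The main obstacle throughout is the careful proof of the key identity --- specifically the passage from a discrete finite-memory strategy to a tuple of real (or rational) vectors in $\prod_i P_i$ --- which is exactly where Property~3 is indispensable, both for bridging the rational/real gap and for guaranteeing that the semi-algebraic minimum is attained so that Tarski's theorem provides an effective computation.
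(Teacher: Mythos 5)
Your proposal is correct and follows essentially the same route as the paper's proof: reduce via Lemma~\ref{lem:RealizableSetComputable} and Property~2 to an optimization of $\zeta_\ell$ over tuples drawn from the finitely many sets $\bigcup_{\pi\in\Pi_e^{\tau_i}}\{A_\pi\cdot x\}$, use Property~3 to identify the rational and real infima, invoke Tarski for computability, and dispatch the threshold question as a finite disjunction of Property~4 instances. The only cosmetic difference is that you argue attainment of the infimum via compactness of the closed simplices, whereas the paper works directly with the (definable) infimum over the simplex interior; both are sound.
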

\begin{proof}
Let $\tau_1,\dots,\tau_m$ be all player-2 memoryless strategies in $G$ (note that $m$ is at most exponential in $|G|$).
By Lemma~\ref{lem:RealizableSetComputable}, and by the monotonicity of $f$, there is a player-1 strategy $\sigma$ that satisfies $f(\CONV(G^\sigma))\leq \nu$ if and only if there are matrices $A_{\pi_1},\dots,A_{\pi_m}$ and vectors $x_1,\dots,x_m$ such that $\pi_i \in \Pi_e^{\tau_i}$ and $x_i\in\RatSimplex$ 
and $\zeta_m(A_{\pi_1}\cdot x_1, \dots,A_{\pi_m}\cdot x_m) \leq \nu$.
For every $\tau_i$, the set $\Pi_e^{\tau_i}$ is finite (and at most of exponential size).
Hence, we can enumerate all $m$-tuples of $\Pi_e^{\tau_1} \times \dots \times \Pi_e^{\tau_m}$ and check if for at least one tuple there is a solution to the inequality $\zeta_m(A_{\pi_1}\cdot x_1, \dots,A_{\pi_m}\cdot x_m) \leq \nu$.
If $f$ satisfies Property~4, then for a given
$\pi_1,\dots,\pi_m$ we can effectively check if the inequality is satisfiable.
Hence, we can effectively determine whether there is $\sigma$ such that $f(\CONV(G^\sigma))\leq \nu$.
Moreover, for a given $\pi_1,\dots,\pi_m$ the expression 
$\inf_{x_1,\dots,x_m\in \IntSimplex} \zeta_m(A_{\pi_1}\cdot x_1, \dots,A_{\pi_m}\cdot x_m)$
is first-order definable (recall that $\zeta_m$ is first-order definable) over $\langle \R,<,+,\times \rangle$ (note that $x_i$ ranges over $\IntSimplex$ and not over $\RatSimplex$) and therefore, by Taski's Theorem~\cite{Tarski} its value is computable.
Since $\zeta_m$ is continuous we get that
$\inf_{x_1,\dots,x_m\in \IntSimplex} \zeta_m(A_{\pi_1}\cdot x_1, \dots,A_{\pi_m}\cdot x_m)
=
\inf_{x_1,\dots,x_m\in \RatSimplex} \zeta_m(A_{\pi_1}\cdot x_1, \dots,A_{\pi_m}\cdot x_m)$.
Finally, by Lemma~\ref{lem:RealizableSetComputable}, we have that
$\inf_{\sigma\in\FiniteOne} f(\CONV(G^\sigma)) =
\inf_{\pi_1\in \Pi_e^{\tau_1},\dots,\pi_m\in \Pi_e^{\tau_m}}
\inf_{x_1,\dots,x_m\in \RatSimplex} \zeta_m(A_{\pi_1}\cdot x_1,\dots,A_{\pi_m}\cdot x_m)$,
and since $\Pi_e^{\tau_i}$ is finite we get that $\inf_{\sigma\in\FiniteOne} f(\CONV(G^\sigma))$ is computable.
\pfbox
\end{proof}
Before presenting the algorithm for the boolean analysis problem we recall the (standard) definitions of \emph{winning regions} and \emph{attractors}.
Let $G$ be a game graph with an initial vertex $v_0$, and let $v$ be an arbitrary vertex in $G$. We denote by $(G,v)$ the game graph that is formed from $G$ by changing the initial vertex to $v$.
We say that a vertex $v$ is in \emph{player-1 winning region} (denoted by $\WinOne$) if player 1 wins in $(G,v)$ (that is, player 1 has a finite-memory strategy that assures a value at most $\nu$ to the objective).
The \emph{player-1 attractor set} of a vertex $v$ (denoted by $\AttrOne{v}$) contains all the vertices from which player 1 can force reachability to $v$ (after finite number of rounds).
It is well known that the attractor set of a vertex is computable (even in linear time) and that player 1 can force reachability by a finite-memory strategy (in fact, even by a memoryless strategy).
The next remark shows another important property of attractors and winning regions.
\begin{rem}\label{rem:Attractor}
Let $G$ be a game graph over a boolean objective that is formed by a quantitative objective with a solution function $f$ and a threshold $\nu$, and let $v$ be a vertex in $G$.
Then for every vertex $u\notin \AttrOne{v}$, if $\sigma$ is a finite-memory player-1 strategy for $(G,u)$, then $\sigma$ is a winning strategy in $(G-\AttrOne{v},u)$.
\end{rem}
\begin{proof}
We denote $H = G-\AttrOne{v}$ and we observe that $(H,u)^\sigma$ is a subgraph of $(G,u)^\sigma$.
Hence, for every SCC $S\in H$ there is a corresponding SCC $S '\in G$ such that $f(\CONV(S'))\leq \nu$.
Since $\CONV(S) \subseteq \CONV(S')$ and by the monotonicity of $f$ we get that $f(\CONV(S))\leq \nu$ and therefore $\sigma$ is a winning strategy in $(H,u)$.
\pfbox
\end{proof}

Algorithm~\ref{algo:WinningRegion} computes player-1 winning region, and we prove its correctness in Lemma~\ref{lem:WinningRegion}
\begin{algorithm}{{\sc WinningRegion}$(G,f,\nu)$}
\caption{Player-1 winning region computation for quantitative objectives}
\label{algo:WinningRegion}
\begin{algorithmic}
\FOR{$v\in G$}
	\IF{$\exists \sigma$ s.t $f(\CONV((G,v)^\sigma))\leq \nu$}
		\STATE $W \gets \AttrOne{v}$
		\STATE $W \gets W\cup${\sc WinningRegion}$(G - \AttrOne{v} ,f,\nu)$
		\RETURN $W$
	\ENDIF
\ENDFOR
\RETURN $\emptyset$
\end{algorithmic}
\end{algorithm}
\begin{lem}\label{lem:WinningRegion}
Algorithm~\ref{algo:WinningRegion} computes player-1 winning region.
\end{lem}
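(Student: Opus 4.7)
The plan is to prove both soundness and completeness by induction on $|V|$, with the trivial base case $V=\emptyset$. Before starting the induction I would first record the equivalence between the loop's condition ``$\exists\sigma$ s.t.\ $f(\CONV((G,v)^\sigma))\leq\nu$'' and ``$v\in\WinOne$'': the graph $(G,v)^\sigma$ is one-player (player~2 alone chooses), so $f((G,v)^\sigma)=\sup_\tau\Obj(\pi_{\sigma,\tau})=\Val_\sigma$, and hence the condition picks out exactly those $v$ from which player~1 has a finite-memory winning strategy. In particular, the check in the loop is decidable or at least well-defined as the purely mathematical object the algorithm inspects (the question of effectiveness is the content of Lemma~\ref{lem:GeneralG}).

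For the inductive step I would split on whether the loop selects some $v$. If it does not, then for every $u\in V$ no finite-memory $\sigma$ achieves $\Val_\sigma\leq\nu$ from $u$, hence $\WinOne=\emptyset$ and returning $\emptyset$ is correct. If the loop selects $v$ with witness $\sigma$, I would argue that the returned set is $\AttrOne{v}\cup\WinOne(G-\AttrOne{v})$, and that this equals $\WinOne(G)$; by the inductive hypothesis the recursive call correctly computes $\WinOne(G-\AttrOne{v})$.

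For soundness of this decomposition, every $u\in\AttrOne{v}$ is winning because player~1 can first play the memoryless attractor strategy to reach $v$ in finitely many rounds and then switch to $\sigma$; the composition is finite-memory, and prefix-independence of the objective guarantees that the value still does not exceed $\nu$. A vertex $u\in\WinOne(G-\AttrOne{v})$ with subgame winning strategy $\sigma^*$ is winning in $G$ via the following extension: play $\sigma^*$ as long as the token stays outside $\AttrOne{v}$, and if player~2 ever moves into $\AttrOne{v}$, switch to the attractor-then-$\sigma$ strategy above; again prefix-independence makes the eventual value $\leq\nu$ in both cases. For completeness, a vertex $u\in\WinOne(G)\setminus\AttrOne{v}$ inherits, by Remark~\ref{rem:Attractor}, any winning strategy it has in $G$ as a winning strategy in $(G-\AttrOne{v},u)$, so $u\in\WinOne(G-\AttrOne{v})$ and is returned by the recursive call.

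The main obstacle will be the careful finite-memory bookkeeping for the two combined strategies (attractor-then-$\sigma$ on $\AttrOne{v}$, and $\sigma^*$-with-fallback on $V\setminus\AttrOne{v}$): these must be realised as Moore machines whose state spaces are unions of the component memories together with a mode bit indicating which sub-strategy is active, and whose updates correctly hand off control when the token crosses into $\AttrOne{v}$. This is routine, and all the conceptual content is already carried by Remark~\ref{rem:Attractor}, prefix-independence of the objective, and the equivalence between the loop's condition and membership in $\WinOne$.
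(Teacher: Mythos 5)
Your decomposition ($W=\AttrOne{v}\cup\WinOne(G-\AttrOne{v})$), the soundness argument via attractor-then-$\sigma$ and $\sigma^*$-with-fallback, and the use of Remark~\ref{rem:Attractor} for vertices outside the attractor all match the paper's proof. However, there is a genuine gap in your completeness direction, and it sits exactly where you declared the matter settled ``before starting the induction.'' The loop's condition $\exists\sigma:\ f(\CONV((G,v)^\sigma))\leq\nu$ is \emph{not} equivalent to $v\in\WinOne$. Membership in $\WinOne$ means there is a finite-memory $\sigma$ such that every (reachable) SCC $S$ of $(G,v)^\sigma$ satisfies $f(\CONV(S))\leq\nu$, i.e.\ $\Val_\sigma\leq\nu$; the loop instead applies the strongly-connected solution $\zeta_n$ to the convex hull of \emph{all} simple cycles of $(G,v)^\sigma$, which contains convex combinations of cycles drawn from different SCCs. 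By monotonicity this quantity dominates $\max_S f(\CONV(S))=\Val_\sigma$, and the domination can be strict: e.g.\ for $E=\MIN(\LimInfAvg_1,\LimInfAvg_2)$ with one SCC contributing a cycle of average $(2,-1)$ and another a cycle of average $(-1,2)$, each SCC has value $-1$ but the joint hull contains $(\tfrac12,\tfrac12)$ and yields value $\tfrac12$. So the condition is sufficient for $v\in\WinOne$ (that is your soundness case, and the paper's case~(i)) but not necessary, and your step ``if the loop selects no $v$ then $\WinOne=\emptyset$'' does not follow from what you have written.

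The missing idea, which is the one nontrivial ingredient of the paper's proof, is the terminal-SCC argument: if some $v\in\WinOne$ with witness $\sigma$, take a \emph{terminal} SCC $S'$ of $(G,v)^\sigma$, pick a vertex $(u,m)\in S'$, and let $\sigma'$ be $\sigma$ with initial memory state $m$; then $(G,u)^{\sigma'}=S'$ is a single SCC, so $f(\CONV((G,u)^{\sigma'}))=f(\CONV(S'))\leq\nu$ and the loop's condition fires at $u$ (a possibly different vertex from $v$). This is why the algorithm iterates over all vertices and why correctness of the whole recursion can still be concluded: whenever $\WinOne\neq\emptyset$ the algorithm makes progress, and Remark~\ref{rem:Attractor} then shows that a vertex surviving to a subgame on which the algorithm returns $\emptyset$ is indeed losing. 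You need to replace your claimed equivalence with this argument; the rest of your proof (the finite-memory bookkeeping included) is routine, as you say.
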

\begin{proof}
We first prove that in every step of the algorithm, if a vertex $u\in W$, then $u\in\WinOne$.
We prove the assertion by considering the next three cases:
(i)~There is a strategy $\sigma$ for which $f(\CONV((G,u)^\sigma))\leq \nu$.
In this case, for every SCC $S\in (G,u)^\sigma$ we have that
$\CONV(S)\subseteq \CONV((G,u)^\sigma)$ and by the monotonicity of $f$ we get that $f(\CONV(S))\leq \nu$. Hence, $u\in\WinOne$.
(ii)~There is a vertex $v$ and a strategy $\sigma$ s.t $f(\CONV((G,v)^\sigma))\leq \nu$ and $u\in \AttrOne{v}$.
In this case, $v$ is in player-1 winning region and therefore the attractor of $v$ is also in $\WinOne$.
(iii)~For some vertex $v$ we have $u\in ${\sc WinningRegion}$(G - \AttrOne{v} ,f,\nu)$ and $f(\CONV((G,v)^\sigma))\leq \nu$ for some strategy $\sigma$.
By a simple induction on the size of the graph we get that $u$ is in player-1 winning region for the game graph $G - \AttrOne{v}$. The following strategy is a winning strategy for $(G,u)$: (a)~play according to the winning strategy over $G - \AttrOne{v}$; (b)~if the pebble is in vertex $v$, then play according to $\sigma$.
Hence, if $u\in W$, then $u\in\WinOne$ and we get that $\WinOne \supseteq W$.

In order prove the converse direction, we first prove that if $\WinOne \neq \emptyset$, then $W\neq \emptyset$.
Indeed, if $v\in\WinOne$, then for some strategy $\sigma$ we have that for every SCC $S\in (G,v)^\sigma$ it holds that $f(\CONV(S))\leq \nu$.
Let $S '$ be a terminal SCC in $(G,v)^\sigma$ and let $(u,m)$ be a vertex in $S '$ (where $u$ is a vertex in $G$ and $m$ is a memory state of $\sigma$).
Let $\sigma '$ be the strategy that is formed by changing $\sigma$ initial memory state to $m$. Then $(G,u)^{\sigma '} = S '$, and therefore $f(\CONV((G,u)^{\sigma '}))\leq \nu$.
Hence, the \textbf{if} condition in the \textbf{for} loop is satisfied at least once, and $W\neq \emptyset$.
We are now ready to prove that $\WinOne\subseteq W$.
Towards a contradiction we assume the existence of $u\in (\WinOne - W)$.
By the definition of Algorithm~\ref{algo:WinningRegion} it follows that there is a subgraph $H\subseteq G$ such that $u\in H$ and the algorithm returns $\emptyset$ when it runs over $H$.
Hence, player-1 winning region in $H$ is empty (namely, $u\notin \WinOne$ over game graph $H$) and by Remark~\ref{rem:Attractor} we get that $u\notin\WinOne$ in game graph $G$ and the contradiction follows.
Thus $\WinOne \subseteq W$.
\pfbox
\end{proof}
We present a similar algorithm for the computation of quantitative analysis of quantitative objectives.
For this purpose we extend the notion of winning regions to quantitative objectives by defining \emph{value regions}.
For a threshold $\nu$ we say that a vertex $v$ is in \emph{$\nu$ value region} (denoted by $\ValueRegion(\nu)$) if $\inf_{\sigma\in\FiniteOne}\sup_{\tau\in\StrategiesTwo} \Val_{\sigma,\tau} = \nu$ (when the initial vertex of the game is $v$).
Algorithm~\ref{algo:ValueRegion} computes value regions by a call to
{\sc ValueRegion}$(G,f,-\infty)$, and 
its correctness follows by the same arguments as in the proof of Lemma~\ref{lem:WinningRegion}.
\begin{algorithm*}{{\sc ValueRegion}$(G,f,\mathit{ValLowerBound})$}
\caption{Value region computation for quantitative objectives. The algorithm invokes {\sc ValueRegion}$(G,f,-\infty)$.}
\label{algo:ValueRegion}
\begin{algorithmic}
\IF{$G \neq \emptyset$}
\FOR{$v\in G$}
	\STATE $I[v] \gets 
	\max(\inf_{\sigma\in \FiniteOne} f(\CONV((G,v)^\sigma)), \mathit{ValLowerBound})$
\ENDFOR
\STATE $u \gets \operatorname{argmin}_{v\in G} I[v]$
\COMMENT{Choose $u$ s.t $I[u] = \min_{v\in G} I[v]$}
\STATE $\ValueRegion(I[u]) \gets \ValueRegion(I[u]) \cup \AttrOne{u}$
\COMMENT{Add $\AttrOne{u}$ to the value region of $I[u]$}
\RETURN {\sc ValueRegion}$(G - \AttrOne{u},f,I[u])$
\COMMENT{Continue the computation recursively. The new lower bound is $I[u]$.}
\ENDIF
\end{algorithmic}
\end{algorithm*}
We note that if $f$ satisfies Properties 1-3, then by Lemma~\ref{lem:GeneralG}, 
there is an effective procedure to compute $\inf_{\sigma\in \FiniteOne} f(\CONV((G,v)^\sigma))$ (hence, Algorithm~\ref{algo:ValueRegion} can be effectively executed) and if $f$ satisfies Properties 1-4, then by the same lemma we get that there is a procedure to determine whether $f(\CONV((G,v)^\sigma))\leq \nu$ (hence, Algorithm~\ref{algo:WinningRegion} can be effectively executed).
Hence, we get the main result of this section.
\begin{thm}\label{thm:AnalysisOfQuantitativeObj}
Let $f$ be the one-player solution of a quantitative objective.
\begin{itemize}
\item If $f$ satisfies Properties~1-3, then the corresponding quantitative analysis problem is computable.
\item If $f$ satisfies Properties~1-4, then the corresponding boolean analysis problem is decidable.
\end{itemize}
\end{thm}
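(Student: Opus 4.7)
The plan is to reduce the full synthesis problems to the ``single SCC'' situation already handled by Lemma~\ref{lem:GeneralG}, using the attractor-based decomposition formalized in Algorithms~\ref{algo:WinningRegion} and~\ref{algo:ValueRegion}. Concretely, I would prove the second bullet (boolean analysis under Properties~1--4) first: invoke Algorithm~\ref{algo:WinningRegion} on input $(G,f,\nu)$ and return ``yes'' iff the initial vertex $v_0$ lies in the computed set. The algorithm only calls two kinds of external procedures --- attractor computation, which is a classical linear-time graph operation, and the predicate ``$\exists \sigma\in\FiniteOne$ such that $f(\CONV((G,v)^\sigma))\leq\nu$,'' which is exactly what Lemma~\ref{lem:GeneralG} makes effective under Property~4. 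Correctness of the output is Lemma~\ref{lem:WinningRegion}, so the only thing left to verify is that every subroutine call is effective, and that the recursion terminates (which it does since $|G - \AttrOne{v}|<|G|$ whenever the inner \textbf{if} fires).

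For the first bullet (quantitative analysis under Properties~1--3), I would run Algorithm~\ref{algo:ValueRegion} on $(G,f,-\infty)$ and read off $\inf_{\sigma\in\FiniteOne}\sup_{\tau\in\StrategiesTwo}\Val_{\sigma,\tau}$ as the label of $v_0$. Here the external subroutine is the computation of $\inf_{\sigma\in\FiniteOne} f(\CONV((G,v)^\sigma))$, which is exactly what the first conclusion of Lemma~\ref{lem:GeneralG} delivers (via the first-order-definable formula combined with Tarski's theorem and the continuity argument that replaces $\RatSimplex$ by $\IntSimplex$ without changing the infimum). Correctness of the value-region assignment proceeds by the same template as Lemma~\ref{lem:WinningRegion}: the ``$\subseteq$'' direction uses that any strategy witnessing a value from a $v$ with minimum label can be glued with a strategy reaching $v$ via $\AttrOne{v}$, and the ``$\supseteq$'' direction uses Remark~\ref{rem:Attractor} to argue that a vertex left outside all attractors cannot escape to a strictly lower value region, because any finite-memory strategy from it stays in the remaining subgraph.

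The only potentially subtle point is glueing together the attractor-reaching strategy with the inner finite-memory witness into a single finite-memory strategy that actually realizes the required value; this is safe because (i)~finite-memory attractor strategies exist and (ii)~once the pebble reaches $v$, switching to the stored memory state of the witness leaves all subsequent SCCs of $(G,v)^\sigma$ unchanged, so the value $f(\CONV(\cdot))$ of each reachable SCC in the combined game-according-to-strategy is bounded by the value of the corresponding SCC in $(G,v)^\sigma$, invoking monotonicity (Property~2). I expect this glueing argument, and the corresponding ``inductive peeling off of attractors'' in the correctness proof of Algorithm~\ref{algo:ValueRegion}, to be the main obstacle, but it is essentially a direct adaptation of the three-case analysis already written out for Lemma~\ref{lem:WinningRegion}, so the theorem follows once those two algorithmic correctness claims and Lemma~\ref{lem:GeneralG} are assembled.
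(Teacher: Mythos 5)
Your proposal matches the paper's own argument: the theorem is obtained exactly by running Algorithm~\ref{algo:WinningRegion} (correctness from Lemma~\ref{lem:WinningRegion}, effectiveness of the inner test from Lemma~\ref{lem:GeneralG} under Property~4) for the boolean analysis, and Algorithm~\ref{algo:ValueRegion} (effectiveness of the infimum computation from Lemma~\ref{lem:GeneralG} under Properties~1--3, correctness by the same three-case attractor-peeling argument) for the quantitative analysis. The gluing subtlety you flag is precisely case~(iii) in the paper's proof of Lemma~\ref{lem:WinningRegion}, so nothing is missing.
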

We note that Theorem~\ref{thm:AnalysisOfQuantitativeObj} provides a recipe for the construction of $\epsilon$-optimal strategies.
If the infimum value that player 1 can achieve is $\nu$, then the process that enumerates all $\sigma\in\FiniteOne$ and halts if the one-player solution of $G^\sigma$ is at most $\nu + \epsilon$ will always terminate.
Similarly, if the boolean analysis problem is decidable, then it is possible to effectively construct a finite-memory strategy that assures the corresponding threshold (we first check if such a strategy exists, and if it does exist, then we enumerate all finite-memory strategies until we find a strategy $\sigma$ such that the solution for $G^\sigma$ is at most $\nu$).
\section{Games with Mean-Payoff Expression Objectives}\label{sect:OnePlayerMPExpProp}
In this section, we give the formal definition of mean-payoff expressions and we use the results of Section~\ref{sect:Generic} to analyze games with mean-payoff expressions.
In Subsection~\ref{subsec:MPExp} we define mean-payoff expressions and show that optimal strategies may require infinite memory.
In Subsection~\ref{subsec:MPExpGames} we analyze mean-payoff expression games.
\subsection{Mean-payoff expression objectives}\label{subsec:MPExp}
The class of mean-payoff expressions is the closure of single dimension mean-payoff objectives under the algebraic operations of $\MIN,\MAX,\SUM$ and numerical complement.
Formally, for an infinite sequence of reals $\rho = a_1,a_2,\dots \in \R^\omega$, we denote $\LimInfAvg(\rho) = \liminf_{n\to\infty} \frac{a_1+\dots+a_n}{n}$ and
$\LimSupAvg(a_1,a_2,\dots) = \limsup_{n\to\infty} \frac{a_1+\dots+a_n}{n}$.
For an infinite sequence of vectors $\rho = v_1,v_2\dots\in(\R^k)^\omega$ we denote be the projection of $\rho$ to the $i$-th dimension by $\rho_i$, and we denote $\LimInfAvg_i(\rho) = \LimInfAvg(\rho_i)$ and $\LimSupAvg_i(\rho) = \LimSupAvg(\rho_i)$.
An \emph{atomic expression} over $\R^k$ is either $\LimInfAvg_i$ or $\LimSupAvg_i$.
If $E_1$ and $E_2$ are expressions, then $-E_1$, $\MAX(E_1,E_2)$, $\MIN(E_1,E_2)$ and $\SUM(E_1,E_2)$ are also expressions.
For a sequence $\rho\in(\R^k)^\omega$ and an expression $E$, the value of $E(\rho)$ is $\LimInfAvg_i(\rho)$ if $E =\LimInfAvg_i$, $\LimSupAvg_i(\rho)$ if $E =\LimSupAvg_i$, $\OP(E_1(\rho),E_2(\rho))$ if $E=\OP(E_1,E_2)$ (for $\OP\in\{\MIN,\MAX,\SUM\})$ and $-E_1(\rho)$ if $E = -E_1$.
Over $\R^2$, a possible expression is $E=\MIN(\LimInfAvg_1,\LimSupAvg_1 + \LimInfAvg_2) + \MAX(\LimInfAvg_1,\LimSupAvg_2)$, and the value of $E$ for the sequence $(-1,1)^\omega$ is $\MIN(-1,-1+1)+\MAX(-1,1) = 0$.

We say that an expression $E$ is of \emph{normal form} if (i)~the numerical complement does not occur in $E$; and (ii)~for every dimension $i$, there is at most one occurrence of an atomic expression $A_i\in\{\LimInfAvg_i,\LimSupAvg_i\}$; and (iii) $E=\MAX(E_1,\dots,E_\ell)$, where $E_i$ is a \emph{max-free} expression (that is, the $\MAX$ operator does not occur in $E_i$).
The next simple lemma shows that w.l.o.g we may consider only games over normal form expressions.
\begin{lem}\label{lem:MPExpNF}
For every $k$-dimensional weighted graph $G$ with a weight function $w$ and an expression $E$, we can effectively construct an $m$-dimensional weight function $w'$ and a normal form expression $F$ such that every infinite path in $G$ gets the same value according to $(E,w)$ and according to $(F,w')$.
\end{lem}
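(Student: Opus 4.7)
The plan is to transform $(E,w)$ into $(F,w')$ in three syntactic passes, each preserving the value of every infinite path while moving the pair closer to normal form. The weight function $w'$ will live over a higher dimension $m \geq k$ obtained by copying and/or negating some of the original coordinates, so that the value of each atomic expression in $F$ equals the intended atomic value computed on the original $w$.

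\textbf{Pass 1 — eliminate numerical complement.} I push negations down to the atomic level using the identities $-\MAX(E_1,E_2)=\MIN(-E_1,-E_2)$, $-\MIN(E_1,E_2)=\MAX(-E_1,-E_2)$, $-\SUM(E_1,E_2)=\SUM(-E_1,-E_2)$, and $-(-E_1)=E_1$, until each $-$ stands in front of an atomic expression. Then I use $-\LimInfAvg_i = \LimSupAvg(-w_i)$ and $-\LimSupAvg_i = \LimInfAvg(-w_i)$: for every such negated atomic, I append a fresh coordinate $j$ to the weight function defined by $w'_j(e) := -w_i(e)$ and replace $-\LimInfAvg_i$ by $\LimSupAvg_j$ (and symmetrically for $-\LimSupAvg_i$). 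After this pass, property~(i) holds; (ii) and (iii) may still fail.

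\textbf{Pass 2 — ensure each atomic occurs at most once.} If an atomic expression over some dimension $i$ occurs $t>1$ times in the current expression, I add $t-1$ duplicate coordinates to $w'$ that copy $w_i$, and replace each occurrence by an atomic over its own private dimension. Since duplicated coordinates carry identical weights, every occurrence keeps its original value on every infinite path, so property~(ii) now holds while (i) is preserved.

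\textbf{Pass 3 — lift $\MAX$ to the top.} The key observation is that both remaining binary operators distribute over $\MAX$:
\begin{align*}
\MIN(A,\MAX(B,C)) &= \MAX(\MIN(A,B),\MIN(A,C)),\\
\SUM(A,\MAX(B,C)) &= \MAX(\SUM(A,B),\SUM(A,C)),
\end{align*}
pointwise for every infinite path (the $\SUM$ identity holds because $x + \max(y,z) = \max(x+y,x+z)$). I repeatedly rewrite innermost subterms of the form $\MIN(\cdot,\MAX(\cdot,\cdot))$, $\MIN(\MAX(\cdot,\cdot),\cdot)$, $\SUM(\cdot,\MAX(\cdot,\cdot))$ and $\SUM(\MAX(\cdot,\cdot),\cdot)$ until no $\MAX$ occurs below a $\MIN$ or $\SUM$; the result has the form $\MAX(E_1,\dots,E_\ell)$ with each $E_i$ max-free, giving (iii). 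This rewriting duplicates atomic occurrences, so I re-run Pass~2 afterwards to restore property~(ii); Pass~1 is not disturbed since no new negations are introduced.

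The only real subtlety is termination and bookkeeping in Pass~3: a standard induction on a well-chosen measure (for instance, the multiset of depths of $\MAX$-nodes, ordered lexicographically, or the number of $\MAX$-nodes that lie strictly below a $\MIN$ or $\SUM$) strictly decreases with each rewrite, guaranteeing termination. The blow-up can be exponential in the nesting depth of $E$, but the construction is nonetheless effective, which is all the lemma requires. The final normal-form expression $F$ and weight function $w'$ obtained after Passes~1, 3, 2 (in that order, re-applying Pass~2 after Pass~3) satisfy $F(\rho') = E(\rho)$ for every infinite path in $G$, where $\rho'$ is $\rho$ read under the expanded weight function $w'$.
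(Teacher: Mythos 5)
Your proof is correct and follows essentially the same route as the paper's: duplicate weight dimensions so each atomic occurrence gets its own coordinate, push negations down to the atoms and absorb them into negated copies of the relevant coordinates via $-\LimInfAvg_i = \LimSupAvg(-w_i)$, and distribute $\MIN$ and $\SUM$ over $\MAX$ to pull $\MAX$ to the top. You are in fact slightly more careful than the paper in two spots it glosses over — you correctly dualize $\MIN$/$\MAX$ when pushing a negation through them, and you re-run the dimension-duplication pass after the $\MAX$-lifting step, which otherwise reintroduces repeated atomic occurrences.
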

\begin{proof}
We can easily overcome the restriction on the number of atomic expressions per dimension by creating several \emph{copies} of the same dimension (that is, additional dimensions with weights that are identical to the original dimension).
We can create an equivalent numerical complement free expression by the following recursive process.
If $E = -\LimInfAvg_i$ (respectively, $E=-\LimSupAvg_i$), then we multiply all the weights in dimension $i$ by $-1$ and define $F=\LimSupAvg_i$ (resp. $F=\LimInfAvg_i$).
$F$ is equivalent to $E$ since $\LimInfAvg(a_1,a_2,\dots) = -\LimSupAvg(-a_1,-a_2,\dots)$.
If $E=-\OP(E_1,E_2)$, then we recursively change the weights and construct normal form expressions $F_1$ and $F_2$ that are equivalent to $-E_1$ and $-E_2$, and return the normal form expression $F=\OP(F_1,F_2)$.
And we similarly handle the expression $E=\OP(E_1,E_2)$.
Finally, if we have a numerical complement free expression $E$, then we construct an equivalent expression $F=\MAX(F_1,\dots,F_\ell)$, where $F_i$ is a max free expression, by the following recursive procedure:
If $E$ is an atomic expression, then we return $F=\MAX(E,E)$.
If $E=\OP(E_1,E_2)$, then we recursively construct two expressions $F_1$ and $F_2$, such that $F_i$ is equivalent to $E_i$ and $F_1 = \MAX(G_1,\dots,G_r)$, $F_2 = \MAX(H_1,\dots,H_q)$ (where $H_i$ and $G_i$ are max-free expressions), and we return $F=\MAX_{i\in\RangeSet{1}{r},j\in\RangeSet{1}{q}}\{\OP(G_i,H_j)\}$.
\end{proof}
Hence, in the rest of the paper we will assume w.l.o.g that all the expressions are of normal form.
The next example shows that optimal strategies for mean-payoff expressions may require infinite memory.
\begin{examp}\label{examp:InfiniteMem}
Consider the game graph in Figure~\ref{fig:InfiniteMemory} and the expression $E=\MAX(\LimInfAvg_1,\LimInfAvg_2)$.
In this game graph there is only one vertex that is controlled by player~1 and two self-loop edges, namely $e_1$ with $w(e_1) = (9,1)$ and $e_2$ with $w(e_2)=(1,9)$.
We first observe that any finite-memory strategy gives a value of at least $5$ to $E$.
Indeed, a finite-memory strategy induces an ultimately periodic path $\pi$ with $\LimInfAvg(\pi) = \alpha w(e_1) + (1-\alpha)w(e_2)$ for some $\alpha\in[0,1]\cap \Q$.
Hence, $E(\pi) = \MAX( 9\alpha + 1 - \alpha, \alpha + 9 - 9\alpha) = \MAX(8\alpha + 1, 9 - 8\alpha)$, and the minimum value for $E$ is obtained when $\alpha = \frac{1}{2}$ and we get that the minimal value for $E$ is $5$.
We now describe a player-1 infinite-memory strategy that gives a value of at most $2$ to $E$.
The strategy is simple. It follows $e_2$ as long as the average weight in the first dimension is more than $2$, then it follows $e_1$ as long as the average weight in the second dimension is more than $2$, and this process is repeated forever (i.e., $e_2$ is followed for a while, then $e_1$ and so on).
Clearly, in the formed path $\pi$ the average weight of the first dimension is at most $2$ for infinitely many prefixes of $\pi$.
Hence $\LimInfAvg_1(\pi) \leq 2$, and by the same arguments $\LimInfAvg_2(\pi) \leq 2$.
Thus, $E(\pi)\leq 2$, and we establish the fact that optimal strategies may require infinite-memory strategies (in this example, the presented infinite-memory strategy is not optimal, but we demonstrated that the best finite-memory strategy does not give an optimal value).
\begin{figure}
\begin{center} 
  \begin{gpicture}[name=loop](20,10)
	\node(P)(10,0){}
	\drawloop[loopangle=180](P){$(9,1)$}
	\drawloop[loopangle=0](P){$(1,9)$}
  \end{gpicture}
\end{center}
 \caption{}\label{fig:InfiniteMemory}
\end{figure}
\end{examp}
\subsection{Synthesis of a finite-memory controller for mean-payoff expression objectives}\label{subsec:MPExpGames}
In this subsection we apply Theorem~\ref{thm:AnalysisOfQuantitativeObj} to mean-payoff expression objectives.
We first prove that the solution for mean-payoff expressions satisfies Properties~1-3 , and thus the quantitative analysis problem is computable for mean-payoff expression games.
We then show that the boolean analysis problem is inter-reducible with Hilbert's tenth problem over rationals ($\HQ$) by showing that an effective algorithm for $\HQ$ implies that mean-payoff expressions satisfy Property~4, and by a reduction from $\HQ$ to mean-payoff expression games.

One-player games were solved in~\cite{mean-payoff-Automaton-Expressions} and in~\cite{YaronMPExp}.
We present our solution from~\cite{YaronMPExp} to establish properties of the one-player solution.
For an expression $E$ and a one-player game $(G,v_0)$, that is, a game over graph $G$ with initial vertex $v_0$, we say that a threshold $\nu$ is \emph{feasible} if the player has a strategy that achieves a value at least $\nu$ (we recall that in the one-player setting, the player aim to maximize the value of the objective).
The \emph{max-free} constraints were presented in~\cite{YaronMPExpFull} (Section A.4), and they describe the feasible thresholds of a max-free expression (a threshold $\nu$ is feasible if the one-player can achieve a value of at least $\nu$).
\begin{defn}[Max-free constraints]\label{def:maxFreeCon}
Let $G$ be a strongly-connected $k$-dimensional game graph,
and we recall that $C(G)$ is the set of simple cycles of $G$.
Let $E$ be a max-free expression such that the first $j$ dimensions of $G$ occur in $E$ as lim-inf (and the others as lim-sup).
We define a variable $X^i_c$ for every simple cycle $c$ and index $i\in\RangeSet{j+1}{k}$, and we define a vector of variables $\VEC{r}=(r_1,\dots,r_{2k})$.
Then the max-free constraints for threshold $\nu\in\Q$ are
\begin{enumerate}
\item $\sum_{c\in C(G)} X_c^i \Avg_m(c) \geq r_m
\mbox{  for every $i\in\Set{j+1,\dots,k}$ }\\ \mbox{and $m\in\Set{1,\dots,j,i}$}$
\item $\sum_{c\in C(G)} X_c^i = 1 \mbox{ for every $i\in\Set{j+1,\dots,k}$ }$
\item $X_c^i\geq 0 \mbox{ for every $i\in\Set{j+1,\dots,k}$ and $c\in C(G)$}$
\item $M_E \times \VEC{r} \geq (0,\dots,0,\nu)^T$
\end{enumerate}
\end{defn}
where $M_E$ is a matrix that is independent of the graph, and computable from $E$.
(We note that in~\cite{YaronMPExpFull}, the first type of constraints was $\sum_{c\in C(G)} X_c^i w_m(c)\geq r_m$, where $w_m$ is the projection of $w$ to the $m$-th dimension, and the second type of constraints was $\sum_{c\in C(G)}|c|X_c^i = 1$.
It is straight forward to verify that the constraints are equivalent --- in terms of feasibility.
In addition, the fourth constraint was presented as $M_E\times\VEC{r}\geq \VEC{b}_\nu$; but the proof of Lemma 7 in~\cite{YaronMPExpFull} implies that $\VEC{b}_\nu = (0,\dots,0,\nu)^T$.)
We proved in~\cite{YaronMPExp} that a threshold $\nu$ is feasible if and only if the corresponding max-free constraints are feasible.
For a max-free expression $E$, a strongly-connected graph $G$ and a threshold $\nu$, we denote the max-free constraints by $\MaxFreeConst(E,G,\nu)$ and we observe that for a (normal-form) mean-payoff expression $E=\MAX(E_1,\dots,E_\ell)$ and a strongly-connected graph $G$, the solution function for the one-player game is
$f(G)= \max \{\nu \in \R \mid \exists i\in\RangeSet{1}{\ell} \mbox{ s.t }\MaxFreeConst(E_i,G,\nu)\mbox{ is feasible}\}$.
By the definition of the max-free constraints, it easily follows that the solution is a function that is first-order definable and continuous (i.e., it satisfies Properties~1 and 3).
In the next Lemma we prove that the solution also satisfies the second property.
\begin{lem}\label{lem:DownWardCloseOfMP}
Let $E$ be a mean-payoff expression over $k$ dimensions, and let $f$ be its one-player solution function.
Then for every two strongly-connected graphs $G$ and $H$: if $\CONV(H)\subseteq \CONV(G)$, then $f(H) \leq f(G)$.
\end{lem}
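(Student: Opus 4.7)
The plan is to reduce the statement to the linear-programming characterization of one-player mean-payoff expression games recalled just above: writing $E = \MAX(E_1,\ldots,E_\ell)$ in normal form, one has $f(X) = \max\{\nu : \exists s,\ \MaxFreeConst(E_s, X, \nu) \text{ feasible}\}$ for every strongly-connected graph $X$. So it suffices to show that for each index $s$, feasibility of $\MaxFreeConst(E_s, H, \nu)$ entails feasibility of $\MaxFreeConst(E_s, G, \nu)$; once this is established, the maximum over $s$ and feasible $\nu$ can only grow when $H$ is replaced by $G$, giving $f(H) \leq f(G)$.

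To carry out the reduction, I would take a feasible solution $\bigl((X_c^i)_{c \in C(H),\ i \in \{j+1,\ldots,k\}}, \VEC{r}\bigr)$ of $\MaxFreeConst(E_s, H, f(H))$ and, for each lim-sup index $i$, form the point
\[
p_i := \sum_{c \in C(H)} X_c^i \Avg(c),
\]
which, by constraints~(2) and~(3), is a convex combination of $\Avg(H)$ and hence lies in $\CONV(\Avg(H)) = \CONV(H) \subseteq \CONV(G) = \CONV(\Avg(G))$. The definition of $\CONV$ then supplies non-negative coefficients $(Y_c^i)_{c \in C(G)}$ summing to $1$ with $\sum_{c \in C(G)} Y_c^i \Avg(c) = p_i$. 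The candidate feasible solution for $G$ is $\bigl((Y_c^i), \VEC{r}\bigr)$, reusing the \emph{same} vector $\VEC{r}$ as for $H$.

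What remains is routine verification of the four constraint families of $\MaxFreeConst(E_s, G, f(H))$. Constraints~(2) and~(3) hold by construction of the $Y_c^i$. Constraint~(1) follows from the chain of equalities and inequalities $\sum_{c \in C(G)} Y_c^i \Avg_m(c) = (p_i)_m = \sum_{c \in C(H)} X_c^i \Avg_m(c) \geq r_m$, where the last step uses the same constraint for $H$. Constraint~(4), $M_{E_s} \VEC{r} \geq (0,\ldots,0,f(H))^T$, is independent of the graph and so transfers automatically.

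I do not expect a real obstacle: the key conceptual step is simply noticing that the linear-programming witness for $H$ concentrates all graph-dependent information in the convex combinations $p_i$, which by hypothesis can be rewritten over $C(G)$ without disturbing $\VEC{r}$. The only point requiring bookkeeping is the split between lim-inf dimensions $1,\ldots,j$, where the coordinates of the $r_m$ must match across all choices of $i$, and the lim-sup dimensions $j+1,\ldots,k$, each carrying its own probability vector $X_c^i$ (and hence its own $Y_c^i$); no quantitative difficulty arises from this.
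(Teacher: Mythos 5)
Your proposal is correct and follows essentially the same route as the paper's proof: reduce to the max-free components, observe that the graph-dependent content of a feasible witness is concentrated in the convex combinations $\sum_{c\in C(H)} X_c^i \Avg(c)$, and use $\CONV(H)\subseteq\CONV(G)$ to re-express each such point over the cycles of $G$ while keeping the same vector $\VEC{r}$. Your write-up merely makes explicit the per-dimension bookkeeping that the paper's proof states in one sentence.
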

\begin{proof}
Since we assume that $E=\MAX(E_1,\dots,E_n)$, where $E_i$ is a max-free expression, it is enough to prove that if a threshold $\nu$ is feasible in $H$ for the max-free expression $E_i$, then it is also feasible in $G$.
Let $c_1^G,\dots,c_n^G$ and $c_1^H,\dots,c_m^H$ be the simple cycles of $G$ and $H$ respectively.
We note that since $\CONV(H)\subseteq \CONV(G)$, then for every convex combination $x_1,\dots,x_m$, there is a convex combination $y_1,\dots,y_n$ such that $\sum_{i=1}^m x_i\Avg(c_i^H) = \sum_{i=1}^n y_i \Avg(c_i^G)$.
Hence, a solution for the max-free constraints over graph $H$ induces a solution for the max-free constraints over $G$ (by replacing, in the inequalities of constraints~1 over graph $H$, every convex combination of cycles of $H$ by the corresponding convex combination of cycles of $G$).

Thus, every threshold that is feasible for $H$ is also feasible for $G$, and the proof follows.
\pfbox
\end{proof}
Hence, the one-player solution function of mean-payoff expressions satisfies Properties~1-3 and the next theorem follows.
\begin{thm}\label{thm:MPExpOptimalComputable}
The quantitative analysis problem for mean-payoff expression games (where player 1 is restricted to finite-memory strategies) is computable.
\end{thm}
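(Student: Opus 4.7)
The plan is to assemble the results already established in Section~\ref{sect:OnePlayerMPExpProp} and feed them into the generic machinery of Theorem~\ref{thm:AnalysisOfQuantitativeObj}. By Lemma~\ref{lem:MPExpNF}, I may restrict attention to normal-form expressions $E=\MAX(E_1,\dots,E_\ell)$ where each $E_i$ is max-free. For a strongly connected graph $G$, the one-player solution is
\[ f(G)=\max\{\nu\in\R\mid \exists i\in\{1,\dots,\ell\}\ \text{s.t.}\ \MaxFreeConst(E_i,G,\nu)\ \text{is feasible}\}, \]
so it suffices to verify that $f$ satisfies the three hypotheses of Theorem~\ref{thm:AnalysisOfQuantitativeObj}.

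First I would verify Property~1 (first-order definability). Inspecting Definition~\ref{def:maxFreeCon}, the constraints $\MaxFreeConst(E_i,G,\nu)$ are a finite conjunction of linear (in)equalities whose coefficients are exactly the entries of $\Avg(c)$ for the simple cycles $c$ of $G$. Given $n$ vectors $\VEC{x_1},\dots,\VEC{x_n}$ playing the role of $\Avg(G)$, the feasibility of any such system is expressible by existentially quantifying over the variables $X_c^i$ and $\VEC r$; taking a disjunction over $i\in\{1,\dots,\ell\}$ and a $\max$ over feasible $\nu$ yields a first-order formula $\zeta_n(\VEC{x_1},\dots,\VEC{x_n},y)$ over $\langle\R,<,+,\times\rangle$ that is uniformly computable from $n$ and $E$. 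Property~2 (monotonicity in $\CONV$) is exactly the content of Lemma~\ref{lem:DownWardCloseOfMP}.

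Next I would address Property~3 (continuity of $\zeta_n$). Each $\zeta_n$ is a $\max$ over $i$ of the supremum of $\nu$ subject to linear constraints whose coefficients depend continuously (indeed linearly) on the input vectors, with only the right-hand side $\nu$ varying. Because the feasible set is a closed polyhedron depending continuously on the coefficients, and because the fourth group of max-free constraints $M_{E_i}\VEC r\ge (0,\dots,0,\nu)^T$ depends linearly on $\nu$, the optimal $\nu$ is a continuous function of the input vectors. Taking a finite $\max$ preserves continuity, so $\zeta_n$ is continuous.

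Finally, since $f$ restricted to strongly connected graphs satisfies Properties~1--3 and $E$ is prefix-independent (making $f$ on a general graph the maximum of $f$ over SCCs, as discussed in Subsection~\ref{subsec:ConvObj}), Theorem~\ref{thm:AnalysisOfQuantitativeObj} yields the computability of $\inf_{\sigma\in\FiniteOne}\ValP_\sigma$. I expect no serious obstacle: the real work was done in Section~\ref{sect:Generic} and in Lemma~\ref{lem:DownWardCloseOfMP}; the only slightly delicate point is making continuity of $\zeta_n$ airtight, which is why I would spell out that the feasible polyhedron varies continuously in the coefficient vectors rather than appealing vaguely to linearity.
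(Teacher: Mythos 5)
Your proposal is correct and follows essentially the same route as the paper: reduce to normal form via Lemma~\ref{lem:MPExpNF}, observe that Properties~1 and~3 follow from the structure of the max-free constraints, obtain Property~2 from Lemma~\ref{lem:DownWardCloseOfMP}, and invoke Theorem~\ref{thm:AnalysisOfQuantitativeObj}. The only difference is that you spell out the first-order definability and continuity arguments in more detail than the paper, which simply asserts that they "easily follow" from Definition~\ref{def:maxFreeCon}; your extra care on continuity (noting that the feasible region varies continuously with the cycle averages because the $X_c^i$ range over a compact simplex) is a reasonable tightening rather than a departure.
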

We now show that the solution for one-player mean-payoff expression games satisfies Property~4 if and only if $\HQ$ is decidable.
We first prove the direction from right to left.
\begin{lem}\label{lem:IfHQThenUltra}
If $\HQ$ is decidable, then mean-payoff expressions satisfy the fourth property.
\end{lem}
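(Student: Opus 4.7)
The plan is to cast the property-4 feasibility question as an existential sentence in the arithmetic of $(\Q,0,1,+,\cdot,\leq)$, and then to reduce that sentence to $\HQ$ by the classical trick of encoding rational inequalities as Diophantine equations.

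First I would exploit the max-free characterisation of the one-player value. Writing the normal-form expression as $E=\MAX(E_1,\dots,E_\ell)$, the value at a strongly connected graph with cycle averages $\VEC{v_1},\dots,\VEC{v_n}$ is $\zeta_n(\VEC{v_1},\dots,\VEC{v_n})=\max_j V_j$, where $V_j$ is the supremum of the last entry of $M_{E_j}\VEC{r}$ over the polyhedron $P_j(\VEC{v_1},\dots,\VEC{v_n})$ of tuples $(X_c^i,\VEC{r})$ satisfying constraints~1--3 and the first $k$ rows of constraint~4 of $\MaxFreeConst(E_j,\cdot,0)$. Hence the target condition $\zeta_n(A_1x_1,\dots,A_nx_n)\leq\nu$ is the conjunction over $j\in\{1,\dots,\ell\}$ of the universal statement ``for every $(X_c^i,\VEC{r})\in P_j(A_1x_1,\dots,A_nx_n)$, the last entry of $M_{E_j}\VEC{r}$ is at most $\nu$''.

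Next I would remove the universal quantifier via Farkas' Lemma. For each $j$, the universal statement is equivalent to the existence of a dual vector $z_j$, of fixed combinatorial dimension independent of the $x_i$, satisfying a finite system of linear inequalities whose coefficients are linear in the cycle averages. After substituting $\VEC{v_t}=A_i\cdot x_i$, these coefficients become linear forms in $x_i$, so the joint system on $(x_1,\dots,x_n,z_1,\dots,z_\ell)$ becomes a conjunction of bilinear (hence degree-$2$ polynomial) inequalities with rational coefficients. Adding the rational simplex interior constraints $\sum_k x_{i,k}=1$ and $x_{i,k}>0$, one obtains an existential first-order sentence $\phi$ over $(\Q,0,1,+,\cdot,\leq,<)$ whose satisfiability is equivalent to the desired feasibility question. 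Finally I would reduce $\phi$ to a $\HQ$ instance: the rational analogue of Lagrange's four-square theorem gives $p\geq 0 \iff \exists a,b,c,d.\,p=a^2+b^2+c^2+d^2$, and $p\neq 0 \iff \exists y.\,p\cdot y=1$ handles disequality; together these encode every (strict or weak) inequality as a polynomial equation, yielding a purely equational existential formula over $\Q$, which the assumed $\HQ$-oracle decides.

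The principal technical difficulty lies in the Farkas step: because the coefficients of the primal polyhedron depend themselves on the unknown $x_i$, one must argue that the dual LP still has a combinatorial shape independent of $x_i$ so that the certificate is expressible by a single finite formula, and one must handle the corner cases of LP duality (primal unbounded or infeasible) by inspection or by a mild normalisation of the max-free constraints (e.g.\ by first bounding the $X_c^i$ and $\VEC{r}$ variables without changing feasibility). Once this is in place, the remaining reductions are routine manipulations in the existential theory of rational arithmetic.
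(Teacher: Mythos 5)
Your proposal follows essentially the same route as the paper's proof: both express $\zeta_n(A_1 x_1,\dots,A_n x_n)\leq\nu$ via the max-free constraints as the infeasibility/boundedness of finitely many linear systems whose coefficients are linear in the cycle averages, obtain an existential dual certificate of fixed combinatorial shape (so that substituting $A_i\cdot x_i$ yields a bilinear system over $\Q$), and then encode the resulting inequalities as a Diophantine equation via the rational four-square trick. The only cosmetic difference is that the paper invokes Motzkin's Transposition Theorem on the mixed strict/non-strict infeasibility system rather than Farkas/LP-duality on the supremum, which handles the strict inequality $y>\nu$ and the primal-infeasible/unbounded corner cases you defer to \emph{normalisation} in a single stroke.
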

\begin{proof}
Let $G$ be an arbitrary strongly connected graph with $n$ simple cycles, let $C(G) = \{c_1,\dots,c_n\}$ be its set of simple cycles,
let $E=\MAX(E_1,\dots,E_m)$ be a mean-payoff expression (where $E_i$ is a max-free expression), and let $\nu$ be a rational threshold.
We recall that the sentence $(\zeta_n(\Avg(c_1),\dots,\Avg(c_n)\leq \nu)$ is equivalent to the statement:
"For every $y>\nu$ and $i\in\{1,\dots,m\}$, the constraints $\MaxFreeConst(E_i,G,y)$ are infeasible."
By the definition of the max-free constraints, when the set $\Avg(G)$ is fixed the above statement is easily reduced to the infeasibility of $m$ linear systems.
Motzkin's Transposition Theorem~(e.g., Theorem~1 in~\cite{ben2001motzkin}) gives a witness to the infeasibility of a set of linear inequalities.
We use Lagrange four-square Theorem to construct a Diophantine equation that has a rational root if and only if the witness exists.
We show that the construction works also when $\Avg(G)$ is not fixed, i.e., when $\Avg(G) = \{A_1\VEC{x_1},\dots,A_1\VEC{x_n}\}$ for some $n$ matrices $A_1,\dots,A_n$ and $n$ vectors of rational variables $\VEC{x_1},\dots,\VEC{x_n}\in \RatSimplex$.
(The details of the construction are given in the appendix.)
Hence, if $\HQ$ is decidable, then mean-payoff expressions satisfy the fourth property.
\pfbox
\end{proof}
We now prove the reduction from $\HQ$ to the boolean analysis of mean-payoff expression games, and we show that there is a reduction even for a simpler subclass of mean-payoff expressions.
An expression $E$ is $\SUM$-free and $\LimInfAvg$-only if only the $\MIN$ and $\MAX$ operators occur in $E$ and all the atomic expressions in $E$ are of the form $\LimInfAvg_i$. (In addition, the numerical complement operator also does not occur).
The next lemma shows that the boolean analysis problem for $\SUM$-free $\LimInfAvg$-only expressions is $\HQ$-hard.
\begin{lem}\label{lem:SumFreeInfOnlyHQ}
If the boolean analysis problem is decidable for $\SUM$-free $\LimInfAvg$-only expressions, then $\HQ$ is decidable.
\end{lem}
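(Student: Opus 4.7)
The plan is to reduce $\HQ$ to the boolean analysis problem for $\SUM$-free $\LimInfAvg$-only mean-payoff expression games. Given a polynomial $p \in \Z[y_1,\dots,y_n]$, I will construct a game graph $G_p$ and a $\SUM$-free $\LimInfAvg$-only expression $E_p$, together with a threshold $\nu = 0$, such that player~1 has a finite-memory strategy $\sigma$ with $\Val_\sigma \leq 0$ if and only if $p(y_1,\dots,y_n) = 0$ has a rational solution. The matching with rational (rather than real) solutions is exactly why the finite-memory restriction matters: a finite-memory strategy induces an ultimately periodic play, whose cycle frequencies are rationals with bounded denominator, so player~1's encoded values are automatically rational.

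First, I would normalize the input by rewriting $p(\VEC y)=0$ as a system of elementary constraints involving only (i) pairwise products $z = y_i \cdot y_j$ and (ii) affine combinations of variables, by introducing auxiliary variables for each sub-monomial. Then each variable $y_i$ is encoded as a long-run frequency in a dedicated gadget where player~1 chooses between two edges with relative frequencies $y_i$ and $1-y_i$; affine combinations are realized for free by loading suitably scaled weights into distinguished coordinates of $w$, since $\LimInfAvg_i$ of a periodic play is already a linear combination of cycle averages. The winning threshold $\nu=0$ combined with $\MAX(\LimInfAvg_{i_1},\dots,\LimInfAvg_{i_m}) \leq 0$ then enforces a conjunction of affine inequalities, so each elementary constraint $a \leq b$ is expressible as $\LimInfAvg_i \leq 0$ in a dedicated dimension $i$. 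Equations $a = b$ become the conjunction of $a-b\leq 0$ and $b-a\leq 0$ (two separate dimensions), which the outer $\MAX$ handles.

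The crucial and most delicate step is the multiplication gadget enforcing $z = y_i\cdot y_j$ without access to $\SUM$ or numerical complement. For this I would use a nested choice gadget: player~2 owns an entry vertex from which she can select between product-testing branches, and within each branch player~1 re-enters the $y_i$- and $y_j$-gadgets sequentially, so that the long-run frequency of reaching a designated \emph{product-cycle} equals $y_i\cdot y_j$. A dedicated dimension is weighted so that its $\LimInfAvg$ under any finite-memory strategy equals $z - y_i y_j$ (and a second dimension for the reverse inequality). Because player~2 can force player~1 into either branch, the outer $\MAX$ structure of $E_p$ propagates over all branches simultaneously, so the constraint $z = y_i y_j$ is enforced for player~1's chosen encoding; conversely, any rational solution of $p=0$ can be implemented by a finite-memory strategy realizing the corresponding cycle frequencies (with a common denominator for the rational values).

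The main obstacle is implementing product tests under the double handicap of no $\SUM$ and no numerical complement: all linear algebra must be offloaded into the weight function $w$, and equalities must be split into two inequalities that $\MAX$ conjoins. I expect the bookkeeping to be intricate when composing many gadgets (one per pairwise product introduced in the normalization), because the nested gadgets must share player~1's single global finite-memory strategy in a consistent way, so that the frequencies $y_i$ extracted across different product gadgets coincide. This consistency can be enforced by adding, for every pair of occurrences of the same variable, an equality constraint through the same $\MAX$/inequality mechanism used for the product constraints, ensuring that player~1's optimal response assigns a single rational value to each $y_i$ across the whole game.
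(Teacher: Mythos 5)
Your overall architecture --- normalize $p=0$ into affine constraints plus pairwise products with auxiliary variables, encode each variable as a long-run cycle frequency, and use the threshold $0$ with an outer $\MAX$ over dedicated dimensions to conjoin affine inequalities --- matches the skeleton of the paper's reduction. But the step you yourself flag as "crucial and most delicate," the multiplication gadget, does not work as described, and this is where the entire difficulty of the lemma lives. A finite-memory strategy induces an ultimately periodic play, and the $\LimInfAvg$ of every dimension is a \emph{linear} (affine) function of the edge/cycle frequencies of the period. So no choice of weights can make a single dimension's $\LimInfAvg$ equal $z - y_i y_j$, which is bilinear in the frequencies. Likewise, sequentially re-entering the $y_i$- and $y_j$-gadgets does not make the frequency of a designated "product cycle" equal $y_i\cdot y_j$: frequencies multiply for independent events in a stochastic process, not for edge counts in a deterministic periodic path. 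As written, your gadget would only ever test linear combinations of frequencies, which cannot capture $\HQ$.

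The paper's mechanism for products is genuinely different and is the key idea you are missing. It first massages $\HQ$ (through a chain of intermediate problems) so that every nonlinear constraint has the \emph{cross-product} form $q_i p_j = q_k p_\ell$, with the $q$-variables and $p$-variables realized as cycle frequencies in two \emph{separate} player-1 components hanging off a single player-2 vertex. The product equality is then equivalent to a ratio equality $q_i/q_k = p_\ell/p_j$, and ratio equalities between frequencies of two different components \emph{can} be tested linearly: because player 2 may alternate between the two components with arbitrary relative weights $m,n$, the resulting average is $m v_1 + n v_2$, and an auxiliary lemma shows that $\MAX(\MIN(x_1,x_2),\MIN(x_3,x_4))\leq 0$ holds for \emph{all} nonnegative $m,n$ exactly when the two ratios coincide. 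This is where the $\MIN$ inside the $\MAX$ and the adversary's mixing power do the multiplicative work, all while staying $\SUM$-free and $\LimInfAvg$-only. Your proposal would need to be reworked around some such universally-quantified mixing argument (or an equivalent device); without it there is no way to express a product of frequencies, and the reduction fails at its central step.
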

\begin{proof}
We only present a rough and informal sketch of the proof.
The full proof is given in the appendix.
We first show a reduction from $\HQ$ to the problem of finding a rational solution for two set of variables $Q=\{q_1,\dots,q_n\}$ and $P=\{p_1,\dots,p_n\}$ and a set of constraints, each of them is of the form:
(i)~$\sum_{i\in I} \alpha_i q_i \leq 0$, for some $I\subseteq \{1,\dots,n\}$; or
(ii)~$\sum_{i\in I} \alpha_i p_i \leq 0$, for some $I\subseteq \{1,\dots,n\}$; or
(iii)~$q_i p_j = q_k p_\ell$ for some $i,j,k,\ell\in \{1,\dots,n\}$; subject to $q_i, p_i > 0$.
We then show a reduction from the boolean analysis problem to the above problem.
We illustrate the reduction by showing the construction for the set of constraints $\{q_1 - 2q_2 \leq 0, 2p_1 - 3p_2 \leq 0, p_1q_1 = p_2q_2\}$.
For the these constraints we build a game graph $G$ that is illustrated in Figure~\ref{fig:SmallReduction}.
In the figure we explicitly show only part of the weight vectors and only part of the dimensions.
The initial vertex of $G$ is $s_0$ and this vertex is a player-2 vertex (and the rest are player-1 vertices).
The objective of the game is the expression
$E = \MAX(1,\MIN(2,3),\MIN(4,5))$, where $i$ stands for $\LimInfAvg_i$ (for $i=1,2,3,4,5)$, and the threshold is $0$.
In $G$, player 2 has only two memoryless strategies, namely $\tau_1 = s_0\to b_1$ and $\tau_1 = s_0\to a_1$.
We rely on Lemma~\ref{lem:WinningRegion} and show that player~1 has a winning strategy if and only if there is a finite path $\pi_1$ that visits only the self-loops of $b_2$ and a path $\pi_2$ that visits only the self-loops of $a_2$ such that every vector in $v\in\CONV(\Avg(\pi_1),\Avg(\pi_2))$ satisfies the winning condition, i.e., if $v=(v_1,v_2,v_3,v_4,v_5)$, then $\MAX(v_1,\MIN(v_2,v_3),\MIN(v_4,v_5))\leq 0$.
We observe that for any such path $\pi_1$ it holds that $\Avg(\pi_1) = q_1(3,1,0,-1,0) + q_2(-2,0,-1,0,1)$ and similarly $\Avg(\pi_2) = p_1(2,-1,0,1,0) + p_2(-1,0,1,0,-1)$ for some positive rational $q_1,q_2,p_1,p_2$.
We further observe that if $q_1 - 2q_2 > 0$, then $\Avg(\pi_1)$ is positive in the first dimension, and thus there is a vector $v\in\CONV(\Avg(\pi_1),\Avg(\pi_2))$ that gives a positive value to the expression.
Hence, it must hold that $q_1 -2q_2\leq 0$ and similarly $2p_1 - 3p_2 \leq 0$.
Moreover, we prove that if $p_1 q_1 \neq p_2 q_2$, then there exists $v\in\CONV(\Avg(\pi_1),\Avg(\pi_2))$ that is positive either in dimensions $2$ and $3$ or in dimensions $4$ and $5$.
Hence, it must hold that $p_1 q_1 = p_2 q_2$ and the proof follows.
 
\begin{figure}
\begin{center}
  \begin{gpicture}[name=loop](60,25)(0,20)
	\node[Nw=5,Nh=5,Nmr=0](P2)(30,30){$s_0$}

	\node[Nw=5,Nh=5](Alph1)(5,22.75){$a_1$}
	\node[Nw=5,Nh=5](Alph2)(55,22.75){$a_2$}

	\node[Nw=5,Nh=5](Bet1)(55,37.25){$b_1$}
	\node[Nw=5,Nh=5](Bet2)(5,37.25){$b_2$}

	\drawedge[ELside=r,curvedepth=0](P2,Alph1){}
	\drawedge[curvedepth=0](Alph1,Alph2){}
	\drawedge[ELside=r,curvedepth=0](Alph2,P2){}

	\drawedge[curvedepth=0](P2,Bet1){}
	\drawedge[curvedepth=0](Bet1,Bet2){}
	\drawedge[curvedepth=0](Bet2,P2){}

	\drawloop[ELpos=20,loopangle=180, loopdiam=2](Alph1){}
	\drawloop[ELpos=80,loopangle=0, loopdiam=2](Bet1){}

   \drawloop[loopangle=90, loopdiam=2](Alph2){\small{(2,-1,0,1,0)}}
   \drawloop[loopangle=270, loopdiam=2](Alph2){\small{(-1,0,1,0,-1)}}

   \drawloop[loopangle=90, loopdiam=2](Bet2){\small{(3,1,0,-1,0)}}
   \drawloop[loopangle=270, loopdiam=2](Bet2){\small{(-2,0,-1,0,1)}}
  \end{gpicture}
\end{center}
 \caption{}\label{fig:SmallReduction}
\end{figure}
\end{proof}
The next theorem summarizes the results of Lemmas~\ref{lem:SumFreeInfOnlyHQ} and~\ref{lem:IfHQThenUltra}
\begin{thm}\label{thm:HQandMPExp}
The boolean analysis problem for mean-payoff expression games (when player 1 is restricted to finite-memory strategies) is inter-reducible with $\HQ$, and it is $\HQ$-hard even for $\SUM$-free $\LimInfAvg$-only expressions.
\end{thm}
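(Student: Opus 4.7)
The plan is to obtain Theorem~\ref{thm:HQandMPExp} as a direct consequence of the two preceding lemmas combined with the generic machinery of Section~\ref{sect:Generic}. Concretely, the theorem asserts two claims: (a)~the boolean analysis problem and $\HQ$ are inter-reducible, and (b)~the $\HQ$-hardness already holds for the restricted fragment of $\SUM$-free $\LimInfAvg$-only expressions. Part (b) immediately strengthens one direction of (a), so the overall structure is a \emph{hardness} argument and a \emph{decidability-transfer} argument.

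For the reduction from $\HQ$ to the boolean analysis problem, the plan is to invoke Theorem~\ref{thm:AnalysisOfQuantitativeObj}. We already know from the discussion preceding Lemma~\ref{lem:DownWardCloseOfMP} that the one-player solution $f$ for normal-form mean-payoff expressions is first-order definable and continuous via the max-free constraints, and Lemma~\ref{lem:DownWardCloseOfMP} establishes convex monotonicity; hence $f$ satisfies Properties~1-3. Assuming $\HQ$ is decidable, Lemma~\ref{lem:IfHQThenUltra} lifts this to Property~4, and the second bullet of Theorem~\ref{thm:AnalysisOfQuantitativeObj} then yields a decision procedure for the boolean analysis problem.

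For the converse direction, and simultaneously for the stronger hardness claim, the plan is to apply Lemma~\ref{lem:SumFreeInfOnlyHQ}: any decision procedure for the boolean analysis problem over $\SUM$-free $\LimInfAvg$-only expressions is, via the gadget of Figure~\ref{fig:SmallReduction} generalized to an arbitrary system of inequalities, a decision procedure for $\HQ$. Since $\SUM$-free $\LimInfAvg$-only expressions are a syntactic subclass of mean-payoff expressions, this also establishes the first half of the inter-reducibility claim and simultaneously provides the $\HQ$-hardness for the restricted fragment.

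The only real obstacle behind this packaging is the content of the two lemmas themselves, and in particular Lemma~\ref{lem:SumFreeInfOnlyHQ}: one has to convert an arbitrary Diophantine system over $\Q$ into a conjunction of linear constraints together with a collection of equalities of the form $q_ip_j=q_kp_\ell$, and then realise those multiplicative equalities through a game graph whose only genuine choices belong to player~2, so that the conditions imposed by Lemma~\ref{lem:WinningRegion} and Lemma~\ref{lem:RealizableSetComputable} enforce exactly the desired arithmetic identities on the convex-combination coefficients. The technical heart is precisely this encoding of multiplicative constraints into $\CONV(\cdot)$-containment of cycle-average vectors; once that is in place, Theorem~\ref{thm:HQandMPExp} is a bookkeeping corollary.
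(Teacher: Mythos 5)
Your proposal is correct and follows essentially the same route as the paper: the theorem is stated there precisely as a summary of Lemma~\ref{lem:IfHQThenUltra} (which, together with Properties~1--3 already established for the max-free-constraint solution and the second bullet of Theorem~\ref{thm:AnalysisOfQuantitativeObj}, gives the reduction of the boolean analysis problem to $\HQ$) and Lemma~\ref{lem:SumFreeInfOnlyHQ} (which gives $\HQ$-hardness already for the $\SUM$-free $\LimInfAvg$-only fragment). Your identification of the gadget encoding of the multiplicative constraints $q_ip_j=q_kp_\ell$ as the technical heart is also exactly where the paper places the real work.
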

We also consider the case where both players are restricted to finite-memory strategies.
In this setting, the quantitative analysis problem is to compute
$\inf_{\sigma\in\FM_1}\sup_{\tau\in\FM_2} \Val_{\sigma,\tau}$.
The boolean analysis problem is to determine whether player 1 has a finite-memory strategy that assures a value of at most $\nu$ against any player-2 finite-memory strategy.
\begin{thm}\label{thm:BothFinite}
When both players are restricted to finite-memory strategies:
(i)~the quantitative analysis problem for mean-payoff expression games is computable; (ii)~the boolean analysis problem for mean-payoff expression games is inter-reducible with $\HQ$, and it is $\HQ$-hard even for $\SUM$-free $\LimInfAvg$-only expressions.
\end{thm}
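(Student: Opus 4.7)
The strategy is to reduce to Theorem~\ref{thm:AnalysisOfQuantitativeObj} by exhibiting the correct ``one-player solution function'' for a \emph{finite-memory} player~2. When $\sigma\in\FiniteOne$ is fixed, $G^\sigma$ is a one-player graph, and every play consistent with some $\tau\in\FiniteTwo$ is ultimately periodic with period $P$; by prefix-independence together with the identity $\LimInfAvg_i(P^\omega)=\LimSupAvg_i(P^\omega)=\Avg(P)_i$, its value is $E^*(\Avg(P))$, where $E^*:\R^k\to\R$ is obtained from the normal form of $E$ by replacing every atomic $\LimInfAvg_i$ and $\LimSupAvg_i$ by the $i$-th coordinate projection. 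Thus $E^*$ is a max--min expression over linear functionals with rational coefficients, and in particular a first-order definable and continuous function.

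For a strongly connected graph $H$ I define
\[
g(H)\;=\;\max_{v\in\CONV(H)} E^*(v).
\]
The first step of the proof is to show that $g(H)=\sup_{\tau\in\FiniteTwo}\Val_{\sigma,\tau}$ whenever $H=G^\sigma$ is strongly connected. The upper bound is immediate since $\Avg(P)\in\CONV(H)$ for every period $P$. For the lower bound, given $v\in\CONV(H)\cap\Q^k$ and $\varepsilon>0$, I write $v$ as a rational convex combination of simple-cycle averages of $H$, choose integer multiplicities of the corresponding simple cycles proportional to the chosen coefficients, and splice the traversals together using short connecting paths in $H$; scaling the multiplicities drives the overhead of the connecting paths to zero and yields a cyclic walk $P$ with $\Avg(P)$ within $\varepsilon$ of $v$. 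Density of $\Q^k$ in $\CONV(H)$ and continuity of $E^*$ on the compact polytope $\CONV(H)$ then give equality.

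With $g$ in hand, part~(i) will follow from Theorem~\ref{thm:AnalysisOfQuantitativeObj} as soon as I verify Properties~1--3 for $g$: first-order definability follows from first-order definability of $E^*$ and of the convex-combination quantifier; monotonicity in $\CONV$ is immediate; continuity is inherited from $E^*$ via the standard fact that the maximum of a continuous function over a continuously-parameterised compact polytope is continuous. For part~(ii), Property~4 for $g$ asks whether, given matrices $A_1,\dots,A_n$ and $\nu\in\Q$, the condition $\max_{\alpha\in\Simplex(n)} E^*(\sum_i \alpha_i A_i x_i)\leq\nu$ is feasible for $x_i\in\RatSimplex(m_i)$. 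Since $E^*$ is piecewise linear, this is structurally no harder than (and is indeed a special case of) the feasibility question solved in Lemma~\ref{lem:IfHQThenUltra}: applying Motzkin's transposition theorem to the infeasibility certificates of the associated systems of linear inequalities and then Lagrange's four-square identity reduces it to $\HQ$. Conversely, the reduction underlying Lemma~\ref{lem:SumFreeInfOnlyHQ} uses a game graph in which player~2 has only two memoryless strategies, so its $\HQ$-hardness is unaffected by restricting player~2 to finite memory, and it persists for $\SUM$-free $\LimInfAvg$-only expressions.

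The main obstacle is the identification $g(H)=\sup_{\tau\in\FiniteTwo}\Val_{\sigma,\tau}$: producing a cyclic walk whose period average approximates an arbitrary $v\in\CONV(H)$ requires controlling the weight contribution of the gluing paths and some care in the continuity argument. A secondary point is that $H=G^\sigma$ need not be strongly connected, but this is handled exactly as in Theorem~\ref{thm:AnalysisOfQuantitativeObj} by the attractor-based value-region machinery from Section~\ref{sect:Generic} applied to $g$.
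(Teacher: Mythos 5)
Your proposal is correct in outline, but it takes a genuinely different route from the paper. The paper also begins by observing that a play between two finite-memory players is ultimately periodic, so one may assume a $\LimInfAvg$-only expression; but from there it proves a single transfer lemma (Lemma~\ref{lem:FromInfToFin}): against a fixed $\sigma\in\FiniteOne$, if player~2 has \emph{any} strategy forcing $E>\nu$, then he has a finite-memory one. The proof approximates an arbitrary infinite path $\pi$ in $G^\sigma$ by a lasso $\pi_0(\pi_\epsilon)^\omega$ whose coordinatewise averages are within $\epsilon$ of $\LimInfAvg_i(\pi)$, with $\epsilon$ calibrated to the number of $\SUM$ operators. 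This shows that restricting player~2 to finite memory does not change $\sup_\tau\Val_{\sigma,\tau}$, so Theorems~\ref{thm:MPExpOptimalComputable} and~\ref{thm:HQandMPExp} apply verbatim and no new one-player analysis is needed. You instead construct the finite-memory-player-2 value function $g(H)=\max_{v\in\CONV(H)}E^*(v)$ from scratch, prove it is the correct value via a cycle-splicing density argument, and re-verify Properties~1--4 for $g$. What your route buys is self-containment on the one-player side: $g$ is just the maximum of a piecewise-linear function over a rational polytope, so Properties~1--4 are transparent and you never need the max-free-constraints characterization of the unrestricted one-player value. What the paper's route buys is brevity and the independently interesting fact that the two semantics for player~2 coincide for $\LimInfAvg$-only expressions. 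Two small points to tighten: in the splicing argument the multiplicities must be proportional to $\alpha_i/|c_i|$ (the convex coefficients in $\CONV(H)$ are length-weighted, since $\Avg(P)=\sum_i\frac{n_i|c_i|}{\sum_j n_j|c_j|}\Avg(c_i)$), not to $\alpha_i$; and your justification for persistence of the hardness reduction (``player~2 has only two memoryless strategies'') is not quite the right reason --- what is actually needed is that the correctness proof of the reduction is phrased entirely in terms of $\CONV$-based conditions on finite-memory outcomes (equivalently, that $g$ agrees with the unrestricted one-player solution $f$ on $\LimInfAvg$-only expressions, which is exactly what Lemma~\ref{lem:FromInfToFin} supplies). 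Neither point undermines the argument.
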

\begin{proof}
We present a sketch of the proof. The full proof is given in the appendix.
Informally, when both players are restricted to finite-memory strategies, the outcome of a play is an ultimately periodic path, and thus we may assume that all the atomic expressions are of the form of $\LimInfAvg$ (because for periodic paths we have $\LimInfAvg(\pi) = \LimSupAvg(\pi)$).
We also show that if all the atomic expressions are $\LimInfAvg$ and player 1 is restricted to finite-memory strategies, then player 2 can achieve a value greater than $\nu$ if and only if he can do it with a finite-memory strategy, and the proof follows.\pfbox
\end{proof}
As a final remark, we note that while the boolean analysis for $\SUM$-free $\LimInfAvg$-only expressions is $\HQ$-hard when player 1 is restricted to a finite-memory strategy (and also when both players are restricted to finite-memory strategies), the next lemma shows that the problem is decidable when both players may use arbitrary strategies.
\begin{lem}[Theorem~5 in~\cite{VelnerR11}]\label{lem:BothArb}
When both players may use arbitrary strategies, the boolean analysis of $\SUM$-free $\LimInfAvg$-only expression games is decidable.
\end{lem}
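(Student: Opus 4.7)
The plan is to bring $E$ to the normal form $E=\MAX(E_1,\dots,E_\ell)$ with each $E_i=\MIN(\LimInfAvg_{f(i,1)},\dots,\LimInfAvg_{f(i,k_i)})$ (Lemma~\ref{lem:MPExpNF} preserves the ``$\SUM$-free'' and ``$\LimInfAvg$-only'' restrictions). The boolean analysis task then asks whether player~1 can enforce the CNF condition $\bigwedge_i\bigvee_j (\LimInfAvg_{f(i,j)}\leq\nu)$, equivalently whether player~2 cannot enforce its dual DNF $\bigvee_i\phi_i$, where $\phi_i\equiv\bigwedge_j(\LimInfAvg_{f(i,j)}>\nu)$. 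Each $\phi_i$ is a conjunction of strict $\LimInfAvg$ lower bounds, i.e.\ a generalized mean-payoff objective, whose two-player boolean analysis is already known to be decidable by prior work on multidimensional mean-payoff games.

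The key step I would prove is a decomposition: \emph{player~2 has a winning strategy for $\bigvee_i\phi_i$ if and only if there exists some $i$ such that player~2 has a winning strategy for $\phi_i$ alone.} Given this, the algorithm enumerates the $\ell$ disjuncts, invokes the generalized mean-payoff solver on each, and answers ``player~1 wins'' iff none succeeds. The $(\Leftarrow)$ direction is immediate; the content is the converse.

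To prove the converse, I would argue by contrapositive. Assume player~2 loses every individual $\phi_i$; by Borel determinacy, player~1 has a strategy $\sigma_i$ forcing $\neg\phi_i$, namely $\bigvee_j(\LimInfAvg_{f(i,j)}\leq\nu)$. I would combine $\sigma_1,\dots,\sigma_\ell$ into a single strategy $\sigma$ that cycles through them in blocks whose lengths $N_t$ grow fast enough (say, $N_t=2^t$) that the tail dominates the history. Within the $t$-th $\sigma_i$-block of $\pi_\sigma$, the guarantee of $\sigma_i$ forces some index $j(t)\in\{1,\dots,k_i\}$ to realize an average at most $\nu+\varepsilon_t$ (for $\varepsilon_t\to 0$). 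Since $k_i$ is finite, some $j_i^\star$ appears as $j(t)$ for infinitely many $t$, and the doubling choice of $N_t$ makes the contribution of all non-$\sigma_i$ blocks asymptotically negligible for the running average on dimension $f(i,j_i^\star)$. Thus $\LimInfAvg_{f(i,j_i^\star)}(\pi_\sigma)\leq\nu$ for every $i$, showing that $\sigma$ wins the full CNF objective and contradicting that player~2 wins $\bigvee_i\phi_i$.

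The main obstacle is making the block-combination argument fully rigorous. Two subtleties must be handled carefully. First, player~1's guarantee from $\sigma_i$ is an asymptotic lim-inf statement about the full play, not a finite-block statement; one must argue that if $\sigma_i$ were to be continued forever after the current prefix, its guarantee would manifest as a prefix-average bound within a computable horizon depending on the weights and the memory of $\sigma_i$. Second, the witness index $j(t)$ is only guaranteed to exist, not to be detectable online, so the pigeonhole step takes place ``in hindsight''; however, the argument is non-constructive and compatible with the mere-existence claim needed for decidability. With these handled, the enumeration over $i$ together with the decision procedure for generalized mean-payoff games yields an effective algorithm.
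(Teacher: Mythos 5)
Your reduction has a fatal gap: the key decomposition lemma --- player~2 wins $\bigvee_i\phi_i$ if and only if player~2 wins some single $\phi_i$ --- is false. Counterexample: take $E=\MAX(\LimInfAvg_1,\LimInfAvg_2)$ with threshold $\nu=0$, so $\phi_1\equiv(\LimInfAvg_1>0)$ and $\phi_2\equiv(\LimInfAvg_2>0)$, and let the graph consist of a player-1 initial vertex $v_0$ with edges to two absorbing vertices $A$ and $B$, where $A$ carries a self-loop of weight $(1,-1)$ and $B$ a self-loop of weight $(-1,1)$. Every play satisfies $\phi_1$ (if it enters $A$) or $\phi_2$ (if it enters $B$), so player~2 wins the disjunction vacuously and player~1 cannot assure $E\leq 0$; yet player~2 wins neither $\phi_1$ nor $\phi_2$ alone, since player~1 can move to $B$ or to $A$ respectively. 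Your algorithm would therefore wrongly report that player~1 wins. The block-interleaving proof of the contrapositive breaks at exactly this point: the strategies $\sigma_1,\sigma_2$ forcing $\neg\phi_1,\neg\phi_2$ must steer the play into $B$ and into $A$ respectively, and once the play has entered $A$ the objective $\neg\phi_1$ is unachievable from every reachable position. Playing a winning strategy for $\neg\phi_i$ may irrevocably leave the winning region of $\neg\phi_{i'}$, so the interleaved strategy cannot ``resume'' $\sigma_{i'}$ with any guarantee; your time-sharing argument tacitly assumes that all plays stay inside the intersection of the winning regions of the $\neg\phi_i$, and that is precisely what fails. (The two secondary subtleties you flag --- extracting finite-horizon guarantees from an asymptotic lim-inf promise, and the hindsight pigeonhole --- are real but repairable; the decomposition itself is not.)

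For comparison, the paper does not prove this lemma from scratch at all: it observes that a $\SUM$-free $\LimInfAvg$-only expression with threshold $\nu$ translates immediately into a $\bigvee\bigwedge\MPInfLeq{\nu}$ objective and invokes Theorem~5 of~\cite{VelnerR11}, where the genuinely nontrivial two-player analysis of such disjunctions of conjunctions of mean-payoff conditions is carried out. The obstruction you ran into is exactly why that result is not a corollary of the decidability of generalized (conjunctive) mean-payoff games.
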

\begin{proof}
The proof follows from Theorem~5 in~\cite{VelnerR11} due to the fact that there is an immediate translation from $\SUM$-free $\LimInfAvg$-only expressions to the $\bigvee\bigwedge \MPInfLeq{\nu}$ objectives that were defined in~\cite{VelnerR11}. \pfbox
\end{proof}

\section{Discussion and Future Work}
In this work we studied the synthesis of finite-memory strategies for games with robust multidimensional mean-payoff objectives, and we obtained two main results.
The first is a positive result, namely, the computability of the quantitative analysis problem.
The second has a negative flavour, and it shows that the boolean analysis is inter-reducible with Hilbert's Tenth problem over rationals.
From a practical point of view, the positive result is the most interesting, since for the first time (to the best of our knowledge) a recipe is given for computing $\epsilon$-optimal finite-memory strategies for a robust class of quantitative objectives.
A future work is to investigate whether the construction of these $\epsilon$-optimal strategies is feasible, both in terms of memory size and computational complexity.
From the theoretical point of view, the negative result is a bit surprising since it suggests that the boolean analysis is harder than the optimization problem, and in computer science typically there is a naive reduction from optimization problems to the corresponding decision problems.
However, in our case, the optimization computes only the greatest upper bound, and since optimal finite-memory strategies need not exist, then the reduction fails.
In fact, the hardness result suggests that it is even $\HQ$-hard to determine whether an optimal strategy exists.
A future work is to investigate games in which player~1 may use arbitrary infinite-memory strategies.
Additional direction for future work is to consider more general algebraic structures over multidimensional mean-payoff objectives.

\Heading{Acknowledgements.}
I would like to thank Prof. Alexander Rabinovich, my Ph.d supervisor, for many discussions on this work, and in particular for (i)~discussions on Hilbert tenth problem and Tarski's Theorem; and (ii)~suggesting to present a modular and abstract solution.

\newpage
\appendix
\section{Proof of Lemma~\ref{lem:IfHQThenUltra}}
Let $G$ be an arbitrary strongly connected graph with $n$ simple cycles, let $C(G) = \{C_1,\dots,C_n\}$ be its set of simple cycles,
let $E=\MAX(E_1,\dots,E_m)$ be a mean-payoff expression (where $E_i$ is a max-free expression), and let $\nu$ be a rational threshold.
We recall that the sentence $(\zeta_n(\Avg(C_1),\dots,\Avg(C_n)\leq \nu))$ is equivalent to the statement:
\begin{quote}
For every $y>\nu$ and $i\in\{1,\dots,m\}$, the constraints $\MaxFreeConst(E_i,G,y)$ are infeasible.
\end{quote}
By the definition of the max-free constraints, when the set $\Avg(G)$ is fixed the above statement is easily reduced to the infeasibility of $m$ linear systems, each of them is of the form:
\[A^i_{\Avg(G)}\VEC{x}\leq \VEC{b^i}\mbox{ and } B^i_{\Avg(G)}\VEC{x} < \VEC{c^i}\]
By Motzkin's Transposition Theorem~(e.g., Theorem~1 in~\cite{ben2001motzkin}) the infeasibility of a linear system
$A^i_{\Avg(G)}\VEC{x}\leq \VEC{b^i}\mbox{ and } B^i_{\Avg(G)}\VEC{x} < \VEC{c^i}$
is equivalent to the existence of two non-negative vectors $\VEC{y},\VEC{z}\geq \VEC{0}$ such that either 
\begin{itemize}
\item $\VEC{z}=0$ and $(A^i_{\Avg(G)})^T\VEC{y}=0$ and $\VEC{b^i}^T\VEC{y} < 0$; or
\item $\VEC{z}\neq 0$ and $(A^i_{\Avg(G)})^T\VEC{y} + (B^i_{\Avg(G)})^T\VEC{z}=0$ and $\VEC{b^i}^T\VEC{y} + \VEC{c^i}^T\VEC{z} \leq 0$
\end{itemize}
Since every linear inequality has a rational solution (when the coefficients are rational) we get that if such $\VEC{y}$ and $\VEC{z}$ exist, then there also exist rational $\VEC{y}$ and $\VEC{z}$ that satisfy the above.
Hence the above statement is equivalent to the rational feasibility of the following constraints (for variables $\VEC{y}=(y_1,\dots,y_r),\VEC{z}=(z_1,\dots,z_r),p_1,p_2$ and $q$):
\begin{itemize}
\item $p_1>0, p_2>0,q\geq 0$
\item $\VEC{y}\geq \VEC{0},\VEC{z}\geq \VEC{0}$
\item $(A^i_{\Avg(G)})^T\VEC{y} + (B^i_{\Avg(G)})^T\VEC{z}=0$
\item $(\sum_{j=1}^r z_i - p_1)(\VEC{b^i}^T\VEC{y} + p_2) = 0$
\item $(\sum_{j=1}^r z_i)(\VEC{b^i}^T\VEC{y} + \VEC{c^i}^T\VEC{z} + q) = 0$
\end{itemize}
By Lagrange's four-square Theorem, every natural number is the sum of four integer squares.
Therefore, every inequality of the form $x\geq 0$ is equivalent to the rational feasibility of the equation
\[x=\frac{x_1^2 + x_2^2 + x_3^2 + x_4^2}{1 + x_5^2 + x_6^2 + x_7^2 + x_8^2}\]
and every inequality of the form $x>0$ is equivalent to the rational feasibility of the equation
\[x=\frac{1+ x_1^2 + x_2^2 + x_3^2 + x_4^2}{1 + x_5^2 + x_6^2 + x_7^2 + x_8^2}\]
and the equations of the above form can be easily transformed into Diophantine equations.
Hence, we get that the infeasibility of a linear system
\[A^i_{\Avg(G)}\VEC{x}\leq \VEC{b^i}\mbox{ and } B^i_{\Avg(G)}\VEC{x} < \VEC{c^i}\]
is equivalent to the rational feasibility of several Diophantine equations $D_1=0,D_2=0,\dots,D_r=0$, and therefore it is equivalent to the rational feasibility of $D^i = D_1^2 + \dots + D_r^2 = 0$.
Therefore, when the set of simple cycles is fixed, the simultaneous infeasibility of all the max-free constraints is equivalent to the rational feasibility of the Diophantine equation $D_{\Avg(G)} = \sum_{i=1}^m (D^i)^2 = 0$.
We also note that if $\Avg(G)$ is not fixed, that is $\Avg(C_i)$ is a vector of variables (for $i=1,\dots,n$), then $D_{\Avg(C)}=0$ remains a Diophantine equation.

We are now ready to prove that the solution for one-player mean-payoff games is satisfies Property~4 (if $\HQ$ is decidable).
For $n$ matrices $A_1,\dots,A_n$ the rational satisfiability of $\zeta_n(A_1\VEC{x_1},\dots,A_n\VEC{x_n}) \leq \nu$ is equivalent to the existence of a rational solution to $D_{\Avg(G)}=0$ (for $\Avg(G)=\{A_1\VEC{x_1},\dots,A_n\VEC{x_n}\}$).
We can encode the requirement that $x_1,\dots,x_n\in\RatSimplex$ by a Diophantine equations $K(\VEC{x_1},\dots,\VEC{x_n})=0$ by the same techniques we used for the construction of $D_{\Avg(G)}=0$.
Hence, the satisfiability of $\zeta_n(A_1\VEC{x_1},\dots,A_n\VEC{x_n}) \leq \nu$ is equivalent to the existence of a rational solution to the Diophantine equation $K^2 + D^2 = 0$, and if $\HQ$ is decidable, then we can effectively determine whether $K^2 + D^2 = 0$ has a rational solution and the proof follows. \pfbox
\section{Proof of Lemma~\ref{lem:SumFreeInfOnlyHQ}}\label{sec:Hadness}
We prove Lemma~\ref{lem:SumFreeInfOnlyHQ} in the next three subsections.
In the first subsection we present an alternative formulation for $\HQ$.
In the second subsection we prove a simple technical lemma on vectors.
In the third subsection we present a reduction from the problem we presented in the first subsection to $\SUM$-free $\LimInfAvg$-only games, and the reduction relies on the lemma that we prove in the second subsection.

We note that the first two subsection are technical and tedious, but they relay only on basic algebra. 
\subsection{Alternative formulations of $\HQ$}
In this subsection, we present five problems; the first problem is $\HQ$, and we show a reduction from the $i$-th problem to the $i+1$-th problem, for $i=1,2,3,4$.
Thus, we get that there is a reduction from $\HQ$ to the fifth problem (that is, Problem~\ref{prob:H10QNormalForm4}), and in the third subsection we will show a reduction from that problem to mean-payoff expression games.
\begin{prob}[$\HQ$]\label{prob:H10Q}
For a polynomial $P$, find a rational solution to 
\[P(q_1,\dots,q_n) =0\]
\end{prob}
\begin{prob}\label{prob:H10QNormalForm1}
Find a rational solution to 
\[q_0 \cdot P(\frac{q_1}{q_0},\dots,\frac{q_n}{q_0}) = 0\] (for a polynomial $P$)
subject to
\begin{itemize}
\item $q_0 \leq q_i$ for every $i = 1,\dots,n$; and
\item $q_i \geq 1$ for every $i = 0,\dots,n$.
\end{itemize}
\end{prob}
\begin{lem}\label{lem:H10QNormalForm1}
There is a reduction from $\HQ$ to Problem~\ref{prob:H10QNormalForm1}.
\end{lem}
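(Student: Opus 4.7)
The plan is to reduce Problem~\ref{prob:H10Q} to Problem~\ref{prob:H10QNormalForm1} by a simple syntactic transformation of the polynomial $P$. The only real difference between the two problems is that in Problem~\ref{prob:H10QNormalForm1} the variables $q_1,\dots,q_n$ are constrained to be rationals at least $q_0 \geq 1$, hence each ratio $q_i/q_0$ is a rational number $\geq 1$; and the extra factor $q_0$ in front of $P$ is harmless since $q_0 \geq 1 > 0$. So the task reduces to forcing the hypothetical rational root to lie in $[1,\infty)^n$.

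To achieve this, given $P(x_1,\dots,x_n)$, I would introduce $2n$ fresh variables $y_1,\dots,y_{2n}$ and define
\[
\tilde P(y_1,\dots,y_{2n}) \;:=\; P(y_1-y_2,\;y_3-y_4,\;\dots,\;y_{2n-1}-y_{2n}).
\]
The key observation is that every rational number $x$ can be written as $u - v$ with $u,v \in \mathbb{Q}_{\geq 1}$: if $x\geq 0$ take $u=x+1,\,v=1$, otherwise take $u=1,\,v=1-x$. Consequently, $P$ has a rational root iff $\tilde P$ has a rational root in $[1,\infty)^{2n}$.

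Then I would set $q_0 := 1$ and $q_i := y_i$ (for $i=1,\dots,2n$) in one direction, and $y_i := q_i/q_0$ in the other. Because $q_0 \geq 1$ and $q_i \geq q_0$, each $y_i = q_i/q_0$ is a rational $\geq 1$, and because $q_0 > 0$ the equation $q_0 \cdot \tilde P(q_1/q_0,\dots,q_{2n}/q_0) = 0$ is equivalent to $\tilde P(q_1/q_0,\dots,q_{2n}/q_0) = 0$. This yields the two directions of the reduction: from a rational zero $(x_1,\dots,x_n)$ of $P$ we build admissible $(q_0,q_1,\dots,q_{2n})$ with $q_0 = 1$ as above; and from any admissible tuple $(q_0,q_1,\dots,q_{2n})$ zeroing $q_0\tilde P$ we recover the rational zero $\bigl((q_1-q_2)/q_0,\dots,(q_{2n-1}-q_{2n})/q_0\bigr)$ of $P$.

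I do not expect a real obstacle here: the whole argument is bookkeeping about representing a rational as a difference of two rationals bounded below by $1$, and checking that the factor $q_0$ and the constraints $q_0 \leq q_i$, $q_i \geq 1$ are jointly equivalent to asking only that each $y_i \geq 1$. The mild subtlety to be careful about is not to accidentally constrain the \emph{denominator} rather than the \emph{value}: since only $q_0 \leq q_i$ (and not, say, $|q_i|\leq q_0$) is imposed, the constructed $q$'s really do give $y_i \geq 1$, not some bounded range.
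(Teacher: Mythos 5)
Your proposal is correct and follows essentially the same route as the paper: the paper also writes each rational as a difference of two rationals bounded below by $1$ via $D(p_1,\dots,p_{2n}) = P(p_1-p_2,\dots,p_{2n-1}-p_{2n})$, and then observes that setting $q_0=1$ (resp.\ reading off $y_i=q_i/q_0$) makes the constraints $1\leq q_0\leq q_i$ equivalent to $y_i\geq 1$, with the prefactor $q_0>0$ being harmless. No gaps.
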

\begin{proof}
We first note that we can easily reduce $\HQ$ to the problem of finding a rational solution for the polynomial equation $D(q_1,q_2,\dots,q_n) = 0$ subject to $q_1,q_2,\dots,q_n \geq 1$. (The reduction is trivial, a polynomial equation $P(q_1,\dots,q_n)=0$ has a solution if and only if the polynomial equation $D(p_1,\dots,p_{2n}) = P(p_1 - p_2,p_3 - p_4, \dots, p_{2n - 1} - p_{2n}) = 0$ has a solution that satisfies $p_i\geq 1$.)
We define $P = D$ and we note that $q_0\cdot P = 0$ has a rational solution (subject to $q_0 \geq 1$) if and only if $P=0$ has a rational solution, and it is trivial to observe that $P=0$ has a rational solution (subject to $q_0\geq q_i$ and $q_i\geq 1$) if and only if $D=0$ has a rational solution (subject to $q_i\geq 1$).
\pfbox
\end{proof}

\begin{prob}\label{prob:H10QNormalForm2}
For a given a set of variables $Q=\{q_1,\dots,q_n\}$,
and a set of equations such that at most one equation is of the form
\begin{quote}
$\sum_{i\in I} \alpha_i q_i = 0$, for some $I\subseteq\{0,\dots,n\}$ and $\alpha_i\in\Q$ for every $i\in I$.
\end{quote}
and all the other equations are of the form
\begin{quote}
$q_i q_j = q_k q_\ell$ for some $i,j,k,\ell \in \{0,\dots,n\}$.
\end{quote}
find a rational solution that satisfies $1\leq q_0\leq q_i$ for every $i=1,\dots,n$.
\end{prob}
\begin{lem}\label{lem:H10QNormalForm2}
There is a reduction from Problem~\ref{prob:H10QNormalForm1} to Problem~\ref{prob:H10QNormalForm2}.
\end{lem}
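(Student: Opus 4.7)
The plan is to view each variable $q_i$ from Problem~\ref{prob:H10QNormalForm1} as a homogeneous coordinate whose \emph{affine value} is $x_i := q_i/q_0$. Under this projective reading, the bounds $1 \leq q_0 \leq q_i$ translate exactly to $x_i \geq 1$, and, crucially, a product of affine values matches the quadratic equations that Problem~\ref{prob:H10QNormalForm2} permits: the relation $x_k = x_i x_j$ becomes $q_k/q_0 = (q_i q_j)/q_0^2$, equivalently $q_0 q_k = q_i q_j$, which is of the allowed form $q_a q_b = q_c q_d$. Because $q_0 \geq 1$, the equation $q_0 \cdot P(q_1/q_0,\dots,q_n/q_0) = 0$ of Problem~\ref{prob:H10QNormalForm1} is equivalent to $P(x_1,\dots,x_n) = 0$, so it suffices to encode \emph{this} polynomial equation using one linear equation together with quadratic equations of the specified shape.

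Given an input polynomial written as $P(x_1,\dots,x_n) = c_0 + \sum_{i=1}^n c_i x_i + \sum_{M:\, \deg M \geq 2} c_M \cdot M$, where each degree-$d_M$ monomial is expanded as $M = x_{i_1} x_{i_2} \cdots x_{i_{d_M}}$ (with possible repetitions), I would introduce, for each such $M$, a chain of auxiliary variables $q_{M_2}, q_{M_3}, \dots, q_{M_{d_M}}$ together with the quadratic equations $q_0 \cdot q_{M_k} = q_{M_{k-1}} \cdot q_{i_k}$ for $k = 2, \dots, d_M$ (using the convention $q_{M_1} := q_{i_1}$). A straightforward induction on $k$ yields $q_{M_k}/q_0 = x_{i_1}\cdots x_{i_k}$, so setting $q_M := q_{M_{d_M}}$ gives $q_M/q_0 = M(x_1,\dots,x_n)$. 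The polynomial equation $P = 0$ is then captured by the \emph{single} linear equation $c_0 \cdot q_0 + \sum_{i=1}^n c_i \cdot q_i + \sum_{\deg M \geq 2} c_M \cdot q_M = 0$, which, after division by $q_0 > 0$, is precisely $P(x_1,\dots,x_n) = 0$.

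Both directions of the reduction are then immediate: a rational solution of Problem~\ref{prob:H10QNormalForm1} extends uniquely to a rational solution of the constructed instance by defining each $q_{M_k}$ via its quadratic equation, and conversely a rational solution of the constructed instance yields, via $x_i := q_i/q_0$, a rational solution of Problem~\ref{prob:H10QNormalForm1}.

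The only non-trivial point is verifying the global constraint $q_0 \leq q_j$ for every \emph{new} variable $q_{M_k}$, and this is the step I expect to be the main obstacle. It turns out, however, to hold automatically: the quadratic equation rewrites as $q_{M_k} = q_{M_{k-1}} \cdot q_{i_k}/q_0$, so if inductively $q_{M_{k-1}}, q_{i_k} \geq q_0 \geq 1$, then $q_{M_k} \geq q_0 \cdot q_0/q_0 = q_0$. Thus the apparent danger — that introducing auxiliary variables could force some coordinate below $q_0$ and violate the admissibility condition of Problem~\ref{prob:H10QNormalForm2} — vanishes exactly because Problem~\ref{prob:H10QNormalForm1} already keeps every affine value $x_i$ at or above $1$, so products of such values stay at or above $1$ as well.
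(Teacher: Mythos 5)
Your proposal is correct and takes essentially the same approach as the paper: the paper also builds each monomial $q_{i_1}\cdots q_{i_d}/q_0^{d-1}$ via a chain of auxiliary variables satisfying quadratic equations of the form $p\cdot q_0 = a\cdot b$, collects the polynomial into the single permitted linear equation, and notes that the bounds $1\leq q_0\leq q_i$ propagate automatically to the auxiliary variables. The only difference is presentational --- the paper argues via a generic worked example while you give the general inductive construction.
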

\begin{proof}
We prove the lemma by giving a generic example that demonstrates the reduction.
Suppose that the equation with the form of Problem~\ref{prob:H10QNormalForm1} is
$q_0 \cdot P(\frac{q_1}{q_0},\dots,\frac{q_n}{q_0}) = 5\frac{q_1^2q_2q_3^3}{q_0^5} + \frac{q_1^2}{q_0} + 7q_0$,
then we reduce it to a problem with the form of Problem~\ref{prob:H10QNormalForm2} by defining the following equations:
\begin{itemize}
\item $p_0 \cdot q_0 = q_0 \cdot q_0$ (equivalent to $p_0 = q_0$)
\item $p_1 \cdot q_0 = q_1 \cdot q_1$ (equivalent to $p_1 = \frac{q_1^2}{q_0}$)
\item $p_2 \cdot q_0 = q_3\cdot q_3$ and $p_3 \cdot q_0 = p_2 \cdot q_3$ (equivalent to $p_3 = \frac{q_3^3}{q_0^2})$
\item $p_4 \cdot q_0 = p_1 \cdot p_3$ (equivalent to $p_4 = \frac{q_1^2 q_3^3}{q_0^4}$)
\item $p_5 \cdot q_0 = p_4 q_2$ (equivalent to $p_5 = \frac{q_1^2 q_2q_3^3}{q_0^5}$)
\item $5p_5 + p_1 + 7q_0 = 0$, subject to $1\leq q_0\leq q_1,q_2,q_3,p_0,p_1,p_2,p_3,p_4,p_5$ and $q_i,p_j \geq 1$ (equivalent to $q_0 \cdot P(\frac{q_1}{q_0},\dots,\frac{q_n}{q_0}) = 0$)
\end{itemize}
A solution to the above equations that satisfies $1\leq q_0\leq q_1,q_2,q_3,p_1,p_2,p_3,p_4,p_5$ is clearly a solution for $q_0\cdot P = 0$ 
that satisfies Problem~\ref{prob:H10QNormalForm1} conditions.
Conversely, a solution to $q_0\cdot P = 0$ that satisfies Problem~\ref{prob:H10QNormalForm1} conditions is a solution for the above constraints, and since $1\leq q_0 \leq q_1,q_2,q_3$ we also get that $q_0\leq p_1,p_2,p_3,p_4,p_5$ and a solution to the above equitations follows.
\pfbox
\end{proof}

\begin{prob}\label{prob:H10QNormalForm3}
For a given sets of variables $Q=\{q_1,\dots,q_n\}$, $P=\{p_1,\dots,p_n\}$, and a given set of equations, each of the form of either:
\begin{itemize}
\item $\sum_{i\in I} \alpha_i q_i = 0$, for some $I\subseteq \{1,\dots,n\}$; or
\item $q_i p_j = q_k p_\ell$ for some $i,j,k,\ell\in \{1,\dots,n\}$; or
\item $q_i = \frac{1}{2} \sum_{j=1}^n q_j$, for some $i\in \{1,\dots,n\}$; or
\item $p_i = \frac{1}{2} \sum_{j=1}^n p_j$, for some $i\in \{1,\dots,n\}$; or
\end{itemize}
find a rational solution that satisfies
\begin{itemize}
\item $q_1\leq q_i$, for $i=1,\dots,n$; and
\item $q_i,p_i\geq 1$ for $i=1,\dots,n$; and
\item $\sum_{i=1}^n p_i = \sum_{i=1}^n q_i$.
\end{itemize}
\end{prob}
\begin{lem}\label{lem:H10QNormalForm3}
There is a reduction from Problem~\ref{prob:H10QNormalForm2} to Problem~\ref{prob:H10QNormalForm3}.
\end{lem}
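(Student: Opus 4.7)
\textbf{A proof plan for Lemma~\ref{lem:H10QNormalForm3}.}
The idea is to introduce a \emph{shadow copy} $p_i$ of each variable $q_i$ of Problem~\ref{prob:H10QNormalForm2}, and to force $p_i=q_i$ by exploiting the built-in identity $\sum p_i = \sum q_i$ of Problem~\ref{prob:H10QNormalForm3}. Once this equality of the two copies is secured, every quadratic equation $q_iq_j=q_kq_\ell$ allowed in Problem~\ref{prob:H10QNormalForm2} becomes equivalent to the ``mixed'' equation $q_ip_j=q_kp_\ell$ of the form admitted in Problem~\ref{prob:H10QNormalForm3}, while the (at most one) linear equation $\sum \alpha_iq_i=0$ carries over verbatim. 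Since Problem~\ref{prob:H10QNormalForm2} involves the $n{+}1$ variables $q_0,q_1,\dots,q_n$, I would instantiate Problem~\ref{prob:H10QNormalForm3} with $N:=n+1$ variables $q_1,\dots,q_N,p_1,\dots,p_N$ under the renaming $q_i\mapsto q_{i+1}$; the constraint $q_0\le q_i$ of Problem~\ref{prob:H10QNormalForm2} then matches the built-in constraint $q_1\le q_i$ of Problem~\ref{prob:H10QNormalForm3}, and the lower bound $q_i\ge 1$ is preserved (with $p_i\ge 1$ following a posteriori from $p_i=q_i$).

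The key technical step is to force $p_i=q_i$ using only the equation templates Problem~\ref{prob:H10QNormalForm3} permits. I would add the mixed multiplicative equations
\[
q_i\,p_j \;=\; q_j\,p_i \qquad \text{for all } i,j\in\{1,\dots,N\},
\]
each of which fits the template $q_ip_j=q_kp_\ell$. Because each $q_i,p_i$ is positive (being at least $1$), these equations force the ratio $p_i/q_i$ to take a common value $c>0$ independent of $i$. The side condition $\sum_{i=1}^N p_i = \sum_{i=1}^N q_i$, which is hard-wired into Problem~\ref{prob:H10QNormalForm3}, then yields $c\sum q_i = \sum q_i$, and since $\sum q_i \ge N>0$ we conclude $c=1$, i.e.\ $p_i=q_i$ for every $i$. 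Observe that the averaging equations $q_i=\frac12\sum_j q_j$ and $p_i=\frac12\sum_j p_j$ available in Problem~\ref{prob:H10QNormalForm3} play no role here; they are presumably meant for the subsequent reduction to games and can be omitted from this instance.

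With this lemma about the ratio constraints in hand, both directions of the reduction are routine. Forward: given a rational solution of Problem~\ref{prob:H10QNormalForm2}, set $p_i:=q_i$ in the translated instance; then all ratio equations, the sum identity, the translated quadratic equations, and the (translated) linear equation hold by construction. Backward: any rational solution of the translated instance satisfies $p_i=q_i$ by the argument above, so substituting back in each $q_ip_j=q_kp_\ell$ recovers $q_iq_j=q_kq_\ell$, while the linear equation and the ordering/positivity constraints translate back without change. The only genuine content of the proof is the short argument that the ratio equations together with the external sum identity compel $c=1$; once that is settled, the remainder is a purely syntactic renaming.
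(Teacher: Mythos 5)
Your proposal is correct, but it takes a genuinely different route from the paper's. The paper never forces the whole shadow vector $P$ to equal $Q$; instead, for each product equation $q_1q_2=q_3q_4$ it works locally: it introduces the auxiliary variables $q_{n+1}=\frac12\sum_{j=1}^{n+1}q_j$ and $p_{n+1}=\frac12\sum_{j=1}^{n+1}p_j$ via the averaging templates, uses the hard-wired identity $\sum_j q_j=\sum_j p_j$ to conclude $q_{n+1}=p_{n+1}$, and then uses the mixed equations $q_2p_{n+1}=q_{n+1}p_1$ and $q_4p_{n+1}=q_{n+1}p_2$ to transfer only the two needed factors into designated $p$-variables, so that $q_1p_1=q_3p_2$ encodes the original equation. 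Your reduction instead adds all pairwise proportionality equations $q_ip_j=q_jp_i$, which (by positivity of the variables) force $p_i=c\,q_i$ for a common $c>0$, and the sum identity pins $c=1$; this is a sound argument and it correctly renders the averaging templates of Problem~\ref{prob:H10QNormalForm3} superfluous --- those templates exist in the problem statement precisely to support the paper's local encoding, and the subsequent Lemma~\ref{lem:H10QNormalForm4} only has to eliminate them because the paper's reduction introduces them. The trade-off is that your version adds $\Theta(n^2)$ equations where the paper adds $O(1)$ per product equation, but yours is more uniform and makes both directions of the equivalence essentially syntactic; both reductions are polynomial and both are correct.
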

\begin{proof}
To show a reduction, we need to show how to encode an equation of the form of $q_1 q_2 = q_3 q_4$ with equations of the above form.
For this purpose we define the equations:
\begin{itemize}
\item $q_{n+1} = \frac{1}{2} \sum_{j=1}^{n+1} q_j$ and $p_{n+1} = \frac{1}{2} \sum_{j=1}^{n+1} p_j$
\item $q_2 p_{n+1} = q_{n+1} p_1$
\item $q_4 p_{n+1} = q_{n+1} p_2$
\item $q_1 p_1 = q_3 p_2$
\end{itemize}
It is straight forward to observe that if $\sum_{j=1}^{n+1} q_j = \sum_{j=1}^{n+1} p_j$ then the above set of equations are equivalent to $q_1q_2 = q_3q_4$.
\pfbox
\end{proof}

\begin{prob}\label{prob:H10QNormalForm4}
For a given sets of variables $Q=\{q_1,\dots,q_n\}$, $P=\{p_1,\dots,p_n\}$, and a given set of constraints, each of the form of either:
\begin{itemize}
\item $\sum_{i\in I} \alpha_i q_i \leq 0$, for some $I\subseteq \{1,\dots,n\}$; or
\item $\sum_{i\in I} \alpha_i p_i \leq 0$, for some $I\subseteq \{1,\dots,n\}$; or
\item $q_i p_j = q_k p_\ell$ for some $i,j,k,\ell\in \{1,\dots,n\}$
\end{itemize}
find a rational solution that satisfies
\begin{itemize}
\item $q_i,p_i > 0$ for $i=1,\dots,n$
\end{itemize}
\end{prob}
\begin{lem}\label{lem:H10QNormalForm4}
There is a reduction from Problem~\ref{prob:H10QNormalForm3} to Problem~\ref{prob:H10QNormalForm4}.
\end{lem}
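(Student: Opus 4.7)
My plan is to reduce Problem~\ref{prob:H10QNormalForm3} to Problem~\ref{prob:H10QNormalForm4} by a syntactic translation of the equations, together with a rescaling argument that bridges the gap between the two problems' auxiliary constraints. Given an instance of Problem~\ref{prob:H10QNormalForm3} on variables $Q,P$, I build an instance of Problem~\ref{prob:H10QNormalForm4} on the same variables as follows. Each linear equation $\sum_{i\in I}\alpha_i q_i=0$ is replaced by the two $Q$-only inequalities $\sum_{i\in I}\alpha_i q_i\leq 0$ and $\sum_{i\in I}(-\alpha_i)q_i\leq 0$. Each midpoint constraint $q_i=\tfrac12\sum_j q_j$ is rewritten as $2q_i-\sum_j q_j=0$ and split into the same pair of $Q$-only inequalities; the analogous treatment produces two $P$-only inequalities from $p_i=\tfrac12\sum_j p_j$. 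The ordering requirement $q_1\leq q_i$ is translated directly into the $Q$-only inequality $q_1-q_i\leq 0$ for every $i$. All product equations $q_ip_j=q_kp_\ell$ are already in the shape allowed by Problem~\ref{prob:H10QNormalForm4} and are kept verbatim.

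The forward direction is then trivial: any Problem~\ref{prob:H10QNormalForm3} solution satisfies all the inequalities above (since an equality implies both companion inequalities), and the strict positivity $q_i,p_i>0$ follows from $q_i,p_i\geq 1$. The substantive work is the converse: from a rational solution $(q,p)$ of the constructed Problem~\ref{prob:H10QNormalForm4} instance I must manufacture a solution of the original Problem~\ref{prob:H10QNormalForm3} instance, and in particular enforce the two Problem~\ref{prob:H10QNormalForm3}-specific requirements $q_i,p_i\geq 1$ and $\sum p_i=\sum q_i$ which I have not encoded directly.

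The central observation is that every constraint of Problem~\ref{prob:H10QNormalForm4} produced by my translation is homogeneous under the independent rescaling $(q,p)\mapsto(c_Q q,c_P p)$ for any $c_Q,c_P>0$: the $Q$-only and $P$-only linear inequalities scale by $c_Q$ and $c_P$ respectively, and each product equation $q_ip_j=q_kp_\ell$ scales by $c_Qc_P$ on both sides. Given a Problem~\ref{prob:H10QNormalForm4} solution with $q_i,p_i>0$, the sums $\sum q_i$ and $\sum p_i$ are both strictly positive rationals, so I may first choose the ratio $c_P/c_Q=(\sum q_i)/(\sum p_i)$, which yields $\sum c_Pp_i=\sum c_Q q_i$, and then scale both $c_Q$ and $c_P$ up by a common further positive factor so that $\min_i(c_Qq_i,c_Pp_i)\geq 1$. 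Since $q_1-q_i\leq 0$ is itself $Q$-only and thus preserved by the scaling, the rescaled tuple satisfies every condition of Problem~\ref{prob:H10QNormalForm3}, completing the reduction.

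The step I expect to be the main obstacle is precisely the cross-sum condition $\sum p_i=\sum q_i$: it mixes $Q$- and $P$-variables and is therefore not expressible with Problem~\ref{prob:H10QNormalForm4}'s strictly separated linear constraints. The scaling argument above is what rescues it, and the reason it works is the independence of the scaling in $Q$ and $P$; the product equations are the only constraints that could plausibly fail under such a non-uniform scaling, but their bilinear form in $(Q,P)$ makes them invariant as well. Once this is in place, enforcing $q_i,p_i\geq 1$ is an easy further common-factor scaling, and the rest of the translation is bookkeeping.
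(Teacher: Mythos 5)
Your proposal is correct and follows essentially the same route as the paper: the same syntactic splitting of equalities into pairs of inequalities, the same addition of the $q_1\leq q_i$ constraints, and the same key observation that $Q$ and $P$ can be rescaled independently (preserving the bilinear product equations) to recover $\sum_i p_i=\sum_i q_i$ and $q_i,p_i\geq 1$ in the converse direction. No gaps.
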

\begin{proof}
The reduction is straight forward.
We replace every equation of the form of $\sum_{i\in I} \alpha_i q_i = 0$ with two constraints $\sum_{i\in I} \alpha_i q_i \leq 0$ and $\sum_{i\in I} -\alpha_i q_i \leq 0$.
We replace $q_i = \frac{1}{2} \sum_{j=1}^n q_j$ with $\sum_{j\in\{1,\dots,n\}-\{i\}} \frac{1}{2} q_j - \frac{1}{2}q_i \leq 0$ and
$\sum_{j\in\{1,\dots,n\}-\{i\}} -\frac{1}{2} q_j + \frac{1}{2}q_i \leq 0$.
We replace $p_i = \frac{1}{2} \sum_{j=1}^n p_j$ with $\sum_{j\in\{1,\dots,n\}-\{i\}} \frac{1}{2} p_j - \frac{1}{2}p_i \leq 0$ and
$\sum_{j\in\{1,\dots,n\}-\{i\}} -\frac{1}{2} p_j + \frac{1}{2}p_i \leq 0$.
In addition, we add $n$ constraints $q_1\leq q_i$ for $i=1,\dots,n$.
It is straight forward to observe that if the above formed constraints have a rational solution $Q=\{q_1,\dots,q_n\},P=\{p_1,\dots,p_n\}$ that satisfies $q_i,p_i > 0$, then for every rational positive $m$ we get that $mQ=\{mq_1,\dots,mq_n\},P=\{p_1,\dots,p_n\}$ and  $Q=\{q_1,\dots,q_n\},mP=\{mp_1,\dots,mp_n\}$ are also solutions.
Hence, a solution to the formed constraints implies that there is a solution that satisfies $q_i,p_i \geq 1$ and $\sum_{i=1}^n p_i = \sum_{i=1}^n q_i$.
And conversely, if the formed constraints are not satisfiable, then clearly the original equations are not solvable. \pfbox
\end{proof}

\subsection{Auxiliary lemma}
In this subsection, we prove the next lemma.
\begin{lem}\label{lem:LemmaForReduction}
Let $\alpha_1,\alpha_2,\beta_1,\beta_2$ be strictly positive rationals, and
let
$v_1(\alpha_1) = \alpha_1 \cdot (-1,0,1,0)$, 
$v_2(\alpha_2) = \alpha_2 \cdot (0,1,0,-1)$,
$u_1(\beta_1) = \beta_1 \cdot (1,0,-1,0)$, and
$u_2(\beta_2) = \beta_2 \cdot (0,-1,0,1)$.
For every $m,n\in\Q$ we denote by the vector $x(m,n)=(x_1,x_2,x_3,x_4)$ the sum $m(v_1+v_2) + n(u_1 + u_2)$.
Then the following assertions are equivalent:
\begin{enumerate}
\item $\frac{\beta_1}{\alpha_1} = \frac{\beta_2}{\alpha_2}$.
\item For every non-negative rationals $m,n$: 
$\MAX(\MIN(x_1,x_2),\MIN(x_3,x_4)) \leq 0$.
\end{enumerate}
\end{lem}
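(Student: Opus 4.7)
The plan is to first compute $x(m,n)$ explicitly and observe a crucial symmetry, and then reduce the claim to a simple sign analysis. Computing directly gives
\[
v_1(\alpha_1)+v_2(\alpha_2) = (-\alpha_1,\alpha_2,\alpha_1,-\alpha_2),\qquad u_1(\beta_1)+u_2(\beta_2) = (\beta_1,-\beta_2,-\beta_1,\beta_2),
\]
so if we set $a := x_1 = n\beta_1 - m\alpha_1$ and $b := x_2 = m\alpha_2 - n\beta_2$, then $x_3 = -a$ and $x_4 = -b$. Consequently
\[
\MAX(\MIN(x_1,x_2),\MIN(x_3,x_4)) = \MAX(\MIN(a,b),\MIN(-a,-b)),
\]
and this quantity is $\leq 0$ if and only if $a$ and $b$ are not both strictly positive and not both strictly negative.

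For the direction $(1)\Rightarrow (2)$, assume $\beta_1/\alpha_1 = \beta_2/\alpha_2 = k$. Substituting $\beta_i = k\alpha_i$ yields $a = \alpha_1(nk-m)$ and $b = \alpha_2(m-nk) = -\tfrac{\alpha_2}{\alpha_1}a$. Since $\alpha_1,\alpha_2 > 0$, $a$ and $b$ have opposite (non-strict) signs for every $m,n\geq 0$, which gives the desired inequality.

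For the converse $(2)\Rightarrow(1)$, I would argue by contrapositive. Suppose the ratios differ; WLOG $\beta_1/\alpha_1 > \beta_2/\alpha_2$, equivalently $\alpha_2/\beta_2 > \alpha_1/\beta_1$. I would then pick non-negative rationals $m,n>0$ with $n/m$ strictly between $\alpha_1/\beta_1$ and $\alpha_2/\beta_2$, for example $n/m = \tfrac{1}{2}(\alpha_1/\beta_1 + \alpha_2/\beta_2)$. A quick computation shows $a>0$ (because $n\beta_1 > m\alpha_1$) and $b>0$ (because $m\alpha_2 > n\beta_2$), so $\MIN(a,b)>0$ and the maximum in question is strictly positive, contradicting (2).

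There is no serious obstacle here: the only subtlety is spotting the identities $x_3 = -x_1$ and $x_4 = -x_2$, which collapse the four-dimensional condition to a two-variable sign condition on $(a,b)$, after which the equivalence is immediate from elementary inequalities on positive rationals.
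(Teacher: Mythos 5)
Your proof is correct and follows essentially the same route as the paper's: compute $x_1 = n\beta_1 - m\alpha_1$, $x_2 = m\alpha_2 - n\beta_2$, observe $x_3=-x_1$, $x_4=-x_2$, handle $(1)\Rightarrow(2)$ by a sign analysis and $(2)\Rightarrow(1)$ by choosing the ratio $n/m$ strictly between $\alpha_1/\beta_1$ and $\alpha_2/\beta_2$. Your explicit identity $b=-\tfrac{\alpha_2}{\alpha_1}a$ in the forward direction is a slightly cleaner packaging than the paper's case split, but the argument is the same.
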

\begin{proof}
By definition $x_1 = -m\alpha_1 + n\beta_1$, $x_2 = m\alpha_2 -n\beta_2$,
$x_3 = -x_1$ and $x_4 = -x_2$.

We first prove that assertion~1 implies assertion~2.
Suppose that $\frac{\beta_1}{\alpha_1} = \frac{\beta_2}{\alpha_2}$,
let $m$ and $n$ be arbitrary non-negative rationals, and we denote $k=\frac{m}{n}$.
In order to prove that $\MAX(\MIN(x_1,x_2),\MIN(x_3,x_4)) \leq 0$, it is enough to show that if $x_1 > 0$, then $x_2 < 0$ (since in this case $x_3 = -x_1 < 0$).
Suppose that $x_1 > 0$. Hence, $\beta_1 > k\alpha_1$, and we get that $k<\frac{\beta_1}{\alpha_1}$.
By definition, $x_2 = n( k\alpha_2 - \beta_2)$, and since we assumed that $\frac{\beta_1}{\alpha_1} = \frac{\beta_2}{\alpha_2}$, and we proved that $k<\frac{\beta_1}{\alpha_1}$, we get that $x_2 < 0$, and the claim that assertion~1 implies assertion~2 follows.

In order to prove that assertion~2 implies assertion~1, we consider two distinct cases.
In the first case we assume (towards a contradiction) that $\frac{\beta_1}{\alpha_1} > \frac{\beta_2}{\alpha_2}$, and we choose $m$ and $n$ that satisfy
$\frac{\beta_1}{\alpha_1} > k=\frac{m}{n} > \frac{\beta_2}{\alpha_2}$.
We claim that $x_1 > 0$ and $x_2 > 0$, and therefore a contradiction to the assumption that $\MAX(\MIN(x_1,x_2),\MIN(x_3,x_4)) \leq 0$ follows.
Indeed, since $\frac{\beta_1}{\alpha_1} > k$, then $x_1 = n(-k\alpha_1 + \beta) >0$, and since $k > \frac{\beta_2}{\alpha_2}$, then $x_2 = n(k\alpha_2 - \beta_2) > 0$.
In the second case, we assume that $\frac{\beta_1}{\alpha_1} < \frac{\beta_2}{\alpha_2}$, and by similar arguments, we get that $x_3,x_4>0$ and a contradiction follows.
Hence, in both cases we get that assertion~2 implies assertion~1, and the proof of the lemma follows.
\pfbox
\end{proof}

\subsection{The reduction}
In this subsection, we present a reduction from Problem~\ref{prob:H10QNormalForm4} to the boolean synthesis problem for mean-payoff expressions (when player 1 is restricted to finite-memory strategies).
The reduction is as following:
For a given sets of variables $Q=\{q_1,\dots,q_n\}$, $P=\{p_1,\dots,p_n\}$, and a given set of constraints, each of the form of either:
\begin{itemize}
\item $\sum_{i\in I} \alpha_i q_i \leq 0$, for some $I\subseteq \{1,\dots,n\}$; or
\item $\sum_{i\in I} \alpha_i p_i \leq 0$, for some $I\subseteq \{1,\dots,n\}$; or
\item $q_i p_j = q_k p_\ell$ for some $i,j,k,\ell\in \{1,\dots,n\}$
\end{itemize}
We denote by $t_1$ the number of constraints that are of the first form, and w.l.o.g we assume that the number of constraints that are of the second form is also $t_1$.
We denote by $t_2$ the number of constraints that are of the third form.
We construct a $k = 2 + n + t_1+4t_2$ dimensional game graph with $5$ states (see Figure~\ref{fig:Reduction}), and an expression
\[E = \MAX(\LimInfAvg_1,\dots,\LimInfAvg_{2+n+t_1},E_1,\dots,E_{2t_2})\]
where
\[E_i = \MIN(\LimInfAvg_{2+n+t_1+2i},\LimInfAvg_{2+n+t_1+2i+1})\]
The transitions of the graph are described in Figure~\ref{fig:Reduction}, and each of the states $a_2$ and $b_2$ has $n$ self-loop edges.
\begin{figure}[H]
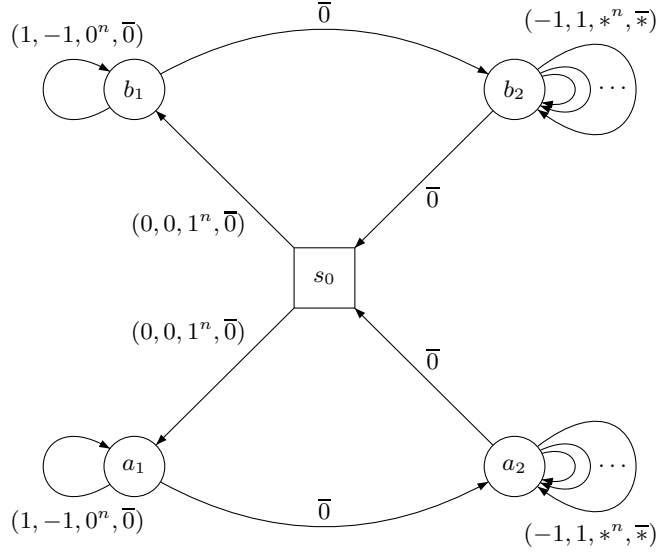

\begin{center} 
  \begin{gpicture}[name=loop](60,80)
	\node[Nmr=0](P2)(30,30){$s_0$}

	\node(Alph1)(5,5){$a_1$}
	\node(Alph2)(55,5){$a_2$}

	\node(Bet1)(5,55){$b_1$}
	\node(Bet2)(55,55){$b_2$}

	\drawedge[ELside=r,curvedepth=0](P2,Alph1){$(0,0,1^n,\VEC{0})$}
	\drawedge[curvedepth=-8](Alph1,Alph2){$\VEC{0}$}
	\drawedge[ELside=r,curvedepth=0](Alph2,P2){$\VEC{0}$}

	\drawedge[curvedepth=0](P2,Bet1){$(0,0,1^n,\VEC{0})$}
	\drawedge[curvedepth=8](Bet1,Bet2){$\VEC{0}$}
	\drawedge[curvedepth=0](Bet2,P2){$\VEC{0}$}

	\drawloop[ELpos=20,loopangle=180](Alph1){$(1,-1,0^n,\VEC{0})$}
	\drawloop[ELpos=80,loopangle=180](Bet1){$(1,-1,0^n,\VEC{0})$}

   \drawloop[loopdiam=4,loopangle=0](Alph2){}
   \drawloop[loopdiam=6,loopangle=0](Alph2){}
   \drawloop[ELpos=80,loopdiam=12,loopangle=0](Alph2){$(-1,1,*^n,\VEC{*})$}
	\put(66,5){$\dots$}

   \drawloop[loopdiam=4,loopangle=0](Bet2){}
   \drawloop[loopdiam=6,loopangle=0](Bet2){}
   \drawloop[ELpos=20,loopdiam=12,loopangle=0](Bet2){$(-1,1,*^n,\VEC{*})$}  
	\put(66,55){$\dots$}
  \end{gpicture}
\end{center}
 \caption{The graph that is formed by the reduction. $s_0$ is player-2 state and $a_1,a_2,b_1,b_2$ are player-1 states.
$\VEC{0}$ denotes a vector of zeros; $1^n$ denotes a vector of $n$ ones, $0^n$ denotes a vector of $n$ zeros, and the weights of $\VEC{*}$ and $*^n$ are given in the description of the reduction}\label{fig:Reduction}
\end{figure}
The weight vector $\VEC{w}$ of the $i$-th self-loop edge of state $a_2$ is determined according to the next rules:
\begin{enumerate}
\item The first two dimensions of $\VEC{w}$ are $-1$ and $+1$ (respectively).
Intuitively, this assures that player 1 will not stay forever in state $a_1$ or in state $a_2$.
\item The weight of dimension $2+i$ is $-1$ and for $j\in\{1,\dots,n\}-\{i\}$ the weight of dimensions $j$ is $0$.
Intuitively, this assures that player 1 will visit edge $i$ at least once.
\item If the $j$-th type-1 equation is $\sum_{m\in I} \alpha_m q_m \leq 0$, then if $i\in I$, then the weight in dimension $2 + n + j$ is $-\alpha_m$.
Otherwise, we assign zero for this dimension.
Intuitively, this enforce player 1 to visits edge $i$ for $q_i$ times in such way that $\sum_{m\in I} \alpha_m q_m \leq 0$.
\item If the $j$-th type-3 equation is $q_m p_r = q_k p_\ell$, then the weights of the four dimensions
$2+n+t_1+4j,2+n+t_1+ 4j+1,2+n+t_1+ 4j+2,2+n+t_1+ 4j+3$ are:
\begin{itemize}
\item If $i=m$, then the weights are $(-1,0,1,0)$
\item If $i=k$, then the weights are $(0,1,0,-1)$
\item Otherwise, the weights are $(0,0,0,0)$
\end{itemize}
\end{enumerate}
The weight vector $\VEC{w}$ of the $i$-th self-loop edge of state $b_2$ is determined according to the next rules:
\begin{enumerate}
\item The first $2+n$ dimensions are determined by the same rules that we presented to the self-loop edges of state $a_2$.
\item If the $j$-th type-2 equation is $\sum_{m\in I} \alpha_m p_m \leq 0$, then if $i\in I$, then the weight in dimension $2 + n + j$ is $-\alpha_m$.
Otherwise, we assign zero for this dimension.
Intuitively, this enforce player 1 to visits edge $i$ for $p_i$ times in such way that $\sum_{m\in I} \alpha_m p_m \leq 0$.
\item If the $j$-th type-3 equation is $q_m p_r = q_k p_\ell$, then the weights of the four dimensions
$2+n+t_1+ 4j, 2+n+t_1+ 4j+1,2+n+t_1+ 4j+2,2+n+t_1+ 4j+3$ are:
\begin{itemize}
\item If $i=m$, then the weights are $(1,0,-1,0)$
\item If $i=k$, then the weights are $(0,-1,0,1)$
\item Otherwise, the weights are $(0,0,0,0)$
\end{itemize}
\end{enumerate}
In the rest of this subsection, we will prove that player 1 has a finite-memory strategy that assures non-positive value for the expression $E$ if and only if the given set of equations has a solution that satisfies Problem~\ref{prob:H10QNormalForm4} limitations.

In the next lemmas we prove key properties of the game.
The first lemma characterized the one-player game solution for the expression $E$.
\begin{lem}\label{lem:OnePlayerSolForReduction}
Let $G$ be an arbitrary strongly connected $k$-dimensional weighted one-player game graph, and let $f$ be the one-player solution for the expression $E$.
Then $f(G) > 0$ if and only if
\begin{itemize}
\item $G$ has a simple cycle with positive average weight in a dimension $i\in \{1,\dots,2+n+t_1\}$; or
\item $G$ has two simple cycles $C_1$ and $C_2$, and there exist an index $i\in\{1,\dots,2t_2\}$ and two positive rationals $m,n$ for which
\begin{quote}
$m\Avg(C_1)+n\Avg(C_2)$ is positive is dimension $2+n+t_1+2i$ and in dimension $2+n+t_1+2i+1$.
\end{quote}
\end{itemize}
\end{lem}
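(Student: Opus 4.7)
The plan is to reduce $f(G) > 0$ to the feasibility of the max-free constraints from Definition~\ref{def:maxFreeCon} for each sub-expression of $E$. Since $E = \MAX(\LimInfAvg_1, \ldots, \LimInfAvg_{2+n+t_1}, E_1, \ldots, E_{2t_2})$ and the one-player solution of a $\MAX$ equals the maximum of the one-player solutions of its sub-expressions, $f(G) > 0$ holds iff some sub-expression has positive supremum value on $G$. I would then treat the two kinds of sub-expressions separately.

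For an atomic sub-expression $\LimInfAvg_j$ with $j \in \{1, \ldots, 2+n+t_1\}$, the strong connectivity of $G$ implies that the one-player supremum equals $\max_{c \in C(G)} \Avg_j(c)$: the upper bound is the standard bound on $\LimInfAvg$ along any infinite path in terms of the cycles it visits infinitely often, and it is attained in the limit by a strategy that routes to the optimal simple cycle and loops on it forever. Hence positivity of this sub-expression is equivalent to the existence of a simple cycle with positive average in dimension $j$, which is the first bullet of the lemma.

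For $E_i = \MIN(\LimInfAvg_a, \LimInfAvg_b)$ with $a = 2+n+t_1+2i$ and $b = 2+n+t_1+2i+1$, I would instantiate Definition~\ref{def:maxFreeCon} with both atomic expressions being lim-inf. The constraints reduce to the existence of $(X_c)_{c \in C(G)}$ with $X_c \geq 0$, $\sum_c X_c = 1$, $\sum_c X_c \Avg_a(c) \geq \nu$, and $\sum_c X_c \Avg_b(c) \geq \nu$. Feasibility for some positive threshold is therefore equivalent to the conical hull of $\{(\Avg_a(c), \Avg_b(c)) : c \in C(G)\} \subseteq \R^2$ intersecting the open first quadrant $\{(x,y) : x>0, y>0\}$.

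I would then invoke Carathéodory's theorem for conical hulls in $\R^2$: every point of such a hull is a non-negative combination of at most two generators. Thus the intersection condition becomes: there exist simple cycles $C_1, C_2 \in C(G)$ (possibly equal) and non-negative rationals $m, n$, not both zero, with $m\Avg(C_1) + n\Avg(C_2)$ strictly positive in coordinates $a$ and $b$. If exactly one of $m, n$ is zero, the corresponding single cycle $C$ satisfies $\Avg_a(C), \Avg_b(C) > 0$, and I can replace it by $C_1 = C_2 = C$ with arbitrary positive $m = n$ to obtain a representation with both coefficients strictly positive, matching the second bullet. The main obstacle I anticipate is verifying that the matrix $M_{E_i}$ from Definition~\ref{def:maxFreeCon} collapses, in this lim-inf/lim-inf case, precisely to the two inequalities $r_a \geq \nu$ and $r_b \geq \nu$; once this routine unwinding is done, the rest of the argument follows immediately from Carathéodory.
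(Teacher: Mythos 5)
Your proposal is correct and follows the same route as the paper, whose own proof is just the one-line remark that the claim "follows directly by the definitions of the max-free constraints": you unpack exactly that, splitting $E$ into its atomic and $\MIN$-type max-free components and reading off the feasibility of a positive threshold for each. The only substantive detail you add beyond the paper is the Carath\'eodory step for cones in $\R^2$ that reduces an arbitrary convex combination of simple cycles to a positive combination of just two cycles $C_1,C_2$ (absorbing the degenerate one-cycle case by taking $C_1=C_2$), which is indeed needed to match the exact form of the second bullet and is left implicit in the paper.
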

\begin{proof}
The proof follows directly by the definitions of the max-free constraints (Definition~\ref{def:maxFreeCon}). \pfbox
\end{proof}
\begin{lem}\label{lem:WinsInReductionIffWinsInSat}
In the mean-payoff expression game over game graph $G$ (that is constructed by the reduction) and threshold $0$, player 1 wins from vertex $s_0$ if and only if he has a finite-memory strategy $\sigma$ such that $f(\CONV(G^\sigma)) \leq 0$ (where $f$ is the one-player solution for the expression $E$).
\end{lem}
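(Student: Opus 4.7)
\textbf{Proof plan for Lemma~\ref{lem:WinsInReductionIffWinsInSat}.}

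The direction $(\Leftarrow)$ is immediate from the monotonicity of $f$ (Property~2). If $\sigma$ is a finite-memory strategy with $f(\CONV(G^\sigma)) \leq 0$, then for every SCC $S$ of $G^\sigma$ we have $\CONV(S) \subseteq \CONV(G^\sigma)$, so by Property~2 each SCC satisfies $f(\CONV(S)) \leq 0$. Since $E$ is prefix-independent, every play consistent with $\sigma$ eventually remains in some SCC, so its value is at most $0$; hence $\sigma$ is winning from $s_0$.

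For the direction $(\Rightarrow)$, I will use the very restricted structure of the reduction graph: the \emph{only} player-2 vertex is $s_0$, with exactly two successors $a_1$ and $b_1$. So player~2 has only two memoryless strategies, namely $\tau_1:s_0\to a_1$ and $\tau_2:s_0\to b_1$, and every finite-memory player-2 strategy is an alternation between these two choices with some ultimately periodic pattern. Given a winning finite-memory strategy $\sigma_0$ from $s_0$, I will construct a new finite-memory strategy $\sigma$ that ignores player~2's history across visits to $s_0$: in the $a$-branch it plays exactly as $\sigma_0$ plays when player~2 uses $\tau_1$, and in the $b$-branch it plays as $\sigma_0$ plays against $\tau_2$. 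This $\sigma$ uses at most the product of the two restricted memories, it remains winning (since its behaviour on either branch coincides with $\sigma_0$'s winning behaviour on that branch), and crucially $G^\sigma$ is strongly connected, because every state can reach $s_0$ and from $s_0$ either branch may be entered.

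Having made $G^\sigma$ strongly connected, the simple cycles of $G^\sigma$ split into those localized in the $a$-branch and those localized in the $b$-branch, so $\CONV(G^\sigma)$ is the convex hull of the averages of these two families. To show $f(\CONV(G^\sigma)) \leq 0$ I will show, for every rational $\lambda\in(0,1)$, that player~2 can play a finite-memory strategy $\tau_\lambda$ whose induced long-run average under $\sigma$ realizes any preselected convex combination $\lambda x_a+(1-\lambda)x_b$, where $x_a,x_b$ are averages of Eulerian cyclic paths in the $a$- and $b$-branches of $G^\sigma$ respectively. Because $\sigma_0$ (and hence $\sigma$, by the synchronization above) wins against every such $\tau_\lambda$, the value $E$ evaluated on these achievable mean vectors is bounded by $0$; invoking the characterization of $f$ via the max-free constraints of Definition~\ref{def:maxFreeCon} together with the continuity of the one-player solution, this extends to $f(\CONV(G^\sigma))\leq 0$.

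The main obstacle is the last step: going from the pointwise fact ``$E$ is non-positive on every long-run average that player~2 can realize against $\sigma$'' to the structural fact ``$f(\CONV(G^\sigma)) \leq 0$'', since $f$ is not simply the pointwise maximum of $E$ over $\CONV(G^\sigma)$ when $\LimSupAvg$ atomic expressions are present (here they are not, which simplifies things, but the max-free constraints still introduce auxiliary scalars $X_c^i$). I plan to handle this by exploiting the fact that the expression in the reduction is $\SUM$-free and $\LimInfAvg$-only, so the one-player value reduces to the existence of a convex combination of cycle averages that makes some max-free disjunct $E_i$ positive, and the witnesses for such a disjunct can be directly translated into a player-2 strategy $\tau_\lambda$ that contradicts the winning property of $\sigma$.
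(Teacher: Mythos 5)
Your backward direction is fine and matches the paper's (which treats it as immediate). The forward direction, however, has a genuine gap, and it sits exactly where you flag "the main obstacle" --- but it appears earlier than you think. Your combined strategy $\sigma$ is \emph{not} shown to be winning: against a player-2 strategy that alternates between the $a$- and $b$-branches, the play produced by $\sigma$ is an interleaving of the two pure plays and is \emph{not} a play consistent with $\sigma_0$, so the winning property of $\sigma_0$ says nothing about it. Yet your final step derives $f(\CONV(G^\sigma))\leq 0$ from the assertion that $\sigma$ wins against every mixing strategy $\tau_\lambda$; establishing that assertion already requires knowing that $E$ is non-positive on convex combinations of $a$-cycle averages with $b$-cycle averages, which is essentially the conclusion you are after. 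The argument is circular. A second, smaller hole: strong connectivity of $G^\sigma$ is asserted ``because every state can reach $s_0$'', but this needs the weight structure of the reduction (a play that stays forever in $a_1$, $b_1$, $a_2$ or $b_2$ makes dimension $1$ or $2$ positive, and skipping a self-loop of $a_2$ makes some dimension among $3,\dots,n+2$ positive), which you never invoke; and even granting it, $G^\sigma$ retains a transient prefix and is a lasso rather than strongly connected.

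The paper's proof avoids all of this. It uses the weights to show that \emph{every} winning play must return to $s_0$ infinitely often, hence every terminal SCC $S$ of $G^{\sigma_0}$ contains a vertex $(s_0,m)$. Define $\sigma$ as $\sigma_0$ with its initial memory state changed to $m$: then $G^\sigma=S$ is a single strongly connected component, and $f(\CONV(S))\leq 0$ follows directly because all choices inside $S$ belong to player 2 and $\sigma_0$ is winning --- no product construction, no realization of convex combinations by player 2, and no lifting from pointwise values of $E$ to the solution function $f$. If you want to salvage your route, the missing ingredient is precisely this ``restart in a terminal SCC'' step; I recommend adopting it rather than patching the combination argument.
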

\begin{proof}
By the construction of $G$ it follows that if player 1 strategy is to loop for ever in state $a_1$ or $b_1$, then the lim-inf of the average weight in dimension 1 will be 1 and $E$ will get a positive value.
Similarly, if player 1 strategy is to loop forever in state $b_2$ or $a_2$, then the average weight in dimension 2 is positive, and so does the value of $E$.
Hence, every player-1 winning strategy will visit the initial state $s_0$ infinitely often.
Therefore, if $\sigma '$ is a player-1 winning strategy, then every SCC in $G^{\sigma '}$ contains a vertex $(s_0, m)$ (for some memory state $m$).
Let $S$ be a terminal SCC in $G^{\sigma '}$ and let $(s_0,m)$ be a vertex in $S$.
We construct the witness strategy $\sigma$ by changing the initial memory state o $\sigma '$ to $m$.
If $\sigma '$ is a winning strategy, then by definition $f(S) \leq 0$ and since $G^\sigma = S$ we get that $f(\CONV(G))\leq 0$.

Hence, if player 1 wins in the game, then such $\sigma$ exists, and the proof for the converse direction is trivial (since such a strategy $\sigma$ is a winning strategy).
\pfbox
\end{proof}
In the game graph $G$ player 2 has only two possible memoryless strategies: the first strategy is to follow the edge $(s_0,a_1)$, and we denote this strategy by $\tau_1$, and the second strategy is to follow $(s_0,b_1)$, and we denote it by $\tau_2$.
\begin{lem}\label{lem:NewCondition}
There exists a player-1 strategy for which $f(G^\sigma) \leq 0$ if and only if
there exist cyclic paths $\pi_1$ and $\pi_2$ such that $\pi_i$ is a cyclic path in $G^{\tau_i}$ that visits all the edges of $G^{\tau_i}$ and $f(\CONV(\pi_1,\pi_2)) \leq 0$.
\end{lem}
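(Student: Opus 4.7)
The plan is to prove both directions by reducing to Lemma~\ref{lem:RealizableSetComputable} together with the monotonicity of $f$ (Property~2). The nontrivial content lies in extracting the ``visits all edges'' clause in the forward direction from the weight structure of the reduction.

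For the direction $(\Leftarrow)$, suppose $\pi_1,\pi_2$ are as in the statement. Each $G^{\tau_i}$ is strongly connected with in-degree equal to out-degree at every vertex, hence admits an Euler circuit $\hat\pi_i$ whose constituent simple cycles are the triangle $T_i$, the self-loop $S_i$ at $a_1$ (resp.\ $b_1$), and the $n$ self-loops $L_{i,1},\dots,L_{i,n}$ at $a_2$ (resp.\ $b_2$). Since $\pi_i$ visits every edge, $\Avg(\pi_i) = A_{\hat\pi_i}\cdot x_i$ for some $x_i\in\RatSimplex$ whose coordinates are all strictly positive. Lemma~\ref{lem:RealizableSetComputable} applied to $\mathcal{V}=\{\Avg(\pi_1),\Avg(\pi_2)\}$ then yields a finite-memory strategy $\sigma$ with $\CONV(G^\sigma)\subseteq\CONV(\mathcal{V})$, and Property~2 gives $f(G^\sigma)\le f(\CONV(\mathcal{V}))\le 0$.

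For the direction $(\Rightarrow)$, suppose $\sigma$ is a finite-memory strategy with $f(G^\sigma)\le 0$. Applying Lemma~\ref{lem:RealizableSetComputable} to the realizable set $\Avg(G^\sigma)$, I obtain vectors $y_i\in\bigcup_{\pi\in\Pi_e^{\tau_i}}\{A_\pi x\mid x\in\RatSimplex\}$ with $\CONV(y_1,y_2)\subseteq\CONV(G^\sigma)$, so Property~2 gives $f(\CONV(y_1,y_2))\le 0$. Each $y_i$ is thus the average of some cyclic path $\pi'_i$ in $G^{\tau_i}$, and can be written as $\alpha_i\Avg(T_i) + \beta_i\Avg(S_i) + \sum_j \gamma_{i,j}\Avg(L_{i,j})$. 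It will suffice to show that all coefficients are strictly positive: then $\pi'_i$ visits every edge of $G^{\tau_i}$ and I take $\pi_i=\pi'_i$. The first two coordinates of every edge weight in the reduction satisfy $w_2=-w_1$, and $\LimInfAvg_1,\LimInfAvg_2$ both appear as $\MAX$-atoms of $E$; since $\pi'_i$ is ultimately periodic, $\LimInfAvg$ equals the cycle average, so $f(\CONV(y_1,y_2))\le 0$ forces the dim-$1$ and dim-$2$ averages of $y_i$ to both vanish, yielding $\beta_i=\sum_j\gamma_{i,j}$. The dim-$(2+j)$ average of $y_i$ equals $\alpha_i/3 - \gamma_{i,j}$, and the atom $\LimInfAvg_{2+j}$ in $E$ gives $\gamma_{i,j}\ge\alpha_i/3$. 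If $\alpha_i=0$, then the cyclic path uses no edge of the triangle, so it cannot transit between the vertices carrying self-loops; the only remaining possibility is a walk using self-loops at $a_2$ (resp.\ $b_2$) alone, which has dim-$2$ average $+1$, contradicting $\LimInfAvg_2\le 0$. Hence $\alpha_i>0$, so $\gamma_{i,j}>0$ for every $j$ and $\beta_i=\sum_j\gamma_{i,j}>0$.

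The main obstacle I anticipate is the case analysis for $\alpha_i$ in the forward direction: one must combine the graph-theoretic observation that any transit between $a_1$ and $a_2$ requires the triangle with a weight-based contradiction that rules out ``degenerate'' cyclic paths confined to a single vertex. The remaining coordinates, which encode the $\SUM$-inequality and multiplicative constraints of Problem~\ref{prob:H10QNormalForm4}, play no role in establishing the covering property; they will be relevant only for the subsequent algebraic translation of $f(\CONV(\pi_1,\pi_2))\le 0$ performed later in the reduction.
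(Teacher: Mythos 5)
Your proof is correct and follows essentially the same route as the paper's: both directions reduce to the two player-2 memoryless strategies via Lemma~\ref{lem:ConvCyclesMemoryless}/Lemma~\ref{lem:RealizableSetComputable} plus monotonicity of $f$, and the forward direction extracts full edge coverage from the weight gadgets in dimensions $1$, $2$ and $3,\dots,2+n$. The paper organizes that last step as a four-way case analysis on which edges the periodic part omits, while you organize it as strict positivity of the cycle-decomposition coefficients $\alpha_i,\beta_i,\gamma_{i,j}$ --- the same argument in different clothing (your $\alpha_i=0$ case should also explicitly dispose of the walk confined to the self-loop of $a_1$, which is excluded by $\beta_i=\sum_j\gamma_{i,j}$ or by the dimension-$1$ average being $+1$).
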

\begin{proof}
By Lemma~\ref{lem:ConvCyclesMemoryless} such $\sigma$ exists if and only if
there exist two ultimately periodic paths $\rho_1$ and $\rho_2$ such that
$\rho_i$ is an infinite path in the graph $G^{\tau_i}$ and $f(\CONV(\Avg(\rho_1),\Avg(\rho_2))) \leq 0$.
Hence, the proof for the direction from right to left follows.
In order to prove the converse direction we assume that such $\rho_1$ and $\rho_2$ exist and show how to construct $\pi_1$ and $\pi_2$.
Let $\rho_1 = \pi_0 (\pi_1)^\omega$ (i.e., $\pi_1$ is the periodic finite path in $\rho_1$).
We claim that if $\pi_1$ does not contain all the edges of $G^{\tau_1}$, then $f(\{\Avg(\pi_1)\}) > 0$.
The proof of the claim is by considering the following distinct cases:
\begin{itemize}
\item Case 1: if $\pi_1$ contains only the cycles $s_0\to a_1 \to a_2\to s_0$, then the value of $\Avg(\pi)$ is positive in the third dimension.
\item Case 2: if $\pi_1$ contains only the self loop of $a_1$, then the value of the first dimension of $\Avg(\pi)$ is positive
\item Case 3: if $\pi_1$ does not contain the self loop of $a_1$, and contains some of the self loops of $a_2$, then the second dimension of $\Avg(\pi)$ is positive.
\item Case 4: if $\pi_1$ contains the cycle $s_0\to a_1 \to a_2\to s_0$, the self loop of $a_1$ and not the $i$-th self loop of $a_2$, then dimension $2+i$ of $\Avg(\pi)$ is positive.
\end{itemize}
Hence, if $\pi_1$ does not contain all the edges of $G^{\tau_1}$, then we get that $f(\Avg(\rho_1)) > 0$ (since $\Avg(\rho_1) = \Avg(\pi_1)$), and since $f$ is monotone, we get that $f(\CONV(\Avg(\rho_1), \Avg(\rho_2)) > f(\Avg(\rho_1)) > 0$, which contradict the definition of $\rho_1$.
We construct the witness path $\pi_2$ in a similar way (i.e., by defining $\rho_2 = \pi_0 ' (\pi_2)^\omega$, and the proof that $\pi_2$ contains all the edges of $G^{\tau_2}$ is similar.
Since $\Avg(\pi_i) = \Avg(\rho_i)$, we get that $f(\CONV(\Avg(\pi_1),\Avg(\pi_2)) \leq 0$ and the proof is complete.
\pfbox
\end{proof}
We now give two additional definitions and then prove the correctness of the reduction.
Let $C_1,\dots,C_n$ be the simple cycles of $G^{\tau_1}$. We denote $\RatSimplex(G^{\tau_1}) = \{\{v\in\Q^k\mid \exists (x_1,\dots,x_n)\in\RatSimplex(n)\mbox{ s.t } v = \sum_{i=1}^n x_i \Avg(C_i)\}$,
and we similarly define $\RatSimplex(G^{\tau_2})$.
We say that two vectors $v_1$ and $v_2$ are \emph{satisfactory} if $f(\CONV(v_1,v_2))\leq 0$.
We are now ready to prove the correctness of the reduction, and by Lemma~\ref{lem:WinsInReductionIffWinsInSat} and Lemma~\ref{lem:NewCondition} it is enough to prove that there exists $v_i\in\RatSimplex(G^{\tau_i})$ (for $i=1,2$) such that $v_1,v_2$ are satisfactory vectors if and only if the given set of equations has a rational solution.

We first prove the direction from right to left.
Suppose that the given set of equations has a rational solution $P,Q$ that satisfies $q_i,p_i >0$.
We construct the vector $v_1\in\RatSimplex(G^{\tau_1})$ by taking $\frac{1}{1+2\sum_{i=1}^n q_i}$ fraction of the average weight of the cycle $s_0\to a_1\to a_2\to s_0$, $\frac{2\sum_{i=1}^n q_i}{1+2\sum_{i=1}^n q_i}$ fraction of the average weight of the self loop of $a_1$ and
$\frac{q_i}{1+2\sum_{i=1}^n q_i}$ fraction of the average weight of the $i$-th self loop of $a_2$.
Similarly, we construct the vector $v_2\in\RatSimplex(G^{\tau_2})$ by taking $\frac{1}{1+2\sum_{i=1}^n p_i}$ fraction of the average weight of the cycle $s_0\to b_1\to b_2\to s_0$, $\frac{2\sum_{i=1}^n p_i}{1+2\sum_{i=1}^n p_i}$ fraction of the average weight of the self loop of $b_1$ and
$\frac{p_i}{1+2\sum_{i=1}^n p_i}$ fraction of the average weight of the $i$-th self loop of $b_2$.
By the construction of $G$, and since $P$ and $Q$ are solutions for the equations, it is straight forward to verify that the first $2+n+t_1$ dimensions of $v_1$ and $v_2$ are non-positive.
In addition, by Lemma~\ref{lem:LemmaForReduction}, and since $P$ and $Q$ satisfies all the equations of the form $q_i p_j = q_k p_\ell$, we get that for every positive $m,n\in\Q$ we have that $m v_1 + n v_2$ are non-positive in dimension $2+n+t_1+2i$ or in dimension $2+n+t_1+2i+1$ for every $i=1,\dots,2t_4$.
Hence, by Lemma~\ref{lem:OnePlayerSolForReduction}, the vectors $v_1,v_2$ are satisfactory.

Conversely, suppose that there exist $v_i\in\RatSimplex(G^{\tau_i})$ (for $i=1,2$) such that $v_1,v_2$ are satisfactory vectors.
We denote by $w_{s_0,a}$ the average weight of the cycle $s_0\to a_1 \to a_2 \to s_0$, by $w_{a_1}$ the average weight of the self loop of $a_1$, and by $w_{a_2}^i$ the average weight of the $i$-th self loop of $a_2$.
By definition, there exists $n+2$ positive rationals $x,y,q_1,\dots,q_n$ for which $v_1 = xw_{s_0,a} + yw_{a_1} + \sum_{i=1}^n q_i w_{a_2}^i$.
Similarly, we denote by $w_{s_0,b}$ the average weight of the cycle $s_0\to b_1 \to b_2 \to s_0$, by $w_{b_1}$ the average weight of the self loop of $b_1$, and by $w_{b_2}^i$ the average weight of the $i$-th self loop of $b_2$, and
by definition, there exists $n+2$ positive rationals $x,y,p_1,\dots,p_n$ for which $v_1 = xw_{s_0,b} + yw_{b_1} + \sum_{i=1}^n p_i w_{b_2}^i$.
We claim the $Q=\{q_1,\dots,q_n\},P=\{p_1,\dots,p_n\}$ are a solution to the given set of equations.
By Lemma~\ref{lem:OnePlayerSolForReduction} and by the construction of the graph, it immediately follows that $Q$ and $P$ satisfy all the type-1 and type-2 constraints.
In addition, by Lemma~\ref{lem:LemmaForReduction} (and by Lemma~\ref{lem:OnePlayerSolForReduction}) we get that all the type-3 equations are also satisfied.
Hence, we get that if there exist $v_i\in\RatSimplex(G^{\tau_i})$ (for $i=1,2$) such that $v_1,v_2$ are satisfactory vectors, then the given set of constraints have a solution.

To conclude, we get that the boolean analysis problem for mean-payoff expressions is harder than $\HQ$, and the proof of Lemma~\ref{lem:SumFreeInfOnlyHQ} follows. \pfbox
\section{Proof of Theorem~\ref{thm:BothFinite}}
When both players are restricted to finite-memory strategies the outcome of the game is an ultimately periodic path $\pi = \pi_1 (\pi_2)^\omega$.
Thus, for every dimension $i$ we have $\LimInfAvg_i(\pi) = \LimSupAvg_i(\pi)$.
Hence, w.l.o.g we may assume that the game objective is a $\LimInfAvg$-only expression.
In this section, we will show a reduction from games in which both players are restricted to finite-memory strategies to games in which only player 1 is restricted to finite-memory strategies.
The reduction is based on the next lemma.
\begin{lem}\label{lem:FromInfToFin}
Let $E$ be a $\LimInfAvg$-only expression and let $G$ be a multidimensional weighted graph, and the goal of player 1 is to assure $E\leq \nu$.
Then a player-1 finite-memory strategy is winning if and only if it wins against every player-2 finite-memory strategy.
\end{lem}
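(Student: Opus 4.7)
The ``if'' direction is immediate since every finite-memory strategy is in $\StrategiesTwo$. For the ``only if'' direction I argue by contrapositive. Fix a player-1 finite-memory $\sigma$ and suppose some $\tau \in \StrategiesTwo$ satisfies $E(\pi_{\sigma,\tau}) > \nu$; I must produce $\tau' \in \FM_2$ with $E(\pi_{\sigma,\tau'}) > \nu$. Because $\sigma$ has finite memory, $G^\sigma$ is a one-player game in which player~2 alone has choices, and $\pi_{\sigma,\tau}$ is an arbitrary infinite path in $G^\sigma$. Since $E$ is prefix-independent, after a finite shift we may assume $\pi_{\sigma,\tau}$ lies entirely in a single strongly connected component $S$ of $G^\sigma$.

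The crux of the proof is the bound
\[E(\pi) \;\leq\; \sup_{v \in \CONV(S)} E^*(v) \quad\text{for every infinite path $\pi$ in $S$,}\]
where $E^* \colon \R^k \to \R$ is obtained from $E$ by replacing each atomic $\LimInfAvg_i$ by the $i$-th coordinate projection. Let $a(n)$ be the sequence of partial-average weight vectors along $\pi$ and let $L(\pi)$ denote the (nonempty, compact) set of its subsequential limits in $\R^k$. A standard walk-decomposition shows that any long path in the strongly connected $S$ can be written as a disjoint union of simple cycles plus an $O(1)$ residual, so extracting a subsequence along which the normalized cycle frequencies converge yields $L(\pi) \subseteq \CONV(S)$. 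By definition $\LimInfAvg_i(\pi) = \min_{w \in L(\pi)} w_i$, hence for \emph{any} single $w \in L(\pi)$ we have $w_i \geq \LimInfAvg_i(\pi)$ coordinatewise. Because $E$ is built from $\LimInfAvg_i$-atoms using $\MIN$, $\MAX$, and $\SUM$---each componentwise monotone nondecreasing---the function $E^*$ is also componentwise monotone nondecreasing, so picking any $w \in L(\pi)$,
\[E(\pi) \;=\; E^*\bigl(\LimInfAvg_1(\pi),\dots,\LimInfAvg_k(\pi)\bigr) \;\leq\; E^*(w) \;\leq\; \sup_{v \in \CONV(S)} E^*(v).\]

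To conclude, the bound applied to $\pi_{\sigma,\tau}$ and the hypothesis $E(\pi_{\sigma,\tau}) > \nu$ give $\sup_{v \in \CONV(S)} E^*(v) > \nu$. Since $E^*$ is continuous and $\CONV(S) \cap \Q^k$ is dense in $\CONV(S)$, there is a rational $v^* \in \CONV(S) \cap \Q^k$ with $E^*(v^*) > \nu$. Write $v^* = \sum_\ell \lambda_\ell \Avg(c_\ell)$ as a positive rational convex combination of simple cycles of $S$; strong connectedness of $S$ lets us splice them into a single closed walk $\pi^*$ in $S$ where each $c_\ell$ appears with multiplicity proportional to $\lambda_\ell$, so $\Avg(\pi^*) = v^*$. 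The finite-memory player-2 strategy $\tau'$ that first drives the play from $v_0$ into $S$ and then iterates $\pi^*$ forever produces an ultimately periodic play in which $\LimInfAvg_i(\pi_{\sigma,\tau'}) = v^*_i$ for every $i$, so $E(\pi_{\sigma,\tau'}) = E^*(v^*) > \nu$, contradicting the hypothesis.

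The only delicate point I anticipate is the inclusion $L(\pi) \subseteq \CONV(S)$ for arbitrary (non-periodic) infinite paths, which requires a careful walk-decomposition plus a simultaneous-convergence extraction; the rest of the argument is essentially the monotonicity of $\MIN/\MAX/\SUM$ (this is precisely why the lemma fails for $\LimSupAvg$-atoms, for which the coordinatewise inequality flips direction) together with the standard cycle-frequency realization of the witness finite-memory strategy.
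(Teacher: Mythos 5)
Your proof is correct in its essentials but takes a genuinely different route from the paper's. The paper argues directly on the offending play $\pi$: it fixes a state $s$ visited infinitely often, cuts a sufficiently long prefix of the suffix $\pi_s$ at a return to $s$ so that the resulting cyclic path $\pi_\epsilon$ satisfies $\Avg_i(\pi_\epsilon)\geq\LimInfAvg_i(\pi)-\epsilon$ in every dimension simultaneously, and then uses coordinatewise monotonicity together with Lipschitz continuity of a $\MIN/\MAX/\SUM$-combination in its atoms to get $E(\pi_0(\pi_\epsilon)^\omega)\geq E(\pi)-C\epsilon>\nu$ for $\epsilon$ small, where $C$ depends only on the number of $\SUM$ operators. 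You instead route the argument through the convex hull of simple-cycle averages: you show that every subsequential limit $w$ of the partial-average vectors lies in $\CONV(S)$ and dominates $(\LimInfAvg_1(\pi),\dots,\LimInfAvg_k(\pi))$ coordinatewise, deduce $E(\pi)\leq\sup_{v\in\CONV(S)}E^*(v)$ by the same monotonicity, and then realize a rational witness by a periodic path. Both arguments rest on the same two pillars (monotonicity of $\LimInfAvg$-only expressions in their atoms, and approximation of the liminf behaviour by a single cyclic path); the paper's version is more elementary, since it never needs the inclusion $L(\pi)\subseteq\CONV(S)$ nor a cycle-realization step, while yours has the merit of tying the lemma to the $\CONV$ machinery used throughout the rest of the paper. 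One small repair is needed at your final step: splicing the simple cycles $c_\ell$ into a single closed walk in $S$ generally requires connecting paths between cycles that do not share a vertex, so you cannot claim $\Avg(\pi^*)=v^*$ exactly; you only get closed walks $\pi^*_N$ with $\Avg(\pi^*_N)\to v^*$ as the cycle multiplicities grow. Since $E^*$ is continuous and $\{v\mid E^*(v)>\nu\}$ is open, taking $N$ large enough closes this gap, so the slip is harmless but should be stated.
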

\begin{proof}
The proof for the direction from left to right is trivial.
To prove the converse direction we fix a player-1 finite-memory strategy $\sigma$ and we show that if player 2 a strategy that wins against $\sigma$, then he also has a finite-memory winning strategy.
We note that when $\sigma$ is fixed, a player-2 strategy is an infinite path in $G^\sigma$ and a player-2 finite-memory strategy is an ultimately periodic path in $G^\sigma$.
Hence, there exists an infinite path $\pi$ in $G^\sigma$ for which $E$ assigns a value greater than $\nu$.
We claim that for every $\epsilon> 0$ there is an ultimately periodic path $\rho$ in $G^\sigma$ such that in every dimension $\LimInfAvg_i(\rho_\epsilon) \geq \LimInfAvg_i(\pi) - \epsilon$.
Indeed, let $s$ be a state that is visited infinitely often by $\pi$, and let $\pi_s$ be a suffix of $\pi$ that begins in state $s$, and we observe that $\LimInfAvg(\pi_s) = \LimInfAvg(\pi)$.
By the definition of $\LimInfAvg$ and by the finiteness of the graph it follows that for every $\epsilon > 0$ there exists a path $\pi_\epsilon$ that is a prefix of $\pi_s$, ends in state $s$, and $\LimInfAvg_i(\pi_\epsilon) \geq \LimInfAvg_i(\pi_s) - \epsilon$.
We denote by $\pi_0$ the shortest path from the initial state to $s$, and we get that the ultimately periodic path $\rho_\epsilon = \pi_0 (\pi_\epsilon)^\omega$ satisfies the assertion of the claim.
To complete the proof of the lemma, we denote the number of $\SUM$ operators in $E$ by $\#\SUM$ and we set $\epsilon = \frac{E(\pi) - \nu}{2\#\SUM}$.
It is easy to verify that the ultimately periodic path $\rho_\epsilon$ satisfies $E(\rho_\epsilon) \geq E(\pi) - \frac{E(\pi)-\nu}{2} = \frac{E(\pi) + \nu}{2} > \nu$, and the proof follows. \pfbox
\end{proof}
The proof of Theorem~\ref{thm:BothFinite} follows immediately from the fact that we only consider $\LimInfAvg$-only expressions and from  Lemma~\ref{lem:FromInfToFin} and Theorems~\ref{thm:MPExpOptimalComputable} and~\ref{thm:HQandMPExp}.
\end{document}